\documentclass[a4paper,UKenglish, cleveref, autoref, thm-restate]{lipics-v2021}
%This is a template for producing LIPIcs articles. 
%See lipics-v2021-authors-guidelines.pdf for further information.
%for A4 paper format use option "a4paper", for US-letter use option "letterpaper"
%for british hyphenation rules use option "UKenglish", for american hyphenation rules use option "USenglish"
%for section-numbered lemmas etc., use "numberwithinsect"
%for enabling cleveref support, use "cleveref"
%for enabling autoref support, use "autoref"
%for anonymousing the authors (e.g. for double-blind review), add "anonymous"
%for enabling thm-restate support, use "thm-restate"
%for enabling a two-column layout for the author/affilation part (only applicable for > 6 authors), use "authorcolumns"
%for producing a PDF according the PDF/A standard, add "pdfa"

%\graphicspath{{./graphics/}}%helpful if your graphic files are in another directory

\bibliographystyle{plainurl}% the mandatory bibstyle

\usepackage{complexity}
\usepackage{xspace}
\usepackage{graphics}
\usepackage[ruled]{algorithm}
\usepackage{graphicx}
\usepackage{xargs}
\usepackage{mathtools}
\usepackage{framed}
\hypersetup{colorlinks=true, citecolor=blue}

\title{Dynamic Planar Embedding is in DynFO} 

%\titlerunning{Dummy short title} % optional, please use if title is longer than one line

%\author{Anonymous Author(s)}{Anonymous Institute, Anonymity}{anon@anon.anon}{}{}
\author{Samir Datta}{Chennai Mathematical Institute \& UMI ReLaX, Chennai, India}{sdatta@cmi.ac.in}{https://orcid.org/0000-0003-2196-2308}{}

\author{Asif Khan}{Chennai Mathematical Institute, Chennai, India}{asifkhan@cmi.ac.in}{}{}

\author{Anish Mukherjee}{University of Warwick, Coventry, United Kingdom}{anish.mukherjee@warwick.ac.uk}{https://orcid.org/0000-0002-5857-9778}{}%TODO mandatory, please use full name; only 1 author per \author macro; first two parameters are mandatory, other parameters can be empty. Please provide at least the name of the affiliation and the country. The full address is optional

\authorrunning{S.\, Datta, A.\, Khan, and A.\, Mukherjee} % mandatory. First: Use abbreviated first/middle names. Second (only in severe cases): Use first author plus 'et al.'

\Copyright{Samir Datta, Asif Khan, and Anish Mukherjee} % mandatory, please use full first names. LIPIcs license is "CC-BY";  http://creativecommons.org/licenses/by/3.0/

\ccsdesc[500]{Theory of computation~Complexity theory and logic}
\ccsdesc[500]{Theory of computation~Finite Model Theory}
%TODO mandatory: Please choose ACM 2012 classifications from https://dl.acm.org/ccs/ccs_flat.cfm 

\keywords{Dynamic Complexity, Planar graphs, Planar embedding} %TODO mandatory; please add comma-separated list of keywords

\category{} %optional, e.g. invited paper

\relatedversion{} %optional, e.g. full version hosted on arXiv, HAL, or other respository/website
%\relatedversiondetails[linktext={opt. text shown instead of the URL}, cite=DBLP:books/mk/GrayR93]{Classification (e.g. Full Version, Extended Version, Previous Version}{URL to related version} %linktext and cite are optional

%\supplement{}%optional, e.g. related research data, source code, ... hosted on a repository like zenodo, figshare, GitHub, ...
%\supplementdetails[linktext={opt. text shown instead of the URL}, cite=DBLP:books/mk/GrayR93, subcategory={Description, Subcategory}, swhid={Software Heritage Identifier}]{General Classification (e.g. Software, Dataset, Model, ...)}{URL to related version} %linktext, cite, and subcategory are optional

%\funding{(Optional) general funding statement \dots}%optional, to capture a funding statement, which applies to all authors. Please enter author specific funding statements as fifth argument of the \author macro.

\acknowledgements{Thanks to Nils Vortmeier and Thomas Zeume for illuminating discussions. SD,AK: Partially funded by a grant from Infosys foundation AM: Research supported in part by the Centre for Discrete Mathematics and its Applications (DIMAP), by EPSRC award EP/V01305X/1.}%optional

\nolinenumbers %uncomment to disable line numbering

\hideLIPIcs  %uncomment to remove references to LIPIcs series (logo, DOI, ...), e.g. when preparing a pre-final version to be uploaded to arXiv or another public repository

%Editor-only macros:: begin (do not touch as author)%%%%%%%%%%%%%%%%%%%%%%%%%%%%%%%%%%
\EventEditors{John Q. Open and Joan R. Access}
\EventNoEds{2}
\EventLongTitle{42nd Conference on Very Important Topics (CVIT 2016)}
\EventShortTitle{CVIT 2016}
\EventAcronym{CVIT}
\EventYear{2016}
\EventDate{December 24--27, 2016}
\EventLocation{Little Whinging, United Kingdom}
\EventLogo{}
\SeriesVolume{42}
\ArticleNo{23}
%%%%%%%%%%%%%%%%%%%%%%%%%%%%%%%%%%%%%%%%%%%%%%%%%%%%%%

%\graphicspath{./figures/} 

%macros

%\newcommand{\Comp}{\mathtt{comp}}

%\newcommand{\between}{\mathtt{between}}

%\newcommand{\ep}[4][1=\sigma,2=\text{ }]{[{{#3}\xrightarrow[#2]{#1}{#4}}]}

\newcommand{\es}[3][1=\sigma]{[\smash{{#2}\xrightarrow{#1}{#3}}]}

\newcommand{\maxcon}{actually}
\newcommand{\sqline}{\leftrightsquigarrow}
%$\es[+]{1}{2}$
%\newcommand{\sqline}{\$\scriptsize$\sim$\normalsize$\!}

% dynamic preliminaries macros
\newcommand{\db}{\mathcal{I}}
\newcommand{\schema}{\sigma}
\newcommand{\state}{\mathcal{S}}
\newcommand{\aux}{\mathcal{A}}
\newcommand{\prog}{\mathcal{P}}
\newcommand{\query}{Q}
\newcommand{\ans}{Ans}
\newcommand{\domain}{D}

\begin{document}
\maketitle

\begin{abstract}
Planar Embedding is a drawing of a graph on the plane such that the edges do not intersect each other except at the vertices. We know that testing the planarity of a graph and computing its embedding (if it exists), can efficiently be computed, both sequentially~\cite{HopcroftTarjan} and in parallel~\cite{RR94}, when the entire graph is presented as input.  

In the dynamic setting, the input graph changes one edge at a time through insertion and deletions and planarity testing/embedding has to be updated after every change. 
By storing auxilliary information we can improve the complexity of dynamic planarity testing/embedding over the obvious recomputation from scratch. In the sequential dynamic setting, there has been a series of works~\cite{EGIS,IPR,HIKLR,HR1}, culminating in the breakthrough result of $\polylog(n)$ sequential time (amortized) planarity testing algorithm of Holm and Rotenberg~\cite{HR2}.  

In this paper we study planar embedding through the lens of $\DynFO$, a parallel dynamic complexity class introduced by Patnaik et al.~\cite{PI} (also~\cite{DST95}).
We show that it is possible to dynamically maintain whether an edge can be inserted to a planar graph without causing non-planarity in $\DynFO$. We extend this to show how to maintain an embedding of a planar graph under both edge insertions and deletions, while rejecting edge insertions that violate planarity. 

Our main idea is to maintain embeddings of only the triconnected components and a special two-colouring of separating pairs that enables us to side-step cascading flips when embedding of a biconnected planar graph changes, a major issue for sequential dynamic algorithms~\cite{HR1, HR2}.
\end{abstract}
\section{Introduction}
Planar graphs are graphs for which there exists an embedding of vertices on the plane such that the edges can be drawn without intersecting with each other, except at their endpoints. The notion of planar graphs is fundamental to graph theory as underlined by the Kuratowski theorem~\cite{kuratowski30}.
The planarity testing problem is to determine if the encoded graph is planar and the planar embedding problem is to construct such an embedding. These are equally fundamental questions to computer science and their importance has been recognized from the early 1970s in the linear time algorithm by Hopcroft and Tarjan~\cite{HopcroftTarjan}. Since then there have been a plethora of algorithmic solutions presented for the planarity testing and embedding problems such as \cite{LEC, BL, ET} that culminated in an alternative linear-time algorithm~\cite{ET}, a work efficient parallel algorithm running in $O(\log{n})$ time \cite{RR94}, a deterministic logspace algorithm \cite{AM, DP}, and many more.

All of the above algorithms are static i.e., the input is presented at once and we need to answer the planarity testing query and produce an embedding only once. However, in many real-life scenarios, the input is itself dynamic and evolves by insertion and deletion of edges. The same query can be asked at any instance and even the embedding may be required. Rather than recomputing the result from scratch after every update to the input in many scenarios, it is advantageous to preserve some auxiliary data such that each testing or embedding query can be answered much faster than recomputation from scratch. The notion of ``fast'' can be quantified via the sequential time required to handle the updates and queries which can be achieved in polylogarithmic time as in the recent breakthrough works of \cite{HR1, HR2}. These in turn built upon the previous work that dealt with only a partially dynamic model of computation -- insertion only \cite{Poutre, BT96, Westbrook} or, deletion only \cite{IPR} or the fully dynamic model (that supports both insertions and deletions) but with polynomial time updates~\cite{EGIS}.

Our metric for evaluating updates is somewhat different and determined according to the Dynamic Complexity framework of Immerman and Patnaik \cite{PI}, and is closely related to the setting of Dong, Su, and Topor \cite{DST95}. In it, a dynamic problem is characterised by the smallest complexity class in which it is possible to place the updates to the auxiliary database and still be able to answer the queries (notice that if the number of possible queries is polynomial we can just maintain the answers of all queries in the auxiliary database).
 
Notable amongst these has been the first-order logic formulas or equivalently, the descriptive complexity class $\FO$. Thus we obtain the class $\DynFO$ of dynamic problems for which the updates to the auxiliary data structure are in $\FO$ given the input structure and stored auxiliary data structures. The motivation to use first-order logic as the update method has connections to other areas as well e.g., it implies that such queries are highly parallelisable, i.e., can be updated by polynomial-size circuits in constant-time due to the correspondence between $\FO$ and uniform $\ACz$ circuits~\cite{BIS}. From the perspective of database theory, such a program can be translated into equivalent SQL queries.

A particular recent success story of dynamic complexity has been that directed graph reachability (which is provably not in $\FO$) can be maintained in $\DynFO$ \cite{DKMSZ} resolving a conjecture from Immerman and Patnaik \cite{PI}, open since the inception of the field. Since then, progress has been made in terms of the size of batch updates (i.e., multiple simultaneous insertions and deletions) that can be handled for reachability, distance, and maximum matching \cite{DMVZ, M19}. Later, improved bounds have been achieved for these problems in various special graph classes, including in planar graphs \cite{DKMTVZ, DGJMST22}. Problems in planar graphs have been studied in the area of dynamic complexity starting much earlier e.g., before the reachability conjecture was resolved, it was shown in~\cite{DHK14} that reachability in embedded planar graphs is in $\DynFO$. Also, in \cite{Mehta14} it was shown that 3-connected planar graph isomorphism too is in $\DynFO$ with some precomputation. However, despite these works the dynamic complexity of the planarity testing problem itself is not yet resolved, let alone maintaining a planar embedding efficiently. 

\subparagraph*{Our contribution}
In this paper, we build on past work in dynamic complexity to show that a planar embedding %is also in $\DynFO$
can be maintained efficiently, where we test for planarity at every step. % and never allow a planarity violating edge insertion. 
Here, by planar embedding we mean a cyclic order on the neighbours of every
vertex in some drawing of the graph on the plane (also known as combinatorial
embedding~\cite{Djidjev}).
\begin{restatable}{theorem}{main}
Given a dynamic graph undergoing insertion and deletion of edges we can 
maintain a planar embedding of the graph in $\DynFO$ (while never allowing 
insertion of edges that cause the graph to become non-planar).
\end{restatable}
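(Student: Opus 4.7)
The plan is to reduce the maintenance of a planar embedding to the maintenance of the triconnected-component decomposition of each biconnected component. Given a biconnected planar graph $B$, its SPQR-tree $\bwTree(B)$ represents $B$ as S-, P-, and R-nodes (polygons, bonds, and triconnected pieces) glued along separating pairs. By Whitney's theorem, every R-node has a combinatorial embedding that is unique up to reflection, while S- and P-nodes are embedding-trivial, so a planar embedding of $B$ is fully determined by (i) the per-R-node unique embedding and (ii) a choice of relative orientation at each separating pair. The first step is therefore to maintain in $\DynFO$ the BC-tree of the graph and, inside each biconnected piece, the SPQR-tree, building on existing auxiliary-structure machinery developed for reachability and connectivity~\cite{DKMSZ, DHK14}.

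The principal obstacle, and the one that makes the sequential dynamic algorithms of~\cite{HR1, HR2} delicate, is that a single edge insertion or deletion can force a \emph{cascade of flips} along a path of the SPQR-tree: flipping one triconnected piece may force its neighbour to flip, which in turn forces its next neighbour, and so on. Such a chain of forced reorientations, having unbounded length, is not expressible in $\FO$. The key idea is therefore not to store absolute flip states at all, but to maintain a two-colouring of the separating pairs in which the colour records whether the two sides of the pair are glued with matching or opposite orientation. This colouring is a local object and, after an update, only needs to be repaired in the immediate vicinity of the affected edge; the absolute orientation of any R-node is recovered on demand as a parity computation along the SPQR-tree, obviating any cascading writes.

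With this machinery in place the remaining steps are comparatively mechanical. For an insertion of $(u,v)$, planarity is equivalent to $u$ and $v$ being incident to a common face of the current embedding; using the stored per-R-node embeddings together with the two-colouring, this face predicate becomes $\FO$-definable, so the update can be accepted or rejected in $\DynFO$, and, if accepted, the BC-tree, the SPQR-tree, and the colouring are patched only locally. Deletions are handled symmetrically, splitting or merging components as required. Finally, the cyclic order of edges around each vertex $v$ of the planar embedding is recovered by splicing the cyclic orders around $v$ in the triconnected components containing $v$, in an order dictated by the two-colouring, which is again a first-order operation over the maintained relations.

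The main obstacle, which I expect to carry the bulk of the technical work, is verifying that the two-colouring genuinely suffices: that it can both be locally repaired in $\FO$ after every single insertion or deletion and, at the same time, be rich enough to let us extract the cyclic order at every vertex without ever committing to an absolute orientation. This will require a careful combinatorial argument showing that the flip parity along every SPQR-tree path is a function of the colouring alone, and that the local repair rules preserve this invariant after any update.
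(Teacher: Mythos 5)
Your high-level architecture (BC-tree, SPQR-tree, per-R-node rigid embeddings, a two-colouring of separating pairs to tame cascading flips) matches the paper's, and you correctly identify cascading flips as the central obstacle. But the specific mechanism you propose for the two-colouring does not work in $\FO$. You store, at each separating pair, a \emph{relative} orientation bit (``matching or opposite orientation'') and then say the absolute orientation of an R-node is ``recovered on demand as a parity computation along the SPQR-tree.'' Parity of an unbounded-length path is exactly the kind of computation that $\FO$ (equivalently uniform $\ACz$) provably cannot do, so this step reintroduces the very non-$\FO$ cascade you set out to avoid. The paper's two-colouring is a different object: it colours the \emph{vertices} of the P-nodes along each coherent path with two colours, where the colour records which of the two faces (the faces that would border a hypothetical edge inserted across that coherent path) the vertex lies on; Lemma~\ref{lem:p2Col} guarantees the two vertices of every separating pair receive different colours. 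Because this is an \emph{absolute} labelling, deciding whether a given rigid component must be flipped when components coalesce is a single local comparison of its stored face-vertex rotation scheme against the global colouring --- no path-length-dependent aggregation is ever performed.

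A second, smaller gap: your planarity test for inserting $(u,v)$ --- ``$u$ and $v$ are incident to a common face of the current embedding'' --- is only correct when $u$ and $v$ lie in a common $3$-connected component (Lemma~\ref{lem:ThToThPlanCrit}). Within a biconnected but not triconnected component the current embedding may fail to expose a common face even though some other embedding (reachable by flips and slides) does; the correct criterion is the coherent-path condition of Lemma~\ref{lem:ptest}, which quantifies over the rigid components along the $R_a \sqline R_b$ SPQR-tree path and asks, for each, whether the (at most four) adjacent separating-pair vertices lie on a common face of \emph{its own} rigid embedding --- a condition that is embedding-independent by Whitney's theorem and hence $\FO$-checkable from the stored face-vertex rotation schemes. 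You would also need to handle the case where a single deletion triggers linearly many simultaneous virtual-edge deletions across distinct triconnected components, which the paper argues are independent and hence parallelisable in one $\FO$ step.
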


\subparagraph*{Organization.}
We start with preliminaries concerning graph theory and dynamic complexity in \cref{sec:prelims}. 
We present a technical overview of our work in \cref{sec:technical}. 
In \cref{sec:graphTh} we develop the graph theoretic machinery we need for our algorithm. 
In Section~\ref{sec:query} we formalize the query model and the auxiliary data stored. 
We  describe the implementation of the connectivity data structures which we detail in Section~\ref{sec:app:triconn}. 
Next, we give an overview of the dynamic planar embedding algorithm in Section~\ref{sec:algoOverview}. 
In Section~\ref{sec:app:graphThApp} we provide the details left out on graph theoretic machinery. 
In Section~\ref{sec:planarFO} we introduce the primitives required in the subsequent sections for maintaining planarity and argue that they can be implemented in $\FO$. In Section~\ref{sec:triChange} we describe the maintenance of the planar embedding of triconnected components. This last invokes and is used to maintain the two-colouring of separating pairs which is described in Section~\ref{sec:coloring}. We show how to maintain a planar embedding of biconnected components and extend it to a planar embedding of the entire graph in \cref{sec:embed}. 

\section{Preliminaries}\label{sec:prelims}
We start with some notations followed by graph theoretic preliminaries related to connectivity and planarity -- see \cite[Chapters 3, 4]{Diestel} for a thorough introduction.
Then we reproduce some essentials of Dynamic Complexity from \cite{DKMTVZ,DMVZ}.

Given a graph $G = (V,E)$, we write $V(G)$ and $E(G)$ to denote the sets of vertices and edges of $G$, respectively. For a set of edges $S \subseteq E(G)$ we denote by $G - S$, the graph with the edges in $S$ deleted.  Similarly for $S \subseteq V(G) \times V(G)$ we denote by $G + S$ the graph to which new edges in $S$ have been added. For a set of vertices $T \subseteq V(G)$, by $G - T$ we refer to the induced graph $G[V(G) \setminus T]$. An undirected path between $u$ and $v$ is denoted by $u\sqline v$. 

\subsection{Biconnected and Triconnected Decomposition}
We assume familiarity with common connectivity related terminology 
including $2$-vertex connectivity, $3$-vertex connectivity and the
related separating sets viz. cut vertices, separating pairs and the
notion of virtual edges in the triconnected decomposition. 

\paragraph*{Biconnectivity and Biconnected Decomposition} A vertex of a connected graph is called a \emph{cut-vertex} if deleting it from the graph disconnects the graph. A graph is called \emph{biconnected} or $2$-connected graph if there is no cut vertex in it. A connected graph can be decomposed into its maximal \emph{biconnected components} such that two vertices are in one biconnected component if no cut vertex deletion can disconnect them. The decomposition can be expressed as tree which has nodes corresponding to biconnected components and the cut vertices. There is an edge between a cut vertex node and biconnected component node iff the cut vertex belongs to the biconnected component. The biconnected component nodes are termed B (for \emph{block}) and the cut vertex nodes are termed C (for cut vertex). The decomposition tree is called a BC-tree.

\paragraph*{Triconnectivity and Triconnected Decomposition} In a biconnected graph a pair of vertices is called a \emph{separating pair} if their deletion from the graph disconnects the graph. A graph is called \emph{$3$-vertex-connected} if there are no separating pairs in it. A separating pair is called $3$-connected if there is no separating pair that disconnects them. 3-connected separating pairs define a unique decomposition of any biconnected graph into triconnected components. Two vertices of the graph are in a triconnected component if there doesn't exist a 3-connected separating pair whose deletion disconnects them. In each triconnected component there are virtual edges corresponding to the 3-connected separating pairs that belong to the triconnected component apart from the actual graph edges. The decomposition can be expressed as a tree with four types of nodes: R-nodes that correspond to triconnected components that are 3-connected (rigid nodes), S-nodes that correspond to triconnected components that are cycles (serial nodes) and P-nodes that correspond to 3-connected separating pair nodes (parallel nodes). There is an edge between a P-node and an R-node (or S-node) if the vertices corresponding to the P-node belong to triconnected component corresponding to the R-node (or S-node). Note that, triconnected components are not the same as $3$-connected components, e.g, a cycle is a triconnected component but not a $3$-connected component. For more details see~\cite{BT96,HIKLR,DLNTW}. We also have Q-nodes that correspond to single edges that are bridges. They are easy to deal with and henceforth we will eschew any mention of them. 

Note the use of $3$-connected separating pairs. This is required because if we use any pair of vertices that are separating pair for the decomposition, it may be that this decomposition is not unique. As an example for cycles of length $\geq 4$ every chord is a separating pair and moreover ``interlacing'' chords will not allow a consistent way to form a $3$-connected-separating pair tree analogous to the block-cut-vertex tree. Thus conventional wisdom has it \cite{HopcroftTarjan, DLNTW} that we ignore interlacing separating pairs (i.e., separating pairs such that deleting one of them causes the other to become disconnected) and only consider the non-interlacing separating pairs.

The following are two data structures that help in representing tree
decompositions associated with biconnectivity and $3$-connectivity 
respectively.

\begin{enumerate}
	\item \textbf{BC-tree} or block-cut tree of a connected component of the graph, say $H$, denoted by $T_2(H)$. The nodes of the tree are the biconnected components (block nodes) and the cut vertices (cut nodes) of $H$ and the edges are only between cut and block nodes. Block nodes are denoted by $B$ and the cut nodes are denoted by $C$. 
	\item \textbf{SPQR-tree} or the triconnected decomposition tree of a biconnected component of the graph say $B$, is denoted by $T_3(B)$. The nodes in the SPQR-tree are of one of four types: $S$ denotes a cycle component (serial node), $P$ denotes a $3$-connected separating pair (parallel node), $Q$ denotes that there is just a single edge in $B$, and $R$ denotes the $3$-connected components or the so-called rigid nodes. There is an edge between an R-node, say $R_i$ and a P-node, say $P_j$ if $V(P_j)\subset V(R_i)$, and similarly, edges between S and P-nodes are defined. 
\end{enumerate}

We will conflate a node in one of the two trees with the corresponding subgraph. For example, an R-node interchangeably refers to the tree node as well the associated rigid subgraph.

\subsection{Planar Embedding}
A planar embedding of a graph $G=(V, E)$ is a mapping of vertices and edges in the plane $\mathbb{R}^2$ such that the vertices are mapped to distinct points in the plane and every edge is mapped to an arc between the points corresponding to the two vertices incident on it such that no two arcs have any point in common except at their endpoints. This embedding is called a \emph{topological embedding}. Corresponding to a given topological embedding, the faces of the graph are the open regions in $\mathbb{R}^2\setminus G$ (plane with points corresponding to the vertices and edges removed), call the set of faces as $F$. For a face $f \in F$, the set of all the vertices that lie on the boundary of $f$, is denoted by $V(f)$. The unbounded face is called the outer face. An embedding on the surface of a sphere is similarly defined. On the sphere, every face is bounded. Two topological embeddings are equivalent if, for every vertex, the cyclic order of its neighbours around the vertex is the same in both embeddings. So, the cyclic order (or rotation scheme) around each vertex defines an equivalence on the topological embeddings. The vertex rotation scheme around each vertex encodes the embedding equivalence class (combinatorial embedding). We now recall two important results. The first result says that a $3$-connected planar graph has unique planar embedding on the sphere (up to reflection).

\begin{theorem}[Whitney~\cite{Whitney}]\label{lem:whit}
	Any two planar embeddings of a $3$-connected graph are equivalent.
\end{theorem}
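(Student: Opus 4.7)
My plan is to characterize the face cycles of any planar embedding of a 3-connected graph $G$ in a purely combinatorial way, so that the set of face cycles becomes an invariant of $G$ independent of the embedding. Once this is done, the rotation scheme at each vertex $v$ is forced up to reversal: each edge incident to $v$ lies on exactly two face cycles, and two edges are consecutive in the rotation at $v$ iff they share a face containing $v$. Assembling this data at every vertex yields the equivalence of embeddings.

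The combinatorial characterization I would use is Tutte's notion of a \emph{peripheral cycle}: an induced (chordless) cycle $C$ of $G$ such that $G - V(C)$ is connected. The first half of the argument is to show that every face cycle is peripheral. Since $G$ is $2$-connected, the boundary of every face $f$ in any planar embedding is a simple cycle $C$. A chord of $C$ would lie in one of the two Jordan regions of $C$; if it lay in $f$ it would subdivide $f$ into two faces, contradicting the assumption that $f$ is a face; if it lay outside, then a short case analysis using $3$-connectedness of $G$ produces a separating pair, again a contradiction. For the non-separation property, note that all vertices of $G \setminus V(C)$ lie in the Jordan region of $C$ that is not $f$, and $3$-connectedness (together with $C$ having at most two vertices in common with any path crossing it) guarantees that they remain connected after removing $V(C)$.

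Conversely, I would show that every peripheral cycle $C$ of $G$ bounds a face of any fixed planar embedding. By the Jordan curve theorem $C$ separates the plane into two regions; non-separation of $C$ places all vertices of $V(G)\setminus V(C)$ in one region, and chordlessness of $C$ prevents any edge from passing through the other region; hence the other region is a face of the embedding with boundary $C$. Combining the two directions, the set of face boundaries in any embedding is exactly the set of peripheral cycles of $G$, which depends only on $G$.

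The main obstacle is the careful use of $3$-connectedness rather than mere $2$-connectedness in both directions: ruling out exterior chords when showing face cycles are induced, and preserving connectivity of $G - V(C)$. Indeed, the theorem genuinely fails for $2$-connected planar graphs, so the argument must pinpoint where the extra connectivity is needed. Once the peripheral-cycle characterization is in place, extracting the rotation scheme is a local bookkeeping step, and Whitney's theorem follows.
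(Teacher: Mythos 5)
The paper does not prove this statement at all: it is quoted as a classical theorem with a citation to Whitney, so there is no in-paper argument to compare against. Your proposal is the standard modern proof via Tutte's characterization of faces of a $3$-connected planar graph as the peripheral (induced, non-separating) cycles, and it is correct in outline; indeed the paper itself invokes exactly this characterization later (in the proof of the two-colouring lemma, where it asserts that ``non-separating induced cycles in a $3$-connected planar graph are precisely the faces''), so your route is consistent with the toolkit the authors rely on. Two spots in your sketch are thinner than the rest and would need to be expanded in a full write-up. First, the claim that $3$-connectedness ``guarantees'' $G-V(C)$ stays connected when $C$ bounds a face is the genuinely delicate half of Tutte's theorem: one has to argue via the bridges of $C$ in the non-face region, showing that two distinct bridges would force all attachments of one of them into a single arc between two consecutive attachments of the other, producing a $2$-separator. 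Second, the final ``local bookkeeping'' step deserves one more sentence: the peripheral cycles determine the rotation at each vertex only up to an independent reversal per vertex, and you need connectivity plus the fact that a common face forces consistent orientation choices at its incident vertices to conclude that the reflection is global, which is what makes the two embeddings equivalent (up to reflection, as the paper's surrounding discussion makes explicit). Neither point is an error, only an elision of standard but non-trivial detail.
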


The second one is a criterion for planarity of biconnected graphs.
\begin{lemma}[Mac Lane~\cite{MacLane}]\label{lem:macl}
A biconnected graph is planar if and only if its triconnected components are planar.
\end{lemma}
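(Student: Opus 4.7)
My plan is to prove both directions of Mac Lane's criterion by induction on the number of $3$-connected separating pairs in $G$ (equivalently, on the size of the SPQR-tree $T_3(G)$).

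\textbf{Forward direction ($\Rightarrow$).} Assume $G$ is biconnected and planar. If $G$ has no $3$-connected separating pair, then $G$ is itself its unique triconnected component and there is nothing to prove. Otherwise, pick any $3$-connected separating pair $\{u,v\}$; this partitions $V(G)\setminus\{u,v\}$ into two nonempty sides $A,B$ with no edges between them, giving subgraphs $G_1=G[A\cup\{u,v\}]$ and $G_2=G[B\cup\{u,v\}]$. Fix a planar embedding of $G$. Since removing $\{u,v\}$ disconnects $A$ from $B$, the restriction of the embedding to $G_1$ has a face $f$ whose interior contains all of $B$; in particular both $u$ and $v$ lie on $\partial f$, so the virtual edge $uv$ can be routed inside $f$ without crossings, yielding a planar embedding of $G_1+uv$. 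Symmetrically $G_2+uv$ is planar. Both are biconnected with strictly smaller SPQR-trees than $G$, whose triconnected components are exactly the union of those of $G_1+uv$ and $G_2+uv$ (possibly merging into a common P-node at $\{u,v\}$). Induction finishes the argument.

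\textbf{Backward direction ($\Leftarrow$).} Assume every triconnected component of $G$ is planar. Pick any leaf R- or S-node $H$ of the SPQR-tree, attached via a P-node to its parent through a separating pair $\{u,v\}$. Let $G'$ be the biconnected graph obtained by replacing the subgraph associated with $H$ with a single edge $uv$. The triconnected components of $G'$ are those of $G$ minus $H$ (with a possibly simplified P-node), so they remain planar. By induction $G'$ admits a planar embedding; the edge $uv$ bounds two faces there, and deleting it merges them into one face $f$ with both $u$ and $v$ on its boundary. Take a planar embedding of $H$, place its virtual edge $uv$ on its outer face, flip if necessary so that the cyclic orders of edges around $u$ and $v$ concatenate consistently with those in $f$, and insert $H$ (sans virtual edge) into $f$. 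The result is a planar embedding of $G$.

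\textbf{Main obstacle.} The subtle point is the gluing step when a P-node bundles three or more triconnected components sharing the same separating pair $\{u,v\}$: several planar pieces must all be attached along the same pair. The fix is to insert them one at a time into the face-with-both-endpoints that is preserved after each insertion, using the fact that every inserted piece contributes a new face bordered by both $u$ and $v$ (namely the face that used to contain the virtual edge). The flip needed for R-nodes is guaranteed by Whitney's theorem (\cref{lem:whit}), whose embedding is unique up to reflection; for S- and P-nodes the flexibility is trivial. A secondary delicate point in the forward direction is to check that the split preserves the non-interlacing property of $3$-connected separating pairs so that the SPQR-decompositions of the two sides combine cleanly into that of $G$; this is precisely why the triconnected decomposition is defined via $3$-connected (rather than arbitrary) separating pairs.
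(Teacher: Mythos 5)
The paper does not prove this statement at all: it is stated as a classical result with a citation to Mac Lane, so there is no in-paper argument to compare yours against. Your proof is the standard split/merge induction on the SPQR-tree and is essentially sound, but two points deserve tightening. In the forward direction, when $G-\{u,v\}$ has three or more components it is not true that all of $B$ lies in the interior of a single face of the embedding restricted to $G_1$ (different components of $B$ may sit in different faces of $G_1$); what you actually need, and what does hold, is that any face of $G_1$ containing a component of $B$ has both $u$ and $v$ on its boundary, since each component sends edges to both $u$ and $v$ by biconnectivity, so the virtual edge can still be routed. More importantly, your assertion that the triconnected components of $G$ are exactly the union of those of $G_1+uv$ and $G_2+uv$ is the crux of the whole induction and is really the defining recursive property of the decomposition via $3$-connected separating pairs; you flag this only as a ``secondary delicate point,'' but in a rigorous write-up it needs its own argument or a citation (Hopcroft--Tarjan or Di Battista--Tamassia). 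The backward direction is fine: for an R-node, Whitney's theorem (\cref{lem:whit}) together with the fact that every edge of a $3$-connected planar graph borders exactly two faces lets you place the virtual edge on the outer face, the flip is only needed to fix the combinatorial embedding rather than planarity itself, and your one-at-a-time insertion into the surviving $uv$-face correctly handles P-nodes of degree three or more.
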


\subsection{Dynamic Complexity} 
The goal of a dynamic program is to answer a given query on an \emph{input structure} subject to changes that insert or delete tuples. The program may use an auxiliary data structure represented by an \emph{auxiliary structure} over the same domain. Initially, both input and auxiliary structure are empty; and the domain is fixed during each run of the program.

For a (relational) structure $\db$ over domain $\domain$ and schema $\schema$, a change $\Delta \db$ consists of sets $R^{+}$ and $R^{-}$ of tuples for each relation symbol $R \in \schema$. The result $\db + \Delta \db$ of an application of the change $\Delta \db$  to $\db$ is the input structure where $R^{\db}$ is changed to $(R^{\db} \cup R^{+}) \setminus R^{-}$. The \emph{size} of $\Delta \db$ is the total number of tuples in relations $R^{+}$ and $R^{-}$ and the set of \emph{affected elements} is the (active) domain of tuples in $\Delta \db$.

\subparagraph*{Dynamic Programs and Maintenance of Queries.} A dynamic program 
consists of a set of update rules that specify how auxiliary relations are updated after changing the input structure. An \emph{update rule} for updating 
an $\ell$-ary auxiliary relation $T$ after a change is a first-order formula $\varphi$ over schema $\tau \cup \tau_{\text{aux}}$ with $\ell$ free variables, where $\tau_{\text{aux}}$ is the schema of the auxiliary structure. After a change $\Delta \db$, the new version of $T$ is 
$T := \{ \bar{a} \mid (\db + \Delta \db, \aux) \models \varphi(\bar{a})\}$ 
where $\db$ is the old input structure and $\aux$ is the current auxiliary 
structure. Note that a dynamic program can choose to have access to the old input structure by storing it in its auxiliary relations. 

For a state $\state = (\db, \aux)$ of the dynamic program $\prog$ with input structure $\db$ and auxiliary structure $\aux$, we denote by $\prog_\alpha(\state)$, the state of the program after applying a change sequence $\alpha$ and updating the auxiliary relations accordingly.
The dynamic program \emph{maintains} a $q$-ary query $\query$ under changes that affect $k$ elements (under changes of size $k$, respectively)  if it has a $q$-ary auxiliary relation $\ans$ that at each point stores the result of $\query$ applied to the current input structure. More precisely, for each non-empty sequence $\alpha$ of changes that affect $k$ elements (changes of size $k$, respectively), the relation $\ans$ in $\prog_\alpha(\state_\emptyset)$ and $\query(\alpha(\db_\emptyset))$ coincide, where $\db_\emptyset$ is an empty input structure, $\state_\emptyset$ is the auxiliary structure with empty auxiliary relations over the domain of $\db_\emptyset$, and $\alpha(\db_\emptyset)$ is the input structure after applying $\alpha$.
If a dynamic program maintains a query, we say that the query is in $\DynFO$. 

\section{Technical Overview}\label{sec:technical}
It is well known from Whitney's theorem (\cref{lem:whit}) that $3$-connected planar graphs are rigid i.e., they (essentially) have a unique embedding. Thus, for example, under the promise that the graph remains $3$-connected and planar it is easy to maintain an embedding in $\DynFO$ (see for example \cite{Mehta14}). An edge insertion occurs within a face and there are only local changes to the embedding -- restricted to a face. Deletions are exactly the reverse.

On the other extreme are trees, which are minimally connected. These are easy to maintain as well because any vertex rotation scheme is realisable. However, biconnected components are not rigid and yet not every rotation scheme for a vertex is valid (see \cref{subfig:two3b} for an illustration). The real challenge is in maintaining embeddings of biconnected components. 

This has been dealt with in literature by decomposing biconnected graphs into $3$-connected components (which are rigid components in the context of planar graphs). The $3$-connected components are organized into trees\footnote{The tree decomposition of a biconnected graph into $3$-connected pieces is a usual tree decomposition (\cite[Chapter 12.3]{Diestel})} popularly called SPQR-trees~\cite{BT96}. The approach is to use the rigidity of the $3$-connected planar components and the flexibility of trees to maintain a planar embedding of biconnected graphs. In order to maintain a planar embedding of connected graphs we need a further tree decomposition into biconnected components that yields the so-called block-cut trees or BC-trees (\cite[Lemma 3.1.4]{Diestel}, \cite{Harary}). Notice that the tree decomposition into SPQR-trees and BC-trees is Logspace hard~\cite{DLNTW} and hence not in $\FO$. Thus, in the parallel dynamic setting, we emulate previous sequential dynamic algorithms in \emph{maintaining} (rather than computing from scratch) SPQR-trees and BC-trees in our algorithm.

\subparagraph*{Issues with biconnected embedding.}
The basic problem with maintaining biconnected planar components 
is their lack of rigidity (with reaching complete flexibility). Thus insertion of an edge into a biconnected component might necessitate changing the embedding through operations called \emph{flips} and \emph{slides} in literature \cite{HR1,HR2} (see Figure~\ref{fig:flipsSlides}). We might need lots of flips and slides for a single edge insertion -- causing an exponentially large search space. We now proceed to describe these changes in more detail.

In the simplest form consider a biconnected graph and a separating pair contained within, separating the component into two $3$-connected components. We can reflect one of the $3$-connected components, that is, the vertex rotation for each vertex in the $3$-connected component is reversed. More intuitively, mirror a piece across one of its separating pairs. In more complicated cases there may be cascading flips i.e., reflections across multiple separating pairs in the biconnected component. We also need to deal with \emph{slides}, that is changes in ordering of the biconnected components at a separating pair. A single edge insertion might need multiple flips and slides.

This induces the definition of flip-distance i.e., the minimum number of flips and slides to change one embedding of the graph to another. Intuitively, the flip-distance lower bounds the sequential time needed to transition from one embedding to another.

Thus a crucial part of previous algorithms \cite{HR1} deals with maintaining an embedding of small flip distance with every possible embedding that can arise after a single change. In \cite{HR2} this algorithm is converted to a fully dynamic algorithm that handles updates in $O(\log^3{n})$ time using a sophisticated amortization  over the number of flips required to transition to an appropriate embedding dominates the running time of their algorithm. Notice that \cite{HR1} handles changes in $O(\log^3{n})$ worst case time but only in the incremental setting.

\subparagraph*{Our approach for dynamic planarity testing.}
We now switch to motivating our approach in the parallel dynamic setting that is fully dynamic and does not use amortization. There are fundamentally two issues to be resolved while inserting an edge -- one is whether the resulting graph is planar. The other is, how to update the embedding, possibly by performing multiple flips, when the graph remains planar. Let us first focus on biconnected graphs and the corresponding tree decomposition SPQR-tree introduced in Section~\ref{sec:prelims}.

To check for planarity on insertion of an edge, we introduce the notion of $P_i,P_j-$\emph{coherent paths} (\cref{def:coherent}). A path between two P-nodes $P_i,P_j$ in an SPQR-tree is said to be coherent if for every R-node $R_k$ on the path, all the (at most four) vertices of the two adjacent P-nodes are all on one face of $R_k$. This yields a combinatorial characterisation of coherent paths, given an embedding of each rigid component. The embedding of a rigid component is dealt with separately, later on. The significance of coherent paths stems from a crucial lemma (\cref{lem:ptest}). This shows that an edge $(a,b)$ is insertable in the graph preserving planarity if and only if the ``projection'' of any simple $a,b$ path in the graph\footnote{which satisfies a minimality condition -- it does not pass through both vertices of a separating pair} onto the corresponding SPQR-tree roughly corresponds to a coherent path. This yields a criterion for testing planarity after an edge insertion which can be implemented in $\FO$.

\subparagraph*{Insertions in biconnected components.}
Having filtered out non-planarity causing edges we turn to the question of how to construct the new planar embedding of the biconnected components after an edge insertion. The answer will lead us to investigate how to embed a rigid component when it is synthesized from a biconnected component.

It is in this context that we introduce the notion of two-colouring of separating pairs. This is a partial sketch of the new $3$-connected component formed after an edge insertion. More concretely, the separating pairs along the path in the SPQR-tree are no longer separating pairs after the edge insertion and the common face (as ensured by the crucial lemma alluded to above) on which all the endpoints of the previous separating pairs lie splits into two faces. Since the embedding of a $3$-connected planar graph is unique, after the edge insertion the two new faces formed are also unique, i.e., do not depend on the embedding. Thus the endpoints of each previous separating pair can be two coloured depending on which of the two faces a separating pair belongs to. We prove in the two-colouring lemma (\cref{lem:p2Col}) that no separating pair has both vertices coloured with the same colour.

Notice that when an edge is inserted such that its endpoints lie on a coherent path, all the rigid components on the path coalesce into one large rigid component (see \cref{subfig:two3a}). Two-colouring allows us to deal with flips by telling us the correct orientation of the coalescing rigid components on edge insertion. This, in turn, allows us to obtain the face-vertex rotation scheme of the modified component. In addition, it helps us to maintain the vertex rotation scheme in some corner cases (when two or more separating pairs share a vertex).

\subparagraph*{Face-vertex rotation scheme.} 
The sceptical reader might question the necessity of maintaining the face-vertex rotation scheme for a $3$-connected component. This is necessary for two reasons -- first, to apply the planarity test we need to determine the existence of a common face containing a $4$-tuple (or $3$-tuple) of vertices. The presence of a face-vertex rotation scheme directly shows that this part is in $\FO$. Second and more crucially, we need it to check if a particular triconnected component needs to be reflected after cascaded flips. Maintaining the vertex rotation scheme for biconnected components is now simple -- we just need to collate the vertex rotation schemes for individual rigid components into one for the entire graph.

\subparagraph*{Handling deletions.}
On deleting edges while it's not necessary to perform additional flips, the rest of the updates is roughly the reverse of insertion. On deleting an edge from a rigid component, we infer two-colourings from the embedding of erstwhile rigid components that decompose into pieces. Further, we have to update the coherent paths since possibly more edges are insertable preserving planarity. Notice that when an edge is deleted from a biconnected component this might lead to many simultaneous virtual edge deletions that might in turn cause triconnected components to decompose. Many ($O(n)$) invocations of the above triconnected edge deletion will be needed, but they can be handled in constant parallel time because they independent of each other as far as the updates required are concerned (see \cref{fig:ot}).

\subparagraph*{Extension to the entire embedding.}
BC-trees for connected components have blocks and cut vertices as their nodes. We can maintain an embedding for the graph corresponding to a block or B-node as above. Since a non-cut vertex belongs to precisely one block, we can inherit the rotation scheme for such vertices from that of the blocks. For cut vertices, we need to splice together the vertex rotation scheme from each block that the cut vertex is incident on as long as the order respects the ordering provided by individual blocks. 

\subparagraph*{Low level details of the information maintained.}
We maintain BC-tree for each connected component and SPQR-trees for each biconnected component thereof. In each of these trees we maintain betweenness information, i.e., for any three nodes $X_1, X_2$ and $X_3$ whether $X_2$ occurs on the tree path between $X_1$ and $X_3$. We also maintain a two-colouring of separating pairs for each $P_i, P_j$-coherent path in every SPQR-tree. For each rigid component $R_i$ and each cycle component $S_j$ we maintain their extended planar embedding. Specifically, we maintain the vertex rotation scheme in the following form. For every vertex $v$, we maintain triplet(s) $(v_i,v_j,v_k)$ of neighbours of $v$ that occur in the clockwise order though not necessarily consecutively. This enables us to insert and delete an arbitrarily large number of neighbours in $\FO$ making it crucial for the planar embedding procedure. This would not be possible if we were to handle individual insertions and deletions separately. See Figure~\ref{fig:deleteMulti}
for an example. We use a similar representation for the face-vertex rotation scheme. 

For biconnected components, we maintain only a planar embedding (not the extended version) since the face-vertex rotation scheme is not necessary.

\subparagraph*{Comparison with existing literature.}
The main idea behind recent algorithms for planar embedding in the sequential dynamic setting has been optimizing the number of flips necessitated by the insertion of an edge. This uses either a purely incremental algorithm or alternatively, a fully dynamic but amortized algorithm. Since our model of computation is fully dynamic and does not allow for amortization, each change must be handled (i.e., finding out the correct cascading flips) in worst case $O(1)$-time on $\CRCW$-PRAM. We note that filtering out edges that violate planarity in dynamic sequential $t(n)$ time (a \emph{test-and-reject} model) implies an amortized planarity testing algorithm with $O(t(n))$ time  (i.e., a \emph{promise-free} model). In contrast, although we have a test-and-reject model we are unable to relax the model to promise-free because of lack of amortization.

There are weaker promise models such as the one adopted in \cite{DMSVZ} where for maintaining a bounded tree-width decomposition it is assumed that the graph has tree-width at most $k$ without validating the promise at every step. In contrast our algorithm can verify the promise that no non-planarity-causing edge is added.

In terms of query model support, most previous algorithms \cite{HR1, HR2, IPR} only maintain the vertex rotation scheme in terms of clockwise next neighbour, in fact, \cite{HR1, HR2} need $O(\log n)$ time to figure out the next neighbour. In contrast, we maintain more information in terms of arbitrary triplets of neighbours in (not necessarily consecutive) clockwise order. This allows us to sidestep following arbitrarily many pointers, which is not in $\FO$. Finally, in terms of parallel time our algorithm (since it uses $O(1)$ time per query/update on $\CRCW$-PRAMs) is optimal in our chosen model. In contrast, the algorithm of \cite{HR2} comes close but fails to achieve the lower bound (of $\Omega(\log{n})$) in the sequential model of dynamic algorithms.

\section{Graph Theoretic Machinery}\label{sec:graphTh}
First, we present some graph theoretic results which will be crucial for our maintenance algorithm. We begin with a simple observation and go on to present some criteria for the planarity of the graph on edge insertion based on the type of the inserted edge.
\begin{observation}\label{obs:outerFace}
For a $3$-connected planar graph $G$, two planar embeddings $\mathcal{E}_1$ and $\mathcal{E}_2$ in the plane have the same vertex rotation scheme if between the two embeddings only the clockwise order of vertices on the boundary of the outer faces of the two embeddings are reverse if each other and all the clockwise order of vertices on the boundary of internal faces is same.
\end{observation}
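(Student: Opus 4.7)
The plan is to derive the vertex rotation scheme at each vertex from the clockwise-in-plane boundary orders of the incident faces, paying careful attention to a sign convention that differs between internal and outer faces. First I would invoke Whitney's theorem (\cref{lem:whit}) to note that the combinatorial face structure of a $3$-connected planar $G$ is unique: both $\mathcal{E}_1$ and $\mathcal{E}_2$ have the same collection of faces, and may disagree only on which face is unbounded. This gives a common face-level vocabulary in which the clockwise-in-plane boundary order of any face can be compared across the two embeddings, and is exactly the setting in which the hypothesis of the observation lives.

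At each vertex $v$, the rotation $\pi(v)$ is equivalent to specifying the ``wedge-to-face'' assignment: for every incident face $f$ one specifies which two neighbours of $v$ form its pair of consecutive positions in $\pi(v)$, together with the orientation (from which of the two one goes clockwise to the other in $\pi(v)$) in which the wedge lies. I would show that the clockwise-in-plane boundary order $\ldots,p,v,q,\ldots$ of a face $f$ at $v$ pins down this data, but with a sign flip depending on $f$'s status: a bounded internal face lies on the right of its clockwise-in-plane boundary traversal, placing $f$ in the wedge from $q$ clockwise to $p$ in $\pi(v)$, whereas the unbounded outer face lies on the left of its clockwise-in-plane boundary traversal, placing $f$ in the wedge from $p$ clockwise to $q$.

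A short case analysis then completes the proof. A face internal in both embeddings has matching clockwise orders by hypothesis, and matching internal readings on both sides give the same wedge constraint. A face that is outer in one embedding and internal in the other has clockwise orders which are reverses of each other by hypothesis; applying the outer reading on one side and the internal reading on the other, the reversal and the sign flip cancel, and again the wedge constraint is the same. Hence the wedge-to-face assignments agree at every vertex, which forces $\pi_1 = \pi_2$. The main obstacle I anticipate is nailing down the sign convention carefully (which side of a clockwise boundary traversal an internal versus outer face sits on); once that is verified, say against a small example such as $K_4$, the rest reduces to routine bookkeeping.
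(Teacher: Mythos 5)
Your argument is correct, and it is worth noting that the paper itself supplies no proof of this statement: it is presented as a bare observation and immediately used to justify the \emph{make outer face} primitive, so there is no in-paper argument to compare against. Your proof fills that gap soundly. The crux is the sign convention, and you have it right: a bounded face traversed clockwise in the plane lies to the right of the direction of travel, so at a corner $\ldots,p,v,q,\ldots$ it occupies the angular sector swept clockwise from $vq$ to $vp$ (i.e.\ $p$ immediately follows $q$ in $\pi(v)$), whereas the unbounded face traversed clockwise lies to the left, giving the opposite sector (i.e.\ $q$ immediately follows $p$). With that verified, the case analysis does exactly what is needed: a face internal in both embeddings contributes the same ordered consecutive pair to both rotations, and a face whose outer/internal status differs between the embeddings has its boundary order reversed by hypothesis, which exactly cancels the flipped reading, again yielding the same ordered pair. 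Since the face corners at $v$ cover every consecutive pair of the rotation, the successor relation, and hence the cyclic order at every vertex, is forced to coincide. The appeal to Whitney's theorem to canonically identify faces across the two embeddings is also the right move, since the hypothesis of the observation tacitly presupposes such an identification. The only loose end is the degenerate reading in which the same face is outer in both embeddings yet has reversed boundary orders; that situation is not realizable and is clearly not the intended scenario (the observation is about exchanging which face is unbounded), so your proof covers everything that matters.
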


Notice that due to the above fact, given a planar embedding with its outer face $F_0$ and an internal face $F_1$ specified, we can modify it to make $F_1$ the outer face while keeping the vertex rotation scheme unchanged by just reversing the orientation of the faces $F_0$ and $F_1$. 

Next, we present some criteria to determine if an edge to be inserted in a planar
graph causes it to become non-planar.

\begin{restatable}{lemma}{ThToThPlanCrit}\label{lem:ThToThPlanCrit}
	For any $3$-connected planar graph $G$, $G+\{\{a,b\}\}$ is planar if and only if $a$ and $b$ lie on the boundary of a common face. 
\end{restatable}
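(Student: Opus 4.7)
The plan is to leverage Whitney's theorem (\cref{lem:whit}), which tells us that the planar embedding of a $3$-connected planar graph is essentially unique, so ``lying on the boundary of a common face'' is a well-defined combinatorial property of $G$ itself rather than of any particular drawing.

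For the $(\Leftarrow)$ direction I would proceed directly: fix the (combinatorially unique) planar embedding $\mathcal{E}$ of $G$ and suppose $a$ and $b$ both lie on the boundary of some face $f$. Then I can route an arc for the new edge $\{a,b\}$ through the open region corresponding to $f$, staying disjoint from every existing edge and vertex except at the endpoints $a,b$. This produces a planar drawing of $G+\{\{a,b\}\}$, so the enlarged graph is planar.

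For the $(\Rightarrow)$ direction, I would start from a planar embedding $\mathcal{E}'$ of $G+\{\{a,b\}\}$ (assumed to exist) and consider what happens upon deleting the arc representing $\{a,b\}$. The restriction $\mathcal{E}$ of $\mathcal{E}'$ to $G$ is a planar embedding of $G$. The arc for $\{a,b\}$ in $\mathcal{E}'$ has a face incident to each of its two sides; whether these are two distinct faces or a single face, after removing the arc they merge into one face of $\mathcal{E}$ whose boundary contains both $a$ and $b$. Hence in $\mathcal{E}$ the vertices $a$ and $b$ share a face, and by \cref{lem:whit} every planar embedding of $G$ is equivalent to $\mathcal{E}$, so the same pair of vertices share a face in the canonical embedding of $G$.

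The only mildly subtle point is the ``both sides of the arc'' argument in $(\Rightarrow)$: one must be careful that even when the two sides correspond to the same face of $G+\{\{a,b\}\}$ (which cannot actually happen here since adding an edge to a connected graph never creates a bridge, but is worth noting), the merged region is still a single face of $G$ bounded on both sides by parts containing $a$ and $b$. Everything else is routine topology of planar drawings combined with the uniqueness guaranteed by Whitney's theorem; the invocation of \cref{lem:whit} is what upgrades ``some planar embedding of $G$ places $a,b$ on a common face'' to ``the planar embedding of $G$ places $a,b$ on a common face.''
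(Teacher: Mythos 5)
Your proposal is correct and takes essentially the same route as the paper: both directions hinge on deleting the arc for $\{a,b\}$ from an embedding of $G+\{\{a,b\}\}$ to obtain an embedding of $G$ with $a,b$ on a common face, and then invoking Whitney's theorem (\cref{lem:whit}) to transfer this to the (unique) embedding of $G$ — the paper merely phrases this last step as a contradiction rather than a direct transfer, and leaves the easy $(\Leftarrow)$ direction implicit where you spell it out.
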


Next, let us consider the case where the vertices $a$ and $b$ are in the same block, say $B_i$, of a connected component of the graph. Let $R_a,R_b\in V(T_3(B_i))$ be two R-nodes in the SPQR-tree of $B_i$ such that $a\in V(R_a)$ and $b\in V(R_b)$. Consider the path between $R_a$ and $R_b$ nodes in $T_3(B_i)$, $R_a,P_1,R_1,P_2,\ldots,R_k,P_{k+1},R_b$, where $R_i$ and $P_i$ are R-nodes and S-nodes in $T_3(B_i)$ respectively, that appear on the path between $R_a,R_b$ (see \cref{fig:two3}). We have the following lemma (see Section~\ref{sec:app:graphThApp} for the proof). 
\begin{restatable}{lemma}{ptest}\label{lem:ptest}
	$G+\{\{a,b\}\}$ is planar if and only if
	\begin{alphaenumerate}
		\item all vertices in $\{a\}\cup V(P_1)$ lie on a common face boundary in the embedding of $R_a$
		\item all vertices in $V(P_{k+1})\cup \{b\}$ lie on a common face boundary in the embedding of $R_b$, and
		\item for each $i \in [k]$ all vertices in $V(P_i)\cup V(P_{i+1})$ lie on a common face in the embedding of $R_i$. Equivalently  $R_a\sqline R_b$ tree path is a coherent path.
	\end{alphaenumerate}
\end{restatable}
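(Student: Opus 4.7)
The plan is to prove both directions by translating planar embeddings of $B_i+\{a,b\}$ into a geometric routing of the new edge through a corridor of faces along the SPQR-tree path from $R_a$ to $R_b$. By \cref{lem:macl}, planarity of $G+\{a,b\}$ reduces to planarity of its triconnected components; the only triconnected component created or altered by the insertion is the one obtained by merging $R_a, R_1, \ldots, R_k, R_b$ across the (now non-separating) pairs $P_1, \ldots, P_{k+1}$, so it is enough to analyze what happens inside this merged piece.

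For the forward direction, I would take a planar embedding of $G+\{a,b\}$ and regard the new edge as a Jordan arc $\gamma$ from $a$ to $b$. For each intermediate component $R_i$ on the tree path, the Whitney rigidity of its embedding (\cref{lem:whit}) fixes the arrangement of faces incident to the virtual edges associated with $P_i$ and $P_{i+1}$, up to reflection. Since $\gamma$ enters the drawing of $R_i$ through a region adjacent to the $P_i$-virtual edge and exits through one adjacent to the $P_{i+1}$-virtual edge without crossing any edge of $R_i$, it must lie inside a single face of $R_i$'s embedding whose boundary touches both virtual edges; this yields condition (c). Conditions (a) and (b) follow by the same argument applied at the endpoint components $R_a$ and $R_b$, with $a$ (respectively $b$) playing the role of one ``side'' and the virtual edge of $P_1$ (respectively $P_{k+1}$) the other.

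For the converse, I would construct an explicit planar embedding of $B_i+\{a,b\}$. Using (a), (b), (c), pick for each $R_i$ a face $F_i$ whose boundary contains $V(P_i)\cup V(P_{i+1})$, and analogous faces $F_a, F_b$ at the endpoints. By \cref{obs:outerFace}, reflect the embedding of each component on the path so that its chosen face lies on the appropriate side of the corridor. At each separating-pair node $P_j$, reorder the cyclic sequence of its children so that the two path neighbours $R_{j-1}$ and $R_j$ (or the endpoint component when $j=1$ or $j=k+1$) are consecutive in the cyclic order around the virtual edge of $P_j$, placing any other attached components outside this adjacency. Now glue the component embeddings into a planar embedding of $B_i$ and route the arc for $\{a,b\}$ through the sequence $F_a \to F_1 \to \cdots \to F_k \to F_b$, crossing into each successive face at the vertices shared with the next component via $P_j$. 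Invoking \cref{lem:macl} then gives planarity of $G+\{a,b\}$.

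The main technical hurdle is verifying that the gluing and routing is genuinely realizable without crossings. One has to check that the reflection choices inside the $R_i$'s are mutually compatible, that is, the face boundary used as the ``exit side'' at $P_{i+1}$ in $R_i$ matches the one used as the ``entry side'' at $P_{i+1}$ in $R_{i+1}$, and that the P-node re-orderings at different separating pairs do not conflict with each other. This is where the coherent-path formulation of condition (c) is essential: because every intermediate $R_i$ simultaneously places $V(P_i) \cup V(P_{i+1})$ on a single face boundary, the reflections and the P-node re-orderings can be chosen locally and independently along the path, keeping the routing corridor open from $a$ to $b$.
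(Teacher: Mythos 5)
Your overall strategy is the same as the paper's: forward direction by examining how the arc for $\{a,b\}$ sits relative to each $R_i$ in a planar embedding of the augmented graph, converse by gluing the component embeddings along the tree path with suitable reflections and finishing with \cref{lem:macl}. Two points deserve attention, one cosmetic and one substantive.

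On the forward direction, your argument is the right idea but stays at the level of ``$\gamma$ enters through a region adjacent to the $P_i$-virtual edge.'' The paper makes this rigorous by first constructing, via Menger's theorem, a tree $T$ rooted at $a$ with $V(T)\cap V(R_i)=\{u_1,v_1,u_2,v_2\}$ and all four of these as leaves, and then running the Jordan-curve argument on $R_i\cup T$: any attachment vertex off the face containing $a$ would force $T$ to pass through a vertex of $R_i$ of degree at least two in $T$, contradicting the leaf property. You should supply something equivalent, since the virtual edges of $P_i,P_{i+1}$ are not edges of $G$ and the arc $\gamma$ by itself does not certify that all four separating-pair vertices reach the boundary of the same face.

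The substantive issue is in your converse: you route the arc for $\{a,b\}$ ``through the sequence $F_a\to F_1\to\cdots\to F_k\to F_b$, crossing into each successive face at the vertices shared with the next component.'' An edge arc may meet the drawing only at its own endpoints, so a curve that passes through the vertices of $P_j$ is not a legal planar drawing, and there is no way to pass between two distinct faces of an embedded $R_i$ without crossing an edge or a vertex. The correct statement --- and what the paper's induction establishes --- is that when the components are glued at the virtual edges (which are then absent, since they are not edges of $G$), the chosen faces $F_a,F_1,\dots,F_k,F_b$ merge into a \emph{single} face whose boundary contains $a$, $b$, and all the separating-pair vertices; the arc is then drawn inside that one face. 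With that correction your construction matches the paper's; note also that the paper sidesteps your worry about re-ordering other children at P-nodes entirely, by proving planarity only of the coalesced piece $G_{ab}+\{\{a,b\}\}$ and letting \cref{lem:macl} handle the rest of the block.
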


Consider the case in which $\es[+]{2}{3}$ edge $\{a,b\}$ can be inserted into $G$ preserving planarity. Notice that the $3$-connected components $R_a,R_1,\ldots,R_k,R_b$ coalesce into one $3$-connected component after the insertion of the edge. Let this coalesced $3$-connected component be $R_{ab}$. Obviously, $\{a,b\}$ would be at the boundary of exactly two faces of $R_{ab}$, say $F_0,F_1$. We claim that the separating pair vertices in $\bigcup_{i\in[k+1]} V(P_i)$ all lie either on the boundary of $F_0$ or $F_1$. See Figure~\ref{fig:twoCol}. We defer the proof of the following lemma to Section~\ref{sec:app:graphThApp}.

\begin{restatable}{lemma}{pCol}\label{lem:p2Col}
	The faces $F_0$ and $F_1$ define a partition into two parts on the set of vertices in the separating pairs $\bigcup_{i\in[k+1]} V(P_i)$  such that, for all $i\in[k+1]$ the two vertices in $V(P_i)$ belong to different blocks of the partition.
\end{restatable}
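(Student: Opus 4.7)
The plan is to analyze the structure of the common face of the pre-insertion biconnected graph $G$ lying over the SPQR-tree path $R_a,P_1,R_1,\ldots,R_k,P_{k+1},R_b$, and then use the separating-pair property to argue that the two endpoints of each $P_i$ end up on opposite sides of the chord $(a,b)$.

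First I would argue that the common rigid-component faces $f_0,f_1,\ldots,f_{k+1}$ guaranteed by \cref{lem:ptest} (with $f_0\ni\{a\}\cup V(P_1)$ in $R_a$, $f_i\ni V(P_i)\cup V(P_{i+1})$ in $R_i$ for $i\in[k]$, and $f_{k+1}\ni V(P_{k+1})\cup\{b\}$ in $R_b$) glue, along the virtual edges corresponding to the $P_i$'s, into a single face $F$ of the biconnected graph $G$. Indeed, the virtual $P_i$-edge lies on the boundary of $f_{i-1}$ in $R_{i-1}$ and on the boundary of $f_i$ in $R_i$ (on complementary sides of the edge); removing every virtual $P_i$-edge merges these faces in turn. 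Since $G$ is biconnected, the boundary of $F$ is a simple cycle $C$ containing $a$, $b$, and every vertex of $\bigcup_{i\in[k+1]} V(P_i)$.

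When the chord $(a,b)$ is drawn inside $F$, it splits $F$ into exactly two faces $F_0,F_1$ of the coalesced component $R_{ab}$, whose boundaries are the two arcs of $C$ between $a$ and $b$, each completed by the new edge. Hence each vertex of $V(C)\setminus\{a,b\}$ lies on the boundary of exactly one of $F_0,F_1$, which already yields a two-partition of the separating-pair vertices (with mild conventions for the corner cases $a\in V(P_1)$ or $b\in V(P_{k+1})$). The heart of the proof is showing that for each $i\in[k+1]$ the two vertices of $P_i=\{u_i,v_i\}$ lie on \emph{different} arcs of $C$. In the typical case $a,b\notin V(P_i)$, the nodes $R_a$ and $R_b$ sit in different subtrees of the SPQR-tree after removing $P_i$, so $\{u_i,v_i\}$ is a separating pair for $a$ and $b$ in $G$, and every $a\sqline b$ path in $G$ must meet $\{u_i,v_i\}$. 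If $u_i$ and $v_i$ were on the same arc of $C$, the complementary arc would be an $a\sqline b$ path in $G-\{u_i,v_i\}$, a contradiction; so they lie on the two different arcs and hence belong to different faces among $F_0,F_1$.

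The main obstacle I anticipate is the gluing argument in the second paragraph — establishing cleanly that the $f_i$'s fuse to a single simple-cycle face $F$ of $G$ rather than to several faces that happen to share these vertices, and handling degeneracies where consecutive separating pairs share a vertex or where $a$ or $b$ itself coincides with a separating pair vertex. Once this structural picture is in place, the separation argument is entirely local and follows from standard SPQR decomposition properties together with the simple-cycle nature of face boundaries in a biconnected graph.
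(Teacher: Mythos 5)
Your proof is correct, but it takes a genuinely different route from the paper's. The paper constructs, by induction along the SPQR-tree path, a non-separating \emph{induced} path from $a$ to $b$ that picks exactly one vertex from each $P_i$, closes it up with the new edge, and then invokes Tutte's characterisation (non-separating induced cycles are precisely the faces of a $3$-connected planar graph) to conclude that this cycle is one of $F_0,F_1$; the remaining pair vertices form the other face. You instead work in the pre-insertion graph: you glue the component faces guaranteed by \cref{lem:ptest} into a single face $F$ of $G_{ab}$ bounded by a simple cycle $C$ through $a$, $b$ and all of $\bigcup_i V(P_i)$, and then use the fact that each $P_i$ on the tree path separates $a$ from $b$ to force $u_i$ and $v_i$ onto different $a$--$b$ arcs of $C$. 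Your separation argument is cleaner and more elementary than the paper's induction, and what each approach buys is roughly dual: the paper pays with Tutte's theorem up front and gets the face identification for free, while you pay with the gluing step and get the two-colouring almost for free. The gluing you flag as the main obstacle is essentially the inductive claim already proved inside the paper's proof of \cref{lem:ptest} (that $G_{ab}$ has an embedding with $a$, $b$ and all separating-pair vertices on one face); the one ingredient you would still need to make it airtight is that the virtual edge of $P_i$ actually lies on the boundary of the common face $f_i$ of $R_i$, which holds because face boundaries of $3$-connected planar graphs are induced cycles (Tutte again), so an edge with both endpoints on a face must lie on it. Finally, to identify your constructed pair of faces with the canonical $F_0,F_1$ of the statement you implicitly use Whitney's uniqueness (\cref{lem:whit}) of the embedding of $R_{ab}$, which the surrounding text of the paper already supplies.
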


Finally, if the vertices $a$ and $b$ are in the same connected component but not in the same biconnected component then we use the following lemma to test for edge insertion validity. Let the vertices $a,b$ lie in a connected component $C_i$. Let $B_a, B_b\in N(T_2(C_i))$ be two block nodes in the BC-Tree of the connected component $C_i$ such that $a\in V(B_a)$ and $b\in V(B_b)$. Consider the following path between $B_a$ and $B_b$ in $T_2(C_i)$, $B_a,c_1,B_1,c_2,\ldots,c_k,B_k,c_{k+1},B_b$ where $B_i$ and $c_i$ are block and cut nodes respectively, in $T_2(C_i)$ that appear on the path between $B_a$ and $B_b$ (see \cref{fig:ot}). We abuse the names of cut nodes to also denote the cut vertex's name. Insertion of such an edge, i.e, $\es[+]{1}{2}$ leads to the blocks $B_a,B_1,\ldots,B_k,B_b$ coalescing into one block, call it $B_{ab}$. In the triconnected decomposition of $B_{ab}$ a new cycle component is introduced that consists of the edge $\{a,b\}$ and virtual edges between the consecutive cut vertices $c_i,c_{i+1}$, $i\in[k]$. See Figure~\ref{fig:otsp}. We defer the proof of the following lemma to Section~\ref{sec:app:graphThApp}.
\begin{restatable}{lemma}{planBTest}\label{lem:planBTest}
	$G+\{\{a,b\}\}$ is planar if and only if
	%\begin{alphaenumerate}
		(a) $G[V(B_a)]+\{\{a,c_1\}\}$ is planar,
		(b) $G[V(B_b)]+\{\{c_{k+1},b\}\}$ is planar, and
		(c) for each $i\in [k]$, $G[V(B_i)]+\{\{c_i,c_{i+1}\}\}$ is planar. 
	%\end{alphaenumerate}
\end{restatable}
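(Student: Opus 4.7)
My plan is to prove both directions via the structural observation already flagged in the excerpt: inserting $\{a,b\}$ coalesces the blocks $B_a, B_1, \ldots, B_k, B_b$ along the BC-tree path into a single new block $B_{ab}$, whose triconnected decomposition gains a fresh cycle (S-)node $S^\ast$ on vertices $\{a, c_1, \ldots, c_{k+1}, b\}$, with real edge $\{a,b\}$ and virtual edges $\{a, c_1\}$, $\{c_i, c_{i+1}\}$ for $i \in [k]$, and $\{c_{k+1}, b\}$. Combined with Mac Lane's criterion (\cref{lem:macl}), planarity of $G + \{\{a,b\}\}$ reduces to planarity of each block, and the virtual edges of $S^\ast$ are precisely the auxiliary edges appearing in (a)--(c).

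For the forward direction, I will assume $G + \{\{a,b\}\}$ is planar and fix a planar embedding. Restricting it to a single block $B_i$ (with $1 \le i \le k$) yields a planar embedding of $B_i$; I claim $c_i$ and $c_{i+1}$ lie on a common face boundary in it. In $G + \{\{a,b\}\}$ there is a $c_i \sqline c_{i+1}$ walk that avoids the interior of $B_i$, namely $c_i \sqline a$ inside $B_a \cup B_1 \cup \cdots \cup B_{i-1}$, then the edge $\{a,b\}$, then $b \sqline c_{i+1}$ inside $B_b \cup B_k \cup \cdots \cup B_{i+1}$. Closing it with any $c_i \sqline c_{i+1}$ path inside $B_i$ gives a Jordan curve $\gamma$; since $B_i$ meets the rest of $G + \{\{a,b\}\}$ only at the two cut vertices $c_i, c_{i+1}$, the interior of $B_i$ sits entirely in one of the two regions delimited by $\gamma$, so the face of $B_i$'s embedding abutting the other region has both $c_i$ and $c_{i+1}$ on its boundary. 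Hence $\{c_i, c_{i+1}\}$ can be added planarly to $G[V(B_i)]$. The arguments for $B_a$ (with $\{a, c_1\}$) and $B_b$ (with $\{c_{k+1}, b\}$) are identical.

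For the reverse direction, I will take planar embeddings $\mathcal{E}_a, \mathcal{E}_1, \ldots, \mathcal{E}_k, \mathcal{E}_b$ of the augmented blocks $G[V(B_a)] + \{\{a,c_1\}\}$, $G[V(B_i)] + \{\{c_i,c_{i+1}\}\}$, and $G[V(B_b)] + \{\{c_{k+1},b\}\}$. Using \cref{obs:outerFace}, I orient each so that the added virtual edge lies on the outer face, then delete that virtual edge (which leaves $a$ and $c_1$, or $c_i$ and $c_{i+1}$, or $c_{k+1}$ and $b$, consecutive on the outer face boundary). I then glue these embeddings in sequence along the cut vertices $c_1, \ldots, c_{k+1}$ inside a single common outer region, and recursively glue the remaining blocks of $G$ (those not on the BC-tree path) at each cut vertex inside an incident face -- this is the standard construction of a planar embedding of a connected graph from planar embeddings of its blocks and the BC-tree. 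In the resulting planar embedding of $G$, the vertices $a$ and $b$ lie consecutively on the boundary of the common outer region, so $\{a,b\}$ can be drawn inside that region without crossings, producing the required planar embedding of $G + \{\{a,b\}\}$.

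The main obstacle is the Jordan-curve argument in the forward direction, specifically verifying that the $c_i \sqline c_{i+1}$ walk avoiding the interior of $B_i$ really exists and really is internally disjoint from $B_i$. This follows from two standard facts about BC-trees: two distinct blocks share at most one vertex, and that vertex is a cut vertex; and consecutive blocks on a BC-tree path meet exactly at the intervening cut node. Together these ensure $B_i \cap (B_a \cup B_1 \cup \cdots \cup B_{i-1}) = \{c_i\}$ and $B_i \cap (B_b \cup B_k \cup \cdots \cup B_{i+1}) = \{c_{i+1}\}$, making $\gamma$ well-defined.
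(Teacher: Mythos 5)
Your proof is correct and follows essentially the same route as the paper: the forward direction hinges on the same external $c_i\sqline c_{i+1}$ path through the new edge $\{a,b\}$ that avoids $B_i$ except at its endpoints (the paper finishes by contracting that path to an edge and invoking closure of planarity under minors, whereas you finish with a Jordan-curve/common-face argument -- a cosmetic difference), and the reverse direction is the same gluing of block embeddings with the relevant vertex pairs placed on outer faces. If anything, your write-up of the reverse direction is more complete than the paper's, which essentially stops after noting that each $B_i$ admits an embedding with $c_i,c_{i+1}$ on the outer face.
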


\section{Query Model and Auxiliary Relations}\label{sec:query}
In this section we formalize the query model and the auxiliary data stored. 

\paragraph*{Query Model}
We describe here what is meant by \emph{maintaining} a planar embedding.
Given a planar embedding of $G$, for every vertex in $G$, a clockwise cyclic order of its neighbour vertices as per the embedding is called the \emph{vertex rotation scheme}, and for every face in the planar embedding of $G$, the clockwise cyclic order of vertices on the face is called the \emph{face-vertex rotation scheme}. Maintaining a \emph{planar embedding} of the graph implies that we can answer the first query below, while maintaining an \emph{extended} planar embedding implies we can answer both the queries below:
\begin{romanenumerate}
	\item For a vertex $v_i\in V(G)$ do its neighbour vertices, $v_{i_1},v_{i_2},v_{i_3}\in V(G)$ appear in the clockwise cyclic ordering $v_{i_1},v_{i_2},v_{i_3}$ in the vertex rotation scheme?
	\item  Do three vertices $v_{i_1},v_{i_2},v_{i_3}\in V(G)$ lie on a common face in the clockwise cyclic order $v_{i_1},v_{i_2},v_{i_3}$ in the face-vertex rotation scheme?  
\end{romanenumerate}

\paragraph*{Auxiliary Relations}\label{subsec:auxRel}
We begin with a necessary definition. 
\begin{definition}[Coherent Path]\label{def:coherent}
A SPQR-tree path between two nodes $X_1,X_k$ is said to be a coherent path if
each $X_i$ on the path,  is either an R-node or  an S-node and
for every rigid component $R_i$ along the path, the two P-nodes
$P_i,P_{i+1}$ incident on $R_i$  satisfy that there exists an
embedding of the graph induced by $R_i$ such that the vertices
constituting the separating pairs $P_i,P_{i+1}$ lie on a 
common face of the embedding. 
\end{definition}

The following are the auxiliary relations that we maintain:
\begin{romanenumerate}
	\item Betweenness on BC-trees: this relation tells us if a particular node of a BC-tree lies in between two given nodes on a path on the tree.
	\item Betweenness on SPQR-trees: same as above for SPQR-trees.
	\item Extended planar embedding of the S,R-nodes of each SPQR-tree. This consists of maintaining the vertex rotation scheme and face-vertex rotation scheme restricted to vertices and faces of the triconnected components.
	\item For every coherent path of an SPQR-tree there is a 
              	two-colouring of the vertices of all the P-nodes along the path
		such that the two vertices of a P-node get different colours. See Figure~\ref{fig:twoCol}.  
	\item A vertex rotation scheme for one  planar embedding (of potentially exponentially many) for each B-node of all the BC-trees.
\end{romanenumerate}
\begin{figure}[h]
	\centering
	\includegraphics[width=0.5\textwidth]{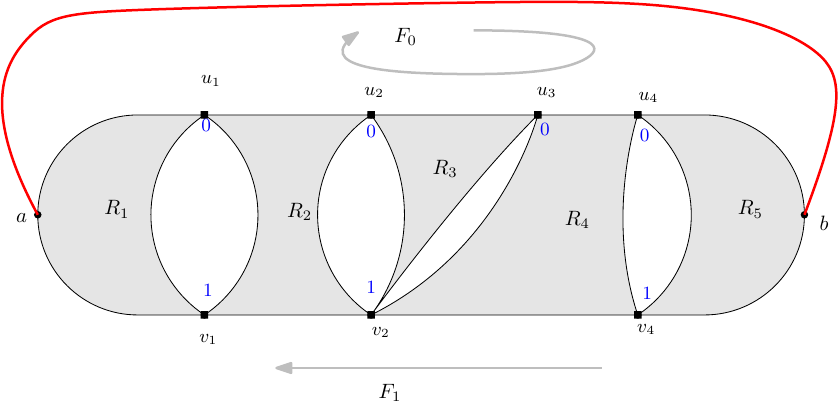}
	\caption{Two-colouring of P-node vertices.}
	\label{fig:twoCol}
\end{figure}

\section{Dynamic Biconnectivity and Triconnectivity}
\label{sec:app:triconn}
We know that connectivity can be maintained in $\DynFO$ due to \cite{PI}. That is we can query in $\FO$ at every step whether any two vertices are in the same connected component. We can enrich the connectivity relation to maintain connectivity in the graphs $G - \{u\}, G - \{u,v\}$ for all $u,v\in V(G)$ as follows. We can augment $\FO$ predicates used in the dynamic connectivity algorithm of \cite{PI} with two extra arguments $u,v$ to indicate that graph of interest is $G-\{u, v\}$ and in that any access to the input graph edge relation in the formula can be modified to ignore vertices $u, v$ and edges incident on them.        
   
Having access to connectivity relation in all these graphs we can maintain the BC-tree and SPQR-tree relations in $\FO$ by almost following their graph-theoretic definitions verbatim. For example, to test if a vertex $v$ is a cut vertex, we check if there are two vertices that are connected in $G$ and not connected in $G-\{v\}$. We further explain in detail other related relations to BC-tree and SPQR-tree. We also give brief justifications for their definability in $\FO$. 

\subsubsection*{BC-Tree related $\FO$ Primitives}
\subparagraph*{Cut vertex} A vertex $w$ is a cut vertex if there exist two vertices $u,v$ such that $u$ and $v$ are connected in $G$ but not in $G-\{w\}$. This can be decided by accessing the vertex connectivity relations in $G$ and $G-\{w\}$. Since we know that connectivity is in $\DynFO$ it follows that we can maintain in $\FO$ whether a vertex is a cut vertex or not.

\subparagraph*{Block} Two vertices are in the same block in the biconnected decomposition of the graph iff they remain connected despite removal of any one vertex from the graph. So $u$ and $v$ are in the same block iff $u$ and 
$v$ are connected $G-\{w\}$ for all $w\notin \{u,v\}$.

\subparagraph*{Block name} We know that no two vertices can belong 
to more than one block, i.e, there is a unique block that 
two biconnected vertices lie in. As a consequence, we can 
identify each block by the lexicographically least ordered 
pair of vertices $(a,b)$ that lie in it. So, a pair of 
vertices $(a,b)$ is a block name iff (1) $a$ and $b$ lie 
in a common block and (2) amongst all pairs of vertices 
$(u,v)$ such that $u$ and $v$ are biconnected and lie in 
the same block as $(a,b)$, $(a,b)$ is the 
lexicographically least one. Clearly both (1) and (2) are 
$\FO$ definable given that \emph{Block} is $\FO$ definable.

\subparagraph*{Betweenness on BC-tree} 
Given two blocks $B_1,B_2$ by their unique identifiers and a cut vertex $u$, we can decide whether $u$ lies in between the $B_1\sqline B_2$ BC-tree path as follows. If there exist two vertices $v$ and $w$ such that $v\in V(B_1)$ and $w\in V(B_2)$ and $v,w$ are connected in $G$ but not in $G-\{u\}$. Similarly, we can decide betweenness for any triple $(*_1,*_2,*_3)$, i.e, whether $*_2$ lies in between $*_1\sqline*_3$ BC-tree path (each of $*_1,*_2,*_3$ could be either cut vertex or a block name).   

From the above discussion we obtain the following:
\begin{lemma}\label{lem:BC}
	The BC-tree of each connected component of an undirected graph along with the betweenness relation can be maintained in $\DynFO$ under edge updates.
\end{lemma}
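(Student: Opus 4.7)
\medskip

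\noindent\textbf{Proof proposal for \cref{lem:BC}.} The plan is to bootstrap from the reachability result of \cite{DKMSZ,PI}, which places undirected connectivity in $\DynFO$, and then define every BC-tree primitive and the betweenness relation by first-order formulas over an enriched connectivity relation. First I would maintain, in addition to the standard connectivity predicate $\Reach(u,v)$ on $G$, the four-ary relation $\Reach_2(x,y;u,v)$ expressing reachability between $x$ and $y$ in $G-\{u,v\}$ (which subsumes the ternary deletion predicate by taking $u=v$). The key observation is that the dynamic program of \cite{PI} for undirected reachability is parametric in the edge relation: by replacing every occurrence of $E(p,q)$ in its update formulas with $E(p,q)\wedge p\neq u\wedge p\neq v\wedge q\neq u\wedge q\neq v$ we obtain an $\FO$ update program for $\Reach_2$ whose auxiliary relations are just the old ones carrying two extra parameters. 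Since edge updates to $G$ translate verbatim into parameterised updates, $\Reach_2$ lies in $\DynFO$.

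Given $\Reach$ and $\Reach_2$ as auxiliary relations, I would then define all the BC-tree predicates by static $\FO$ formulas, so no further dynamic maintenance work is required. A vertex $w$ is a cut vertex iff $\exists u,v\colon \Reach(u,v)\wedge\neg\Reach_2(u,v;w,w)\wedge u\neq w\wedge v\neq w$. Two vertices $u,v$ lie in the same block iff $u=v$ or they are adjacent or $\forall w\,(w\neq u\wedge w\neq v \rightarrow \Reach_2(u,v;w,w))$. For the canonical block name I would pick the lexicographically least ordered pair $(a,b)$ with $a,b$ in the block; this is $\FO$-definable using the two-variable existential quantifier and the available linear order on the domain (implicit in the $\DynFO$ setting). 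A cut vertex $c$ belongs to block $(a,b)$ iff $c$ is a cut vertex and lies in the same block as both $a$ and $b$; this yields the edge relation of the BC-tree.

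For the betweenness relation I would use the observation that the cut vertices on the BC-tree path between two nodes $X_1$ and $X_3$ are precisely those vertices whose removal disconnects some representative vertex of $X_1$ from some representative vertex of $X_3$. Concretely, a cut vertex $c$ lies on the $B_1\sqline B_2$ path iff $\exists v\in V(B_1)\,\exists w\in V(B_2)\colon \Reach(v,w)\wedge\neg\Reach_2(v,w;c,c)$, which is first-order given $\Reach,\Reach_2$ and the block-membership predicate. Betweenness for triples whose first or last coordinate is a cut vertex (rather than a block name), and for the middle coordinate being a block, reduces to the same idea: a block $B_2$ lies between $X_1$ and $X_3$ iff for some pair of cut vertices $c,c'$ incident to $B_2$ the removal of $c$ (respectively $c'$) disconnects a representative of $X_1$ (respectively $X_3$) from $V(B_2)$. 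Each case is a bounded-quantifier $\FO$ formula over the maintained relations.

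The only subtle step I foresee is ensuring that the augmentation of the \cite{PI} program carrying two parameters really yields an $\FO$ update program on the \emph{same} polynomial-size auxiliary schema: this works because the parameters act purely as universally treated constants in the update formulas, so the polynomial blow-up of the arity is absorbed into the auxiliary schema without changing the quantifier structure of the updates. Everything else is static $\FO$ post-processing, which composes with $\DynFO$ updates to keep the whole construction in $\DynFO$.
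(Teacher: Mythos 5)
Your proposal is correct and follows essentially the same route as the paper: maintain a parameterised connectivity relation for $G-\{u,v\}$ by modifying the edge-relation accesses in the dynamic connectivity program of Patnaik--Immerman, then define cut vertices, blocks, lexicographically-least block names, and BC-tree betweenness by static $\FO$ formulas over these relations (with betweenness witnessed by representatives that are disconnected upon removing the candidate cut vertex). The paper's own argument is just a terser version of the same construction.
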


\subsubsection*{SPQR-Tree related $\FO$ Primitives}
Before we describe SPQR-tree related primitives, we make a note here that we only consider $3$-connected separating pairs for the purpose of computing the triconnected decomposition of a graph following~\cite{DLNTW,HR2}, unlike the the construction of Hopcroft and Tarjan~\cite{HT73}.

\subparagraph*{Separating Pair}
A pair vertices, $\{s,t\}$ form a $3$-connected separating pair iff (1) $s$ and $t$ are $3$-connected, i.e., there are $3$-vertex disjoint paths between $s$ and $t$ and (2) there exist two vertices $u,v\notin\{s,t\}$ such that $u$ and $v$ are connected in $G$, $G-\{s\}$ and $G-\{t\}$ but not in $G-\{s,t\}$. (1) can be checked in $\FO$ using connectivity queries and Menger's Theorem~\cite[Theorem 3.3.5]{Diestel} as follows. (1) is true iff $s$ and $t$ are connected in $G-\{u,v\}$ for all vertices $u,v\notin\{s,t\}$. (2) is simply checking connectivity in three graphs.           

\subparagraph*{Triconnected components and names} 
Recall that both rigid and cycle components are called triconnected components. A triple of vertices $a,b$ and $c$ belong to a common triconnected component iff $a,b$ and $c$ are connected in $G-\{s,t\}$ for all separating pairs $\{s,t\}$. Moreover, to further decide whether the common triconnected component is a rigid component (R) we can just check that $a,b$ and $c$ are connected in $G-\{p,q\}$ for all pairs of vertices $\{p,q\}$ (not just the separating pairs). Otherwise, the common triconnected component is a cycle S component. Also, three vertices belong to at most one common triconnected component. As a consequence, we can uniquely identify each triconnected component by the lexicographically least ordered triple of vertices $(u,v,w)$ that lie in it. So, a triple of vertices $(u,v,w)$ is a triconnected component name if (1) $u,v$ and $w$ are in the same triconnected component and (2) amongst all $(x,y,z)$ that lie in the same triconnected component as $(u,v,w)$, $(u,v,w)$ is the lexicographically the least one. 

\subparagraph*{Betweenness on SPQR-tree} Given two triconnected nodes $W_1,W_2$ by their names and a separating pair $P_i$, we can decide whether $P_i$ lies in between the 
$W_1\sqline W_2$ SPQR-tree path as follows. If there exist two vertices $u,v$ such that $u\in V(W_1)$ and $v\in V(W_2)$ and $u$ and $v$ are connected in $G$ but not in $G-V(P_i)$. Similarly, betweenness for any three arbitrary nodes (S, P, R nodes) of the SPQR-tree can be decided.   

From the above discussion we obtain the following.
\begin{lemma}\label{lem:SPQR}
	The SPQR-tree of each biconnected component of an undirected graph
	along with the betweenness relation maintained in $\DynFO$ under edge updates.
\end{lemma}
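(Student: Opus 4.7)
The plan is to bootstrap on the $\DynFO$ maintenance of undirected connectivity due to Patnaik and Immerman~\cite{PI}, and then compile each ingredient of the SPQR-tree as an $\FO$ formula over a suitably parametrized connectivity relation, exactly following the primitives enumerated just above the statement.

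First I would maintain in $\DynFO$ the parametrized relation $\Reach_{u,v}(x,y)$, which expresses that $x,y$ lie in the same connected component of $G - \{u,v\}$, for all pairs $u,v$ of vertices. As in the BC-tree discussion, this is obtained by relativizing the $\FO$ update program of~\cite{PI}: the two extra variables $u,v$ are carried through as passive parameters, and every occurrence of the edge relation is rewritten to ignore vertices $u,v$ and edges incident on them. Connectivity in $G$ and in $G-\{w\}$ is recovered as a special case by instantiating $u=v$ appropriately.

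Second, with $\Reach_{u,v}$ available, every SPQR-tree primitive becomes an $\FO$ formula over it. The $3$-connected separating-pair predicate conjoins Menger-style $3$-connectedness of $\{s,t\}$ (namely, that $s,t$ remain connected in $G-\{p,q\}$ for every other pair $\{p,q\}$) with the existence of a witness pair whose connectivity in $G$ is destroyed only by deleting both of $s,t$. A triple $(a,b,c)$ lies in a common triconnected component iff it stays connected in $G-\{s,t\}$ for every separating pair $\{s,t\}$; the component is an R-node iff the same holds for arbitrary pairs $\{p,q\}$, and otherwise it is an S-node. Each triconnected component is then named by the lex-least such triple, a formula with one more block of quantifiers. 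P-nodes are the separating pairs themselves, and SPQR-tree adjacency between a P-node $\{s,t\}$ and a triconnected node named $(a,b,c)$ is the $\FO$ condition that both $s$ and $t$ lie in the triconnected component containing $a,b,c$.

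Third, the betweenness relation is definable in $\FO$ over the same parametrized connectivity: a P-node $\{s,t\}$ lies between two SPQR-tree nodes $W_1,W_2$ iff some $u\in V(W_1)$ and $v\in V(W_2)$ are connected in $G$ but disconnected in $G-\{s,t\}$, and betweenness with a triconnected node as the midpoint reduces to conjunctions of two such P-node betweenness checks along the incident P-nodes. The one delicate point is ruling out interference from the \emph{virtual edges} of the triconnected decomposition: the definitions above reason only about connectivity in $G$ itself, while the restriction to $3$-connected separating pairs (as in~\cite{DLNTW,HR2}) is precisely what guarantees that removing the vertex set of a P-node genuinely disconnects the vertices of triconnected components on opposite sides of the SPQR-tree. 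Composing these $\FO$ definitions with the $\DynFO$-maintained parametrized connectivity relation yields the lemma.
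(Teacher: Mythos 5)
Your proposal matches the paper's argument essentially step for step: relativize the Patnaik--Immerman connectivity program to $G-\{u,v\}$ with $u,v$ as passive parameters, then define $3$-connected separating pairs via Menger plus a witness pair, name triconnected components by lex-least triples, distinguish R- from S-nodes by testing connectivity under removal of arbitrary pairs, and read off betweenness from connectivity in $G$ versus $G-V(P_i)$. Your closing remark about restricting to $3$-connected (non-interlacing) separating pairs is exactly the caveat the paper also makes before introducing these primitives, so there is nothing to add.
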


\section{Dynamic Planar Embedding: Algorithm Overview}\label{sec:algoOverview}
Our idea is to maintain planar embeddings of all triconnected components (S-nodes and R-nodes) of the graph and use those to find the embedding of the entire graph. Insertions and deletions of edges change the triconnected components of the graph, i.e., a triconnected component might decompose into multiple triconnected components or multiple triconnected components may coalesce together to form a single one. The same is true of biconnected components, i.e., a biconnected component might decompose into multiple biconnected components or multiple triconnected components may coalesce together to form a single biconnected component. 

We discuss here how we update the embeddings of the triconnected components under insertions and deletions, assuming that we have the SPQR-tree and BC-tree relations available at every step (which we have indicated how to maintain in Section~\ref{sec:app:triconn}).

Some of the edge insertions/deletions are easier to describe, for example if the edge is being inserted in a rigid component then only the embedding of that rigid component has to change to reflect the presence of the new edge and introduction of two new faces. Thus, we first establish some notation to differentiate between classes of edges for ease of exposition.
\begin{definition}
	A  graph is \emph{\maxcon~$i$-connected} if it is $i$-connected but is not $i+1$-connected for $i\in\{0,1,2\}$. For $i=3$, a graph is \maxcon~$i$-connected if the graph is $3$-connected. 
\end{definition}
\begin{definition}\label{def:edgeType}
	The \emph{type} of an edge is $\es[\sigma]{i}{j}$ where $i,j\in\{0,1,2,3\}$ and $\sigma\in\{+,-\}$ such that
	\begin{itemize}
		\item $\sigma=+$ if the edge is being inserted into $G$. $\sigma=-$ if the edge is being deleted.
		\item both the endpoints are in a common \maxcon~$i$-connected component before the change and in a common \maxcon~$j$-connected component after the change.
	\end{itemize}
\end{definition} 

In Table~\ref{tab:algo}, we summarize which type of edge update affects each of the auxiliary relations. For a description of the auxiliary relations see Section~\ref{subsec:auxRel} and for a definition of the edge types see Definition~\ref{def:edgeType}. The table is intended to serve as a map to navigate the different parts of the algorithm.
\begin{table}[h]
\centering
\resizebox{\columnwidth}{!}{%
\begin{tabular}{|c|c|c|c|c|c|c|c|c|}
\cline{1-9}
\multicolumn{1}{|c|}{AuxData} & \multicolumn{8}{|c|}{Type of Edge (Definition~\ref{def:edgeType})}\\
\cline{2-9}
(Section~\ref{subsec:auxRel}) & $\es[+]{0}{1}$ & $\es[+]{1}{2}$ & $\es[+]{2}{3}$ & $\es[+]{3}{3}$ & $\es[-]{1}{0}$ & $\es[-]{2}{1}$ & $\es[-]{3}{2}$ & $\es[-]{3}{3}$ \\
\cline{1-9}
BC-Tree & \ref{subsec:pOut}(\ref{it:pzo}) &\ref{subsec:pOut}(\ref{it:pot}),\ref{subsec:pot} & & &\ref{subsec:mOut}(\ref{it:moz}) &\ref{subsec:mOut}(\ref{it:mto}),\ref{subsec:mto} & & \\
\cline{1-9}
SPQR-Tree &\ref{subsec:pOut}(\ref{it:pzo})  &\ref{subsec:pOut}(\ref{it:pot}),\ref{subsec:pot} & \ref{subsec:pOut}(\ref{it:ptth}),\ref{subsec:ptth} & & \ref{subsec:mOut}(\ref{it:moz})  &\ref{subsec:mOut}(\ref{it:mto}),\ref{subsec:mto} & \ref{subsec:mOut}(\ref{it:mtht}),\ref{subsec:mtht} & \\
\cline{1-9}
S,R-Embed & &\ref{subsec:pOut}(\ref{it:pot}),\ref{subsec:pot} & \ref{subsec:pOut}(\ref{it:ptth}), \ref{subsec:ptth} & \ref{subsec:pOut}(\ref{it:pthth}), \ref{subsec:pthth}& & \ref{subsec:mOut}(\ref{it:mto}),\ref{subsec:mto}& \ref{subsec:mOut}(\ref{it:mtht}),\ref{subsec:mtht}&\ref{subsec:mOut}(\ref{it:mthth}),\ref{subsec:mthth} \\
\cline{1-9}
Two-Coloring & &\ref{subsec:pOut}(\ref{it:pot}),\ref{subsec:cpot} & \ref{subsec:pOut}(\ref{it:ptth}),\ref{subsec:cptth} & \ref{subsec:pOut}(\ref{it:pthth}),\ref{subsec:cpthth} & &\ref{subsec:mOut}(\ref{it:mto}),\ref{subsec:cmto} & \ref{subsec:mOut}(\ref{it:mtht}),\ref{subsec:cmtht} & \ref{subsec:mOut}(\ref{it:mthth}),\ref{subsec:cmthth} \\
\cline{1-9}
B-Embed & & \ref{subsec:embB},\ref{subsec:blockEmbed} & \ref{subsec:embB},\ref{subsec:blockEmbed} & \ref{subsec:embB},\ref{subsec:blockEmbed}& & \ref{subsec:embB},\ref{subsec:blockEmbed} & \ref{subsec:embB},\ref{subsec:blockEmbed} & \ref{subsec:embB},\ref{subsec:blockEmbed} \\
\cline{1-9}
\end{tabular}
}
\caption{AuxData affected by type. The (hyper-linked) entries refer to the algorithm overview (Section~\ref{sec:algoOverview}) and the detailed description (Sections~\ref{sec:triChange},~\ref{sec:coloring}, \ref{sec:embed}).}
\label{tab:algo}
\end{table}

In the next two subsections we outline the updates in the triconnected planar embedding relations as well as the two-colouring relations which are described in complete detail in Sections~\ref{sec:triChange} and~\ref{sec:coloring}.
\begin{figure}[ht]
	%\centering
	\begin{subfigure}{0.45\textwidth}
		%\centering
		\includegraphics[width=\textwidth]{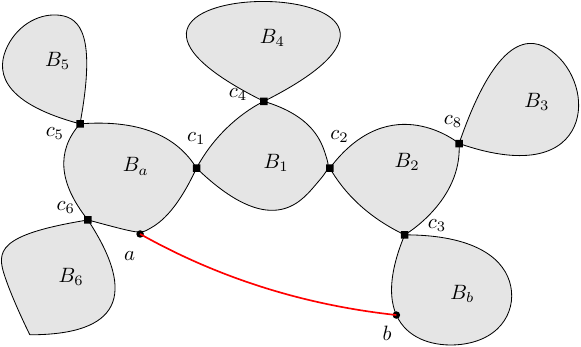}
		\caption{}
		\label{fig:otg}
	\end{subfigure}
	%\hspace*{3mm}
	\begin{subfigure}{0.45\textwidth}
		%\centering
		\includegraphics[width=\textwidth]{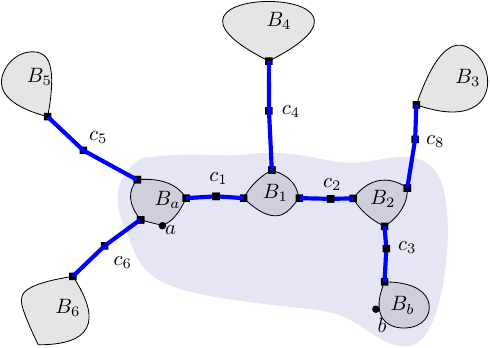}
		\caption{ }
		\label{fig:otbc}
	\end{subfigure}
%\end{figure} 
%\begin{figure}
	%\hspace*{3mm}
	\begin{subfigure}{0.40\textwidth}
		%\centering
		\includegraphics[width=\textwidth]{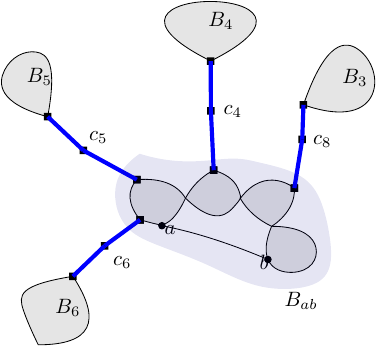}
		\caption{ }
		\label{fig:otbca}
	\end{subfigure}
	\hfill
	%\hspace*{3mm}
	\begin{subfigure}{0.45\textwidth}
		%\centering 
		\includegraphics[width=\textwidth]{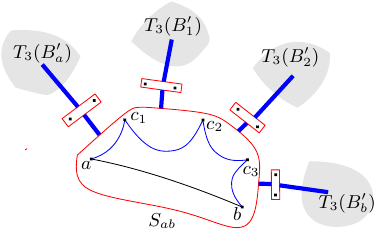}
		\caption{ }
		\label{fig:otsp}
	\end{subfigure}
	\caption{$\es[+]{1}{2}$ edge $\{a,b\}$ insertion. (\ref{fig:otg}) To-be-inserted edge is highlighted in red. (\ref{fig:otbc}) The path between block nodes $B_a$ and $B_b$ is highlighted in the BC-tree. (\ref{fig:otbca}) The BC-tree of the connected component $CC_i$. $B_a\sqline B_b$ path has shrunk to a single node $B_{ab}$ in the BC-tree. (\ref{fig:otsp}) The SPQR-tree of the block $B_{ab}$ contains a cycle component $S_{ab}$ made of the inserted edge and the new virtual edges that are introduced after its insertion.}
	\label{fig:ot}
\end{figure} 

\subsection{Edge insertion}\label{subsec:pOut}
Find the type $\es[+]{i}{j}$ of the inserted edge $\{a,b\}$ (where 
$i,j \in \{0,1,2,3\}$). This can be done using 
Lemmas~\ref{lem:BC},~\ref{lem:SPQR}. 
Depending on the type we branch to one of the following options:

\begin{alphaenumerate}
	\item $\es[+]{0}{1}$\label{it:pzo} This case affects only the BC-tree and SPQR-tree relations. Embedding and colouring relations remain unaffected.
	\item $\es[+]{1}{2}$\label{it:pot}: 
	In this case, both the endpoints $a,b$ are in the same 
	connected component $C$ but not in the same 
	biconnected component $B_a, B_b$ (see 
	Figure~\ref{fig:otg}). On this insertion, the BC-tree 
	of the connected component changes. All the 
	biconnected components on the $B_a\sqline B_b$ path 
	in the BC-tree of $C$ coalesce into one biconnected 
	component $B_{ab}$ (see \cref{fig:otbca}). In 
	the BC-tree of the connected component, for each pair of consecutive cut-vertices on the $B_a\sqline B_b$ path, a virtual edge is 
	inserted in the biconnected component shared by the 
	cut-vertex pair. This edge is a $\es[+]{2}{3}$ or 
	$\es[+]{3}{3}$ edge and is handled below. Since the 
	biconnected components involved are distinct, all 
	these edges can be simultaneously inserted.
	In addition, a cycle component is introduced for which 
	the face-vertex rotation scheme is computed via the 
	betweenness relation in the BC-tree (see \cref{fig:otsp}). The two-colouring relation is updated by extending the colouring of old biconnected components across the new cycle component (in \cref{fig:otsp}, a coherent path from a P-node in $T_3(B'_a)$ to a P-node in $T_3(B'_2)$ will have to go via an S-node $S_{ab}$). 

	\item $\es[+]{2}{3}$\label{it:ptth}: Both $a,b$ are in the same biconnected component, say $B_i$, but not in the same rigid component. Let $a\in V(R_a)$ and $b\in R_b$, where $R_a$ and $R_b$ are two R-nodes in the SPQR-tree of $B_i$. The SPQR-tree of the biconnected component changes after the insertion as follows. All the rigid components on the $R_a\sqline R_b$ SPQR-tree path coalesce into one rigid component (see Figures~\ref{subfig:two3b},~\ref{subfig:two3spqr}). The embedding of the coalesced rigid component is obtained by combining the embeddings of the old triconnected components that are on the $R_a\sqline R_b$ path, with their correct orientation computed from the two-colouring of the separating pairs for the corresponding coherent path. To update the two-colouring of the separating pair vertices, first we discard those old coherent paths and their two-colouring, that are no longer coherent as result of the insertion of $\{a,b\}$. While for the subpaths of the old coherent paths that remain coherent we obtain their two-colouring from that of the old path by ignoring colourings of old P-nodes on the $R_a\sqline R_b$ path.  
  
	\item $\es[+]{3}{3}$\label{it:pthth}: In this case, both the vertices are in the same $3$-connected component. Due to this insertion, connectivity relations do not change. We identify the unique common face in the embedding of the $3$-connected component that the two vertices lie on. We split the face into two new faces with the new edge being their common edge, i.e, the face vertex rotation scheme of the old face is split across the new edge. In the vertex rotation scheme of the two vertices, we insert the new edge in an appropriate order.  
\end{alphaenumerate}

\subsection{Edge deletion}\label{subsec:mOut}
Find the type of the edge $\{a,b\}$ that is being deleted. Depending on the type we branch to one of the following options:
\begin{alphaenumerate}
	\item $\es[-]{1}{0}$\label{it:moz}: This case occurs when both $a,b$ are in the same connected component but not in the same biconnected component. The edge is 
a cut edge and only BC-tree and SPQR-tree relations are affected.
	\item $\es[-]{2}{1}$\label{it:mto}: In this case, both vertices are in the same biconnected component but not in the same $3$-connected component before the change. This is the inverse of the insertion operation of an edge of the type $\es[+]{1}{2}$. The biconnectivity of the old biconnected component changes and it unfurls into a path consisting of multiple biconnected components that are connected via new cut vertices (in \cref{fig:otbca} if the edge $\{a,b\}$ is deleted, the block $B_{ab}$ will unfurl into the highlighted path in \cref{fig:otbc}). Old virtual edges that are still present in the new biconnected components are deleted. These virtual edge deletions are of type $\es[-]{3}{3}$ or $\es[-]{3}{2}$ which we describe how to handle in the following cases.

\item $\es[-]{3}{2}$\label{it:mtht}: In this case $a,b$ are in the same rigid component $R_{ab}$ (see Figure~\ref{subfig:two3a}) before the change. But after the deletion of $\{a,b\}$, $R_{ab}$ decomposes into smaller triconnected components that are unfurled into a path in the SPQR-tree (see \cref{subfig:two3spqr}).  
We compute an embedding of the triconnected fragments of $R_{ab}$ using the embedding of $R_{ab}$. From the updated embedding of $R_{ab}$, we compute the two-colouring of the separating pair vertices pertaining to the unfurled path on an SPQR-tree. We compose the obtained colouring with the rest of $B_{ab}$. The two-colouring of the elongated coherent paths is made consistent by flipping the colouring of the P-nodes of one subpath, if necessary. %For details refer to Section~\ref{subsec:mtht}.
\item $\es[-]{3}{3}$\label{it:mthth}: In this case both $a,b$ are in the same $3$-connected component before and after the deletion of $\{a,b\}$. Connectivity relations (connectedness, bi/tri-connectivity) remain unchanged. The two faces in the embedding of the $3$-connected component that are adjacent to each other via the to-be-deleted edge are merged together to form a new face. From the vertex rotation schemes of the two vertices, the edge is omitted. Pairs of coherent paths that merge due to the merging of two faces have to be consistently two-coloured.
\end{alphaenumerate} 
Notice that we do not describe changes of types $\es[+]{0}{2},\es[+]{0}{3}, \es[+]{1}{3},\es[-]{3}{0},\es[-]{3}{1},$ $ \es[-]{2}{0}$, since edge insertion or deletion only changes the number of disjoint paths between any two vertices at most by one and thus these type of edge change are not possible. Also, we omit the changes of type $\es[+]{2}{2}$ and $\es[-]{2}{2}$ (related to S-nodes) as they are corner cases that we discuss in Section~\ref{subsec:cptt}.

\begin{figure}[ht]
	\centering
	\begin{subfigure}{0.42\textwidth}
		\centering
		\includegraphics[width=\textwidth]{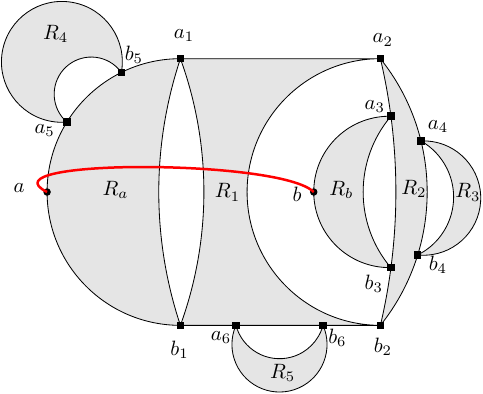}
		\caption{ }
		\label{subfig:two3b}
	\end{subfigure}
	\hspace*{2mm}
	\begin{subfigure}{0.53\textwidth}
		\centering
		\includegraphics[width=\textwidth]{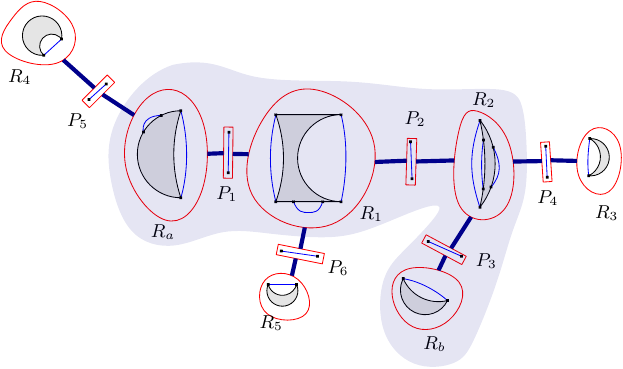}
		\caption{ }
		\label{subfig:two3spqr}
	\end{subfigure}
	\hspace*{2mm}
	\begin{subfigure}{0.5\textwidth}
		\centering
		\includegraphics[width=\textwidth]{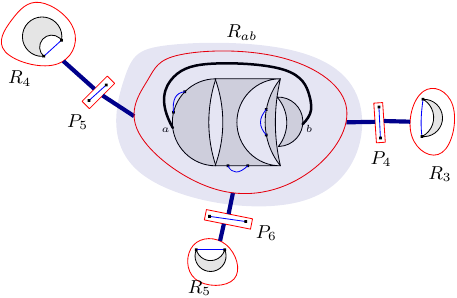}
		\caption{ }
		\label{subfig:two3a}
	\end{subfigure}
	\caption{A $\es[+]{2}{3}$ edge $\{a,b\}$ insertion. (\ref{subfig:two3b}) To-be-inserted edge is highlighted in red. (\ref{subfig:two3spqr}) The SPQR-tree of block $B_i$ before insertion. The path between R-nodes $R_a$ and $R_b$ is highlighted. (\ref{subfig:two3a}) SPQR-tree of $B_i$ after insertion. The $R_a\sqline R_b$ path has shrunk to a single node $R_{ab}$.}
	\label{fig:two3}
\end{figure} 

\subsection{Embedding the biconnected components}\label{subsec:embB}
For each biconnected component, we maintain an embedding in the form of the vertex rotation scheme for its vertices. Note that we do not need to maintain the face-vertex rotation scheme for the biconnected components. The relevant updates for maintaining the biconnected components embedding are edge changes of type $\es[+]{1}{2},\es[+]{2}{3},\es[+]{3}{3},\es[-]{3}{3},\es[-]{3}{2}$, and $\es[-]{2}{1}$. For a $\es[+]{2}{3}$ change, the part of the block that becomes triconnected, after the insertion of the edge, we patch together its vertex rotation scheme with unchanged fragments of the existing vertex rotation scheme in the affected block. For example, in~\cref{fig:two3}, after the insertion of edge $\{a,b\}$, R-nodes $R_a, R_1, R_2$ and $R_b$ coalesce together to form $R_{ab}$ (part of the block that has become triconnected). The embedding of the unchanged fragments of the block, viz., $R_3, R_4$ and $R_5$ is patched together at the appropriate P-nodes with the updated embedding of $R_{ab}$ to update the embedding of the entire block.  In a $\es[+]{1}{2}$ change, multiple biconnected components coalesce together. First, we update the vertex rotation scheme of each block by inserting a required virtual edge and then patch together the vertex rotation schemes at the old cut vertices. For a $\es[-]{3}{2}$ change, we splice in new virtual edges, that arise as a result of the change, in appropriate places in the vertex rotation scheme of the affected block. Details of the updates required in all types of edge changes are described in~\cref{subsec:blockEmbed}.

\subsection{Embedding the entire graph}\label{subsec:embG}
Assuming that we have planar embedding (in terms of the vertex rotation scheme) of each biconnected component of the graph we can compute a planar embedding of the whole graph as follows. For the vertices that are not cut vertices, their vertex rotation scheme is the same as their vertex rotation scheme in the embedding of their respective blocks that they belong to. For a cut vertex $v$, we join together its vertex rotation schemes from each block $v$ belongs to. We only need to take care that in the combined vertex rotation scheme of $v$, its neighbours in each block appear together and are not interspersed.

We now complete the proof of the main theorem modulo exposition that we defer to the Sections~\ref{sec:triChange},~\ref{sec:coloring} and \ref{sec:embed}.
\begin{restatable}{lemma}{triChange}\label{lem:triChange}
%The planar embeddings of all cycle, rigid components can be updated in 
%$\DynFO$ under edge changes of type: $\es[+]{1}{2}, \es[+]{2}{3}, \es[+]{3}{3}, \es[-]{1}{2}, \es[-]{2}{3}, \es[-]{3}{3}$.
The extended planar embeddings of all cycle, rigid components and a planar embedding for each biconnected component of the graph can be updated in 
$\DynFO$ under edge changes of type: $\es[+]{1}{2}, \es[+]{2}{3}, \es[+]{3}{3},\es[+]{2}{2},\es[-]{2}{2}, \es[-]{2}{1}, \es[-]{3}{2}, \es[-]{3}{3}$.  
\end{restatable}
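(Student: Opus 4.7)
The plan is to proceed by case analysis on the eight edge types listed and, for each, to exhibit a first-order formula that computes the updated vertex rotation and face-vertex rotation schemes from the current input graph, the BC-tree and SPQR-tree relations (available by \cref{lem:BC} and \cref{lem:SPQR}), the previous embeddings, and the previous two-colourings. Since the two-colouring itself is updated separately in \cref{sec:coloring}, I would take its correctness as given here and focus purely on the rotation schemes.

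First I would dispose of the $3$-connected-preserving cases $\es[+]{3}{3}$ and $\es[-]{3}{3}$. By \cref{lem:ThToThPlanCrit} the insertion $\{a,b\}$ is legal iff the stored face-vertex rotation of the affected rigid component exhibits a common face $F$ of $a$ and $b$; given such an $F$, the update splits $F$ into two new faces sharing $(a,b)$ and inserts $(a,b)$ into the rotations of $a$ and $b$ at positions uniquely determined by the rotation around $F$. Deletion is the inverse: merge the two faces incident on $(a,b)$ and drop $(a,b)$ from the two vertex rotations. All of these are local edits to $F$ and expressible in $\FO$ via the triplet-based representation discussed in \cref{sec:technical}.

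Next I would address the cases that require coalescing or splitting of rigid components, namely $\es[+]{2}{3}$ and $\es[-]{3}{2}$. For $\es[+]{2}{3}$, by \cref{lem:ptest} the edge is admissible exactly when the SPQR-tree path $R_a \sqline R_b$ is coherent, so the corresponding two-colouring is already stored; by \cref{lem:p2Col} this colouring partitions the separating-pair vertices along the path into the two new faces $F_0, F_1$ of the coalesced rigid component $R_{ab}$. The face-vertex rotation scheme of $R_{ab}$ is then assembled by concatenating, along the coherent path, the arc of the boundary of the distinguished face of each intermediate $R_i$ that contains $V(P_i) \cup V(P_{i+1})$, orienting each $R_i$ according to its colour class; the interior vertex rotation of each $R_i$ is either preserved or globally reflected accordingly. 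Case $\es[-]{3}{2}$ is the inverse: the face-vertex rotation of the old $R_{ab}$ restricts to each fragment once the new virtual edges are spliced in place of $(a,b)$. I expect this to be the main obstacle: bookkeeping the correct orientation of each coalescing component in $\FO$ is precisely what the two-colouring machinery was designed to support, and correctness of the first-order update will hinge on \cref{lem:p2Col} together with the triplet representation's ability to reverse arbitrarily many neighbour intervals simultaneously.

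Finally I would handle the BC-tree-affecting cases and the S-node corner cases. For $\es[+]{1}{2}$, \cref{lem:planBTest} reduces the insertion of $\{a,b\}$ to parallel virtual-edge insertions, one per block on the BC-path, each of type $\es[+]{2}{3}$ or $\es[+]{3}{3}$, together with the creation of a new cycle component $S_{ab}$ whose cyclic order is read off the BC-tree betweenness relation as in \cref{fig:otsp}. Since these insertions lie in pairwise distinct blocks, they are independent and a single $\FO$ formula handles them in one shot. The deletion $\es[-]{2}{1}$ is symmetric, and the S-node corner cases $\es[+]{2}{2}$ and $\es[-]{2}{2}$ amount to splitting or merging a cycle at a chord, which is again a local edit. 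The block-level vertex rotation is then obtained by patching the updated triconnected rotations at each P-node, once more using the triplet representation to splice interval-wise rather than pointer-wise. Putting the eight cases together yields the claimed $\DynFO$ maintenance.
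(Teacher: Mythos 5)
Your proposal is correct and follows essentially the same route as the paper: classify the change by its type via the maintained BC-tree and SPQR-tree relations, filter insertions through the planarity criteria of Lemmas~\ref{lem:ThToThPlanCrit}, \ref{lem:ptest} and \ref{lem:planBTest}, use the stored two-colouring (whose maintenance is deferred to Section~\ref{sec:coloring}) together with Lemma~\ref{lem:p2Col} to fix the orientation of coalescing rigid components, and realise each local edit with the triplet-based $\FO$ primitives of Section~\ref{sec:planarFO}, patching block-level rotations at P-nodes exactly as in Section~\ref{subsec:blockEmbed}. Your case analysis mirrors the paper's Sections~\ref{sec:triChange} and~\ref{subsec:blockEmbed}, so no further comparison is needed.
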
    
The above lemma combined with Section~\ref{sec:embed} allows us to prove the main theorem:
\main*
\begin{proof}
If an edge insertion is within a $3$-connected component the criterion in
Lemma~\ref{lem:ThToThPlanCrit} tells us when the resulting component becomes
non-planar.
Analogously, Lemma~\ref{lem:ptest} informs us when a $2$-connected component
is non-planar on an edge insertion within it (but across different 
$3$-connected components). Similarly Lemma~\ref{lem:planBTest} allows us to
determine if an edge added within a connected component but between two
different $2$-connected components causes non-planarity. 
An invocation of Lemma~\ref{lem:triChange} shows how ot update the planar
embedding for biconnected graphs. Moving on to connected graphs, vertex rotation schemes of the cut-vertices is obtained by combining their vertex rotation schemes in each block in any non-interspersed order, using $\FO$ primitive \emph{merge vertex rotation schemes} (see \cref{fo:merge_vrs}). Remaining vertices inherit their vertex rotation scheme from the unique block they belong to. This completes the proof.
\end{proof}

\section{Details on the Graph Theoretic Machinery}\label{sec:app:graphThApp}
In this section we restate and prove the lemmas from Section~\ref{sec:graphTh}.

\ThToThPlanCrit*
\begin{proof}
	Assume that $G+\{\{a,b\}\}$ is planar and the $a$ and $b$ are not on the boundary of a common face in the embedding of $G$. Since $G+\{\{a,b\}\}$ is also $3$-connected and planar graph, its planar embedding is unique up to reflection. The edge $\{a,b\}$ is adjacent to exactly two faces in $G+\{\{a,b\}\}$. If we remove the curve corresponding to $\{a,b\}$ from this embedding, we get an embedding of $G$ such that $a$ and $b$ are on the boundary of the common face. This implies that there are at least two embeddings for $G$ one in which $a$ and $b$ lie on a common face boundary and one in which they do not. But since $G$ is $3$-connected, this contradicts that $G$ has a unique planar embedding. 
\end{proof}  

\ptest*
\begin{proof}
	Assume that $G+\{\{a,b\}\}$ is planar. Consider the graph $G[V(R_i])$ along with the virtual edges corresponding to the separating pairs $P_i$ and $P_{i+1}$. Let $V(P_i) = \{u_1,v_1\}$, $V(P_{i+1} = \{u_2,v_2\})$. 
	\begin{claim}
		In $G+\{\{a,b\}\}$ there exists a tree $T$ rooted at $a$, such that $V(T)\cap V(R_i) = \{u_1,v_1,u_2,v_2\}$. and $E(T)\cap E(R_i)=\emptyset$.
Thus, each vertex in $V(T) \cap V(R_i)$ is a leaf of $T$.
	\end{claim}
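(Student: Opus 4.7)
The plan is to build $T$ as a spanning tree of the union of four carefully chosen paths: two from $a$ to $u_1$ and $v_1$ lying strictly ``left'' of $R_i$ in the SPQR-tree of $B_i$, and two from $b$ (via the new edge $\{a,b\}$) to $u_2$ and $v_2$ lying strictly ``right'' of $R_i$. Let $L$ be the vertex set of the pertinent subgraph of the SPQR-subtree to the left of $P_i$ (equivalently $V(P_i) \cup W$, where $W$ is the connected component of $B_i - V(P_i)$ containing $a$), and define $M$ symmetrically using $P_{i+1}$ and $b$. By the separating-pair property of $P_i$ and $P_{i+1}$ in $B_i$ we obtain $L \cap V(R_i) = \{u_1, v_1\}$ and $V(R_i) \cap M = \{u_2, v_2\}$, since removing $V(P_i)$ disconnects $V(R_i) \setminus V(P_i)$ from $L \setminus V(P_i)$ in $B_i$, and analogously for $P_{i+1}$.

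The key structural input is that $G[L]$ augmented with the virtual edge $\{u_1,v_1\}$ is $2$-connected, and likewise $G[M]$ with $\{u_2,v_2\}$ --- these are the two standard ``splits'' of the biconnected graph $B_i$ at a separating pair. By Menger's theorem applied in the left split, there exist internally vertex-disjoint paths $Q_{u_1}, Q_{v_1}$ from $a$ to $u_1, v_1$ respectively; since $u_1 \neq v_1$ and the paths are internally disjoint, neither may use the virtual edge $\{u_1,v_1\}$ (else, say, $v_1$ would be an internal vertex of $Q_{u_1}$, contradicting that $v_1$ is the endpoint of the other path). Hence $Q_{u_1}, Q_{v_1}$ lie in $G[L]$ itself. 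An analogous construction in $G[M]$ produces internally disjoint paths $Q_{u_2}, Q_{v_2}$ from $b$ to $u_2, v_2$. Now form $H := Q_{u_1} \cup Q_{v_1} \cup \{\{a,b\}\} \cup Q_{u_2} \cup Q_{v_2}$ and take $T$ to be a spanning tree of $H$ rooted at $a$.

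Verification is then bookkeeping. We have $E(H) \subseteq E(G[L]) \cup E(G[M]) \cup \{\{a,b\}\}$, which is disjoint from $E(R_i)$, so $E(T) \cap E(R_i) = \emptyset$; and $V(H) \cap V(R_i) \subseteq (L \cup M) \cap V(R_i) = \{u_1, v_1, u_2, v_2\}$, with the reverse inclusion holding because all four appear as endpoints of the $Q$'s. Each target is a leaf of $T$ because internal disjointness within each side makes it the endpoint of exactly one $Q$ on that side, hence of $H$-degree $1$; in the corner case that $V(P_i)$ and $V(P_{i+1})$ happen to share a vertex (so it is the endpoint of one $Q$ on each side), the doubled incidence is broken by omitting one of the two edges when extracting $T$, which does not disconnect $H$ since the affected vertex is still reached via the other $Q$. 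The main obstacle in this sketch is establishing the $2$-connectivity of the two splits, which I invoke as the standard structure theorem for SPQR-trees: splitting a biconnected graph at a separating pair yields two biconnected subgraphs glued along the virtual edge corresponding to that pair.
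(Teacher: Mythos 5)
Your proof takes essentially the same route as the paper's: split $B_i$ at the separating pairs $P_i$ and $P_{i+1}$, invoke biconnectivity of the two sides together with the fan version of Menger's theorem to obtain two internally disjoint paths from $a$ to $u_1,v_1$ and from $b$ to $u_2,v_2$, and join the two resulting trees through the new edge $\{a,b\}$. If anything, your write-up is more careful than the paper's, which asserts biconnectivity of the induced left/right vertex sets without discussing the virtual edge and does not address the corner case where $V(P_i)$ and $V(P_{i+1})$ share a vertex.
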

	\begin{claimproof}
		Consider the subgraph of $G$, called $H_1$, on the following subset of vertices, $V(R_a)\cup \bigcup_{j\in[i-1]}V(R_j)$. $V(H_1)\cap V(R_i)=\{u_1,v_1\}$ since $\{V(P_i) = \{u_1,v_1\}\}$ is separating pair. Since $H_1$ is biconnected (by definition), there are two vertex disjoint paths from $a$ to $u_1$ and $v_1$ (due to Menger's Theorem). 
		Call the union of the two paths as a tree $T_a$ rooted at $a$, with $u_1$ and $v_1$ as its leaves. Since $T_a$ is a subgraph of $H_1$ and contains $u_1,v_1$, we have that $V(T_a)\cap V(R_i)=\{u_1,v_1\}$.
		Consider the subgraph of $G$, called $H_2$, on the following subset of vertices, $V(R_b)\cup \bigcup_{j\in[i+1,k]}V(R_j)$. $V(H_2)\cap V(R_i) = \{u_2,v_2\}$ since $V(P_{i+1})$ is a separating pair. Since $H_1$ is biconnected (by definition), there are two vertex disjoint paths from $b$ to $u_2$ and $v_2$. Call the union of the two paths as a tree $T_b$ rooted at $b$, with $u_2$ and $v_2$ as its leaves. Since $T_b$ is a subgraph of $H_2$ and contains $u_2,v_2$, we have that $V(T_b)\cap V(R_i)=\{u_2,v_2\}$. In $G+\{\{a,b\}\}$, the two trees $T_a$ and $T_b$ can be joined together at $a$ via the edge $\{a,b\}$, to give the required tree $T$. $V(T)\cap V(R_i)=\{u_1,v_1,u_2,v_2\}$ because $V(T_a)\cap V(R_i)=\{u_1,v_1\}$ and $V(T_b)\cap V(R_i)=\{u_2,v_2\}$. And $E(T)\cap E(R_i)=\emptyset$ since $\{u_1,v_1,u_2,v_2\}$ are leaves in $T$. 
\end{claimproof}
	Now, consider the subgraph of $G+\{\{a,b\}\}$, $R'_i = R_i\cup T$. $R'_i$ must be planar Since $G+\{\{a,b\}\}$ is planar. Consider a planar 
embedding of $R'_i$. If we remove the curves corresponding to the edges of $T$ 
we get an embedding of $R_i$. With respect to this embedding of $R_i$ the vertex $a$ must lie inside some face of this embedding of $R_i$. Let this face be $F_a$. We claim that all the vertices in $\{u_1,v_1,u_2,v_2\}$ must lie on the boundary of the face $F_a$. Suppose for the sake of 
contradiction there is a vertex, say $w \in \{u_1,v_1,u_2,v_2\}$, that does not lie on the boundary of $F_a$. Since $w \in V(R_i)$ it cannot lie inside a face of $R_i$, hence it lies outside of the boundary of $F_a$. There is a path $P$ in $T$ from $a$ to $w$ -- one of which is inside $F_a$ and 
the other, outside. Thus, the path must pass through a vertex, say  $x \in V(F_a)$. This is by the Jordan Curve Theorem \cite[Theorem 4.1.1]{Diestel} and since if $P$ crosses an edge 
of $F_a$ without passing through a vertex of $F_a$ would imply that the embedding $R'_i = R_i \cup T$ is non planar. Thus $x \in V(R_i) \cap V(T)$ is a 
vertex of degree at least two in $T$ because it is an internal vertex of a path $P$ in $T$. But by assumption $V(R_i)\cap V(T) = \{u_1,v_1,u_2,v_2\}$ and each of these four vertices is a leaf of $T$ so $x \not \in \{u_1,v_1,u_2,v_2\}$ which is a contradiction.
	 
	Similarly it can be shown that the vertices in $\{a\}\cup V(P_1)$ must lie on the boundary of a common face in the embedding of $R_a$. Also, that the vertices in $\{b\}\cup V(P_{k+1})$ must lie on the boundary of a common face in the embedding of $R_b$.
	Now we will show the other direction, i.e, if the three conditions in the statement of the lemma are met then $G+\{\{a,b\}\}$ is planar. Suppose that all the conditions (a),(b) and (c) in the lemma statement are met. Consider the subgraph of $G$ that consists of the triconnected components $R_a,R_1,\ldots,R_k,R_b$, call it $G_{ab}$. Then we claim the following.  
	\begin{claim}
		There exists a planar embedding of $G_{ab}$ such that the separating pair vertices $\bigcup_{i\in[k+1]}V(P_i)$ and the vertices $a$ and $b$ lie on the boundary of the outer face. 
	\end{claim}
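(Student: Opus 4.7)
The plan is to construct the desired planar embedding of $G_{ab}$ by inductively gluing the rigid components $R_a, R_1, \ldots, R_k, R_b$ along their shared separating pairs. The tools at hand are Whitney's theorem (\cref{lem:whit}), which gives each rigid $R_j$ a unique embedding up to reflection, and \cref{obs:outerFace}, which lets us designate any face of $R_j$ as its outer face without altering its vertex rotation scheme. Conditions (a), (b), (c) guarantee that $\{a\}\cup V(P_1)$, $V(P_i)\cup V(P_{i+1})$ for each $i\in[k]$, and $V(P_{k+1})\cup\{b\}$ each lie on a common face of $R_a$, $R_i$, and $R_b$ respectively; by \cref{obs:outerFace} we may take these to be the outer faces of the individual embeddings.

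The induction would maintain, for $i \in \{0,1,\ldots,k\}$, a planar embedding $H_i$ of the subgraph $R_a \cup R_1 \cup \cdots \cup R_i$ (with the virtual edges across the internal separating pairs $V(P_1),\ldots,V(P_i)$ removed and their endpoints identified) whose outer face has $\{a\}\cup V(P_1)\cup\cdots\cup V(P_{i+1})$ on its boundary and still carries the virtual edge of $R_i$ across $V(P_{i+1})$ on its boundary. The base case $H_0 = R_a$ is immediate from condition (a). For the inductive step, I would attach $R_{i+1}$ by identifying the two virtual edges across $V(P_{i+1})$ (one in $R_i$, one in $R_{i+1}$), deleting both, and identifying the endpoint pair; $R_{i+1}$ is drawn inside the outer face of $H_i$, reflected if necessary so that its face containing $V(P_{i+1}) \cup V(P_{i+2})$ bounds the new outer face of $H_{i+1}$. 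A final analogous attachment of $R_b$, using condition (b) to locate the face containing $V(P_{k+1})\cup\{b\}$, produces the full graph $G_{ab}$ together with the desired outer face.

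The main obstacle I anticipate is formalizing the gluing step so that the output is genuinely a planar embedding and the new outer face collects the right boundary vertices. The essential subtlety is orientation: Whitney's theorem pins down $R_{i+1}$ only up to a reflection, and we must pick the reflection whose designated face (the one containing $V(P_{i+1}) \cup V(P_{i+2})$) is the face that ends up facing outward after being inserted across the virtual edge into the outer face of $H_i$; \cref{obs:outerFace} is exactly what licences this choice while preserving the vertex rotation scheme of $R_{i+1}$. Once the orientation is fixed, the two faces flanking the virtual edge in $H_i$ and the two flanking the virtual edge in $R_{i+1}$ merge in a controlled way, and by tracing the boundary of the new outer face one verifies that $\{a\} \cup V(P_1)\cup\cdots\cup V(P_{i+2})$ all sit on it, closing the induction and delivering the claim.
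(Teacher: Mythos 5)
Your proposal is correct and follows essentially the same route as the paper: an induction along the path $R_a, R_1, \ldots, R_k, R_b$ that glues each successive rigid component into the outer face of the partial embedding at the shared virtual edge, using conditions (a)--(c) together with \cref{obs:outerFace} to designate the face containing the relevant separating-pair vertices as the outer face. Your added attention to the reflection ambiguity from Whitney's theorem is a welcome elaboration of the paper's ``after appropriately deforming them,'' but it does not change the argument's structure.
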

	\begin{proof}
		We prove this by induction on the path $R_a, R_1, \ldots, R_k, R_b$. 
		For any $i\in[k]$ define the graph $G_{a,i}$ to be the subgraph of $G$ consisting of the triconnected components $R_a,R_1,\ldots,R_i$. 
		Induction Hypothesis: There exists an embedding of $G_{a, i}$ such that the separating pair vertices $\bigcup_{j\in[i+1]}V(P_j)$ and the vertex $a$ lie on the boundary of the outer face.
		
		Base case: $i=1$. We know that there is an embedding of $R_a$ such that $a$ and $V(P_1)$ lie on its outer face boundary and there is an embedding of $R_1$ such that the vertices $b$ and $V(P_1)\cup V(P_2)$ lie on its outer face boundary. Moreover, in the embedding of $R_a$, the virtual edge corresponding to the separating pair $P_1$ lies on the face boundary. Also, in the embedding of $R_1$ the virtual edges corresponding to $P_1$ and $P_2$ lie on the boundary of the outer face. We can join the two embeddings at the common edge corresponding to $P_1$ after appropriately deforming them so that $a$ and $V(P_1)\cup V(P_2)$ are together on the boundary of the outer face of the combined embedding.
		
		Induction step: $l=i+1$. We know that there is an embedding of $R_{i+1}$ such that the separating pair vertices $V(P_{i+1}\cup V(P_{i+2}))$ lie on the boundary of its outer face and there is an edge corresponding to the separating pair $P_{i+1}$ on the outer face boundary. We can join the embeddings of $R_{i+1}$ with the drawing of $G_{a, i}$ at the common edge corresponding to $P_{i+1}$ after appropriately deforming them. The vertices $a$ and $\bigcup_{j\in[i+2]}V(P_j)$ lie on the boundary of the outer face of the combined embedding. 
		
		So by induction, we have proved the fact that $G_{a,k}$ has an embedding such that all the separating pair vertices in $P_1,\ldots, P_{k+1}$ along with the vertex $a$ lie on the boundary of the outer face. A similar argument will show that $G_{ab}$ has a planar embedding such that vertices in $P_1,\ldots, P_{k+1}$ along with $a$ and $b$ lie on the boundary of the outer face. 
	\end{proof}
	Since $G_{ab}$ has an embedding in which $a$ and $b$ are on the outer face, we can draw the curve for the edge $\{a,b\}$ in the outer face region so that it doesn't cross any edge or vertex of $G_{ab}$ except for the vertices $a$ and $b$. This implies that $G_{ab}+\{\{a,b\}\}$ is planar the embedding of $G_{ab}$ along with curve for $\{a,b\}$ is its valid planar embedding. To prove that $G+\{\{a,b\}\}$ is planar we invoke the fact that a graph is planar iff all its triconnected components are planar (\cref{lem:macl}). This condition is satisfied in $G+\{\{a,b\}\}$ since after the insertion of $\{a,b\}$ the triconnected components that do not lie on the path between $R_a$ and $R_b$ remain unchanged and the triconnected components that are on the path all coalesce together to form $G_{ab}+\{\{a,b\}\}$ which we have proved is planar.	
\end{proof}

Consider the case in which $\es[+]{2}{3}$ edge $\{a,b\}$ can be inserted into $G$ preserving planarity. Notice that the $3$-connected components $R_a, R_1, \ldots, R_k, R_b$ coalesce into one $3$-connected component after the insertion of the edge. Let this coalesced $3$-connected component be $R_{ab}$. Obviously, $\{a,b\}$ would be at the boundary of exactly two faces of $C_{ab}$, say $F_0,F_1$. We claim that the separating pair vertices in $\cup_{i\in[k+1]} V(P_i)$ all lie either on the boundary of $F_0$ or $F_1$. See \cref{fig:twoCol}.

\pCol*
\begin{proof}
	We show that there exists a non-separating induced cycle in $G[V(R_{ab})]$ such that exactly one vertex from each separating pair $P_i$ is on the cycle. From Tutte's Theorem, we know that non-separating induced cycles in a $3$-connected planar graph are precisely the faces of the graph. Since $G[V(R_{ab})]$ is $3$-connected and planar, the lemma follows.
	\begin{claim}
		There exists a non-separating induced path from $a$ to $b$ in $G[V(R_{ab})]$ that contains exactly one vertex from each separating pair $P_i,i\in[k+1]$. 
	\end{claim}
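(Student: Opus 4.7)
My strategy is to exhibit the required path as the boundary minus the edge $\{a,b\}$ of one of the two faces adjacent to $\{a,b\}$ in the planar embedding of $G[V(R_{ab})]$.

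First, I reuse the planar embedding of $G_{ab}$ constructed in the second claim of the proof of \cref{lem:ptest}, in which $a$, $b$ and every vertex of $\bigcup_{i\in[k+1]} V(P_i)$ lies on the boundary of the outer face. Because the real edges of $G$ on the vertex set $V(R_{ab})$ are confined to the individual rigid components $R_a, R_1, \dots, R_k, R_b$, we have $G_{ab} = G[V(R_{ab})]\setminus \{\{a,b\}\}$, so this is a planar embedding of $G[V(R_{ab})]$ with the new edge omitted. Since $G_{ab}$ is $2$-connected (no single vertex disconnects it: vertices inside a rigid component clearly do not separate, and for a vertex of a separating pair $V(P_i)$ the other vertex of the pair keeps the two sides joined), its outer face is bounded by a simple cycle $C$ carrying $a$, $b$ and all of the $V(P_i)$.

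Next I argue that each $V(P_i)$ is a two-vertex cut of $G_{ab}$ that separates $a$ from $b$: removing $V(P_i)$ disconnects $R_a\cup\dots\cup R_{i-1}$ from $R_i\cup\dots\cup R_b$, since the only vertices shared across this divide in the SPQR-tree path are exactly those of $V(P_i)$. Therefore each of the two $a$-to-$b$ arcs into which $a$ and $b$ split $C$ must use at least one vertex of $V(P_i)$; as $|V(P_i)|=2$ and both its vertices lie on $C$, each arc uses exactly one.

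Now, routing $\{a,b\}$ through the outer-face region splits the outer face into two faces $F_0,F_1$ of $G[V(R_{ab})]$, each bounded by $\{a,b\}$ together with one of the arcs $P_0,P_1$ of $C$. By the previous step, each $P_j$ is an $a$-to-$b$ path containing exactly one vertex of every $V(P_i)$. Since $G[V(R_{ab})]$ is $3$-connected and planar, Tutte's theorem tells us every face boundary is an induced non-separating cycle, so $P_0\cup\{\{a,b\}\}$ is induced and $G[V(R_{ab})]-V(P_0)$ remains connected, which gives precisely the non-separating induced $a$-to-$b$ path asserted by the claim.

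The main obstacle is the second step, namely ensuring that the two vertices of each $V(P_i)$ land on opposite arcs of $C$. I avoid any case analysis on the orientations chosen during the iterative gluing of the rigid components along virtual edges by deducing this directly from the cut-property of $V(P_i)$, which is immediate from the SPQR-tree structure.
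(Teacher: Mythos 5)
Your argument is correct in substance but takes a genuinely different route from the paper's. The paper proves the claim by induction along the SPQR-tree path: inside each $R_i$ it inspects the common face containing $V(P_i)\cup V(P_{i+1})$, uses the presence of the two virtual edges to pin the cyclic order down to $u_i,u_{i+1},v_{i+1},v_i$ or $u_i,v_{i+1},u_{i+1},v_i$, applies Tutte's theorem locally to extend a non-separating induced path one component at a time, and only then closes the path into a cycle with $\{a,b\}$. You instead recycle the global outer-face embedding of $G_{ab}$ already constructed in the proof of \cref{lem:ptest}, note that its outer boundary is a cycle $C$ through $a$, $b$ and all separating-pair vertices, and obtain the ``one vertex of each $V(P_i)$ per arc'' property from the fact that $V(P_i)$ is an $a$--$b$ separator. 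This removes the paper's case analysis on cyclic orders, needs Tutte's theorem only once, and in fact yields both faces $F_0,F_1$ simultaneously, so it is arguably a more direct proof of \cref{lem:p2Col} itself; the paper's component-by-component induction, on the other hand, mirrors more closely the algorithmic flip-finding procedure of Section~\ref{subsec:ptth}. Two caveats, both at the level of rigour the paper itself adopts rather than gaps peculiar to your argument. First, $G_{ab}$ is not literally $G[V(R_{ab})]\setminus\{\{a,b\}\}$: a real edge joining the two vertices of some $P_i$ sits in its own component attached to that P-node rather than in any $R_j$, and the $R_j$ carry virtual edges to subtrees hanging off the path; this does not break the argument (such an edge cannot be embedded inside $F_0$ or $F_1$, each of which contains only one of its endpoints, and Whitney's theorem guarantees the faces you build are \emph{the} faces), but the identification deserves a sentence. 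Second, the $2$-connectivity of $G_{ab}$, which you need so that the outer face boundary is a simple cycle, genuinely relies on retaining the virtual edges (or the paths realizing them): a vertex of some $R_j$ all of whose real neighbours lie in hanging subtrees would otherwise falsify your parenthetical justification.
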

	\begin{claimproof}
		Induction Hypothesis: Assume that for an $i\in[k]$ there is a non-separating induced path from $a$ to a vertex in $V(P_i)$.
		
		Base Case: $V(P_1)=\{u_1,v_1\}$. Due to Lemma~\ref{lem:ptest} $a, u_1,v_1$ lie on the boundary of a common face in the embedding of $R_a$. Suppose they appear in the cyclic order $a, u_1,v_1$ on the face. Since $R_1$ is $3$-connected, the cycle corresponding to the face is induced non-separating due to Tutte's Theorem and since the $a$ to $u_1$ path on the face is a subgraph of the cycle, $a$ to $u_1$ path is non-separating induced path in $G[V(R_a)]$. The path is induced even in $R_{ab}$, and non-separating because after the path is removed from $G[V(R_{ab})]$, $R_a$ remains connected to the rest of $G[V(R_{ab})]$ via $v_1$, the other vertex in the separating pair $P_1$.
		
		Induction: By the induction hypothesis, suppose there is an induced non-separating path from $a$ to a vertex in $V(P_i)$, say $u_i$. Vertices in $V(P_i)\cup V(P_{i+1})$ are on the boundary of a common face in the embedding of $R_i$ due to Lemma~\ref{lem:ptest}. 
		The vertices $u_i,u_{i+1},v_{i+1},v_i$ must appear either in that cyclic order on the face or in the order $u_i,v_{i+1},u_{i+1},v_i$ since $\{u_i,v_i\}$ and $\{u_{i+1},v_{i+1}\}$ are virtual edges in $R_i$. In the latter case, there is a non-separating induced path from $u_i$ to $v_{i+1}$ in $R_i$ since the path is a subgraph of the face cycle. The path remains induced and non-separating even in $R_{ab}$ since after we remove the path from $R_{ab}$, $R_i$ remains connected to the rest of $R_{ab}$ via the vertices $v_i$ and $u_{i+1}$. The path is induced because all the edges that could be there between two vertices of the path will also have to be in $R_i$ alone, which is not possible. In the other case, when they lie in the cyclic order $u_i,u_{i+1},v_{i+1},v_i$, similarly we can establish that there is a non-separating induced path from $u_i$ to $u_{i+1}$. Thus, in any case, the $a$ to $u_i$ non-separating induced path can be extended to either $u_{i+1}$ or $v_{i+1}$. Hence the induction hypothesis follows.  
	\end{claimproof}
	Due to the above claim, there is a non-separating induced path from $a$ to a vertex in $V(P_{k+1})$, say $u_{k+1}$ in $R_{ab}$. But, since the vertices in $V(P_{k+1})$ and $b$ lie on the boundary of a common face in the embedding of $R_b$, using the same argument as in the base case of the above induction, we conclude that there is an induced non-separating path from $u_{k+1}$ to $b$ in $R_b$. Again this path can be combined with the path from $a$ to $u_{i+1}$ along with the edge $\{a,b\}$ resulting in an induced non-separating cycle in $R_{ab}$. And this induced non-separating cycle contains exactly one vertex from each $P_i$. Due to Tutte's theorem, hence they all lie on the boundary of a common face in the embedding of $R_{ab}$ since it is $3$-connected and planar. Similarly, it can be shown that the vertices from the separating pairs that are not on this cycle are in another non-separating induced cycle.    
\end{proof}

Finally, if the vertices $a$ and $b$ are in the same connected component but not in the same biconnected component then we use the following lemma to test for edge insertion validity. Let the vertices $a,b$ lie in a connected component $C_i$. Let $B_a, B_b\in N(T_2(C_i))$ be two block nodes in the BC-Tree of the connected component $C_i$ such that $a\in V(B_a)$ and $b\in V(B_b)$. Consider the following path between $B_a$ and $B_b$ in $T_2(C_i)$, $B_a,c_1,B_1,c_2,\ldots,c_k,B_k,c_{k+1},B_b$ where $B_i$ and $c_i$ are block and cut nodes respectively, in $T_2(C_i)$ that appear on the path between $B_a$ and $B_b$. We would abuse the names of cut nodes to also denote the cut vertex's name.
Insertion of such an edge, i.e, $\es[+]{1}{2}$ leads to the blocks $B_a,B_1,\ldots,B_k,B_b$ coalescing into one block, call it $B_{ab}$. In the triconnected decomposition of $B_{ab}$ a new cycle component is introduced that consists of the edge $\{a,b\}$ and virtual edges between the consecutive cut vertices $c_i,c_{i+1}$, $i\in[k]$.
\planBTest*
\begin{proof}
	Assume that the graph $G+\{\{a,b\}\}$ is planar. 
	\begin{claim}
		The graph $G[V(B_i)]+\{\{c_i,c_{i+1}\}\}$ is planar for all $i\in[k]$.
	\end{claim}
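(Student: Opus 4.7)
The plan is to exhibit, inside the induced embedding of $B_i$ obtained from any planar embedding of $G+\{\{a,b\}\}$, a single face that has both $c_i$ and $c_{i+1}$ on its boundary. Once that is done, the new edge $\{c_i,c_{i+1}\}$ can be drawn as a simple arc inside that face without crossings, witnessing planarity of $G[V(B_i)]+\{\{c_i,c_{i+1}\}\}$.

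First I would fix a planar embedding $\mathcal E$ of $G+\{\{a,b\}\}$, which exists by the hypothesis of the lemma, and let $\mathcal E_i$ denote the embedding of $B_i$ obtained by restricting $\mathcal E$ to $V(B_i)$ and $E(B_i)$. Consider the subgraph $H_i := (G+\{\{a,b\}\}) - (V(B_i)\setminus\{c_i,c_{i+1}\})$, i.e.\ everything outside the interior of $B_i$. The only vertices $H_i$ shares with $B_i$ are $c_i$ and $c_{i+1}$, since $B_i$'s contact with the rest of its connected component in the BC-tree is exactly through the cut vertices $c_i$ and $c_{i+1}$.

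Next I would verify that $H_i$ is connected. The blocks $B_a,B_1,\ldots,B_{i-1}$ form a path via the cut vertices $c_1,\ldots,c_i$ that reaches $a\in V(B_a)$, and symmetrically the blocks $B_{i+1},\ldots,B_k,B_b$ form a path via $c_{i+1},\ldots,c_{k+1}$ that reaches $b\in V(B_b)$; the inserted edge $\{a,b\}$ glues the two sides into a single connected subgraph $H_i$. (If $i=1$ or $i=k$ the corresponding side collapses to the single vertex $a$ or $b$, and connectivity still holds trivially.)

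Now comes the geometric step, which I expect to be the core of the argument. Since $H_i$ is a connected plane subgraph of $\mathcal E$ attached to $B_i$ only at $\{c_i,c_{i+1}\}$, its entire drawing must lie inside the closure of a single face $F$ of $\mathcal E_i$: if edges of $H_i$ were drawn in two distinct faces of $\mathcal E_i$, any internal path of $H_i$ joining them would, by the Jordan curve theorem applied to the boundary cycles of $\mathcal E_i$, have to cross an edge of $B_i$, contradicting planarity of $\mathcal E$. Consequently both $c_i$ and $c_{i+1}$ appear on the boundary of this face $F$. Routing a new arc for $\{c_i,c_{i+1}\}$ inside $F$ between these two boundary points produces a planar embedding of $G[V(B_i)]+\{\{c_i,c_{i+1}\}\}$, completing the proof of the claim. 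The only delicate point is the Jordan-curve justification of the ``single face'' statement; everything else is structural bookkeeping on the BC-tree path.
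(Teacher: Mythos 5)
Your proof is correct in substance but takes a different route from the paper's. The paper picks a single path $Q$ from $c_i$ to $c_{i+1}$ that is internally disjoint from $B_i$ (obtained, as in your connectivity step, by walking from $c_i$ through $B_{i-1},\ldots,B_a$ to $a$, across the new edge $\{a,b\}$, and back through $B_b,\ldots,B_{i+1}$ to $c_{i+1}$), observes that $B_i\cup Q$ is planar because it is a subgraph of the planar graph $G+\{\{a,b\}\}$, and then contracts $Q$ to a single edge, so that $B_i+\{\{c_i,c_{i+1}\}\}$ is a minor of a planar graph and hence planar. This avoids any appeal to the Jordan curve theorem. Your argument instead fixes an embedding and invokes the standard fact that a connected ``bridge'' attached to $B_i$ only at $\{c_i,c_{i+1}\}$ must be drawn inside a single face of the induced embedding of $B_i$; this is heavier topologically but yields slightly more, namely an explicit common face of $c_i$ and $c_{i+1}$ in the inherited embedding, which is closer in spirit to the common-face criteria the paper uses elsewhere. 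One imprecision you should repair: $H_i:=(G+\{\{a,b\}\})-(V(B_i)\setminus\{c_i,c_{i+1}\})$ need not be connected, since deleting interior vertices of $B_i$ may orphan blocks of $G$ attached at cut vertices internal to $B_i$ (and $G$ may have other connected components altogether); your path-of-blocks argument only establishes connectivity of the union of the blocks on the $B_a\sqline B_b$ path together with $\{a,b\}$. Restricting the single-face argument to that connected subgraph (or to the connected component of $H_i$ containing $c_i$ and $c_{i+1}$) is all you need, and the rest of the proof goes through unchanged.
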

	\begin{claimproof}
		Before the edge $\{a,b\}$ insertion there exists a path from $a$ to $c_i$ and a path from $b$ to $c_{i+1}$ such that these two paths are vertex disjoint. After the insertion, this implies that there is path $Q$ from $c_i$ to $c_{i+1}$ in $B_{ab}$ such that $V(Q)\cap V(B_i)=\{c_i,c_{i+1}\}$. Consider the subgraph of $B_{ab}$ that consists of the path $Q$ and $B_i$. Since $B_{ab}$ is planar, this subgraph must also be planar. But, if the $B_i\cup Q$ is planar, $B_i+\{\{c_i,c_{i+1}\}\}$ should also be planar since the $Q$ could be replaced with just the edge $\{c_i,c_{i+1}\}$ via edge contractions. 
	\end{claimproof} 
	Similarly, $B_a+\{\{a,c_1\}\}$ and $B_b+\{c_{k+1},b\}$ must be planar.
	Now we will prove the other direction, that is, if $B_i+\{\{c_i,c_{i+1}\}\}$ is planar for all $i\in[k]$ and $B_a+\{\{a,c_1\}\}$ and $B_b+\{\{b,c_{k+1}\}\}$ then $G+\{\{a,b\}\}$ is planar. Since $B_i$ is biconnected and $B_i+\{\{c_i,c_{i+1}\}\}$ is planar, there exists an embedding of $B_i$ such that $c_i$ and $c_{i+1}$ lie on the boundary of the outer face, as we have seen in the Lemma~\ref{lem:ptest}.   
\end{proof}

\section{Planar Embedding Related $\FO$ Primitives}\label{sec:planarFO}
In this section we describe some $\FO$ primitives that are used by our planar embedding algorithm. We also provide brief justification for their $\FO$ definability.
\subparagraph*{Face}\label{fo:face} 
In a triconnected component, any three of its 
vertices lie on at most one common face. As a consequence, 
we can identify each face of a triconnected component by 
the lexicographically least order triple of vertices 
$(u,v,w)$ that lie on the face boundary. For each face we 
store the face-vertex rotation scheme triples along with 
the corresponding face name.

\subparagraph*{Make outer face}\label{fo:oface}
For a triconnected component, 
given its extended planar embedding, and an specified 
inner face we can make it the outer face in $\FO$ as 
follows. Due to \cref{obs:outerFace}, we only 
need to flip all the face-vertex rotation scheme triples for the old outer face and the specified inner face. All other faces retain their old face-vertex rotation scheme. 
Suppose the outer face name is $F_1$ and the specified inner face name is $F_2$. In the new embedding, where the $F_2$ becomes the outer face, we include a triple of 
vertices say $(u,v,w)$ in the face-vertex rotation scheme corresponding to the face $F_2$ iff $(w,v,u)$ is a triple corresponding to face $F_2$ in the old embedding.          

\subparagraph*{Flip}\label{fo:flip} We can flip the current extended planar 
embedding of a triconnected component in $\FO$ similar to 
the case above. Just flip every triple in face-vertex 
rotation scheme as well as the vertex rotation scheme.  

\begin{figure}
	\centering
	\begin{subfigure}{0.23\textwidth}
		\centering
		\includegraphics[width=\textwidth]{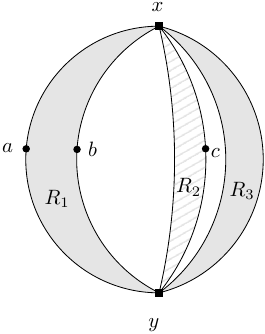}
		\caption{ }
		\label{fig:flipSlide}
	\end{subfigure}
	\hspace*{4mm}
	\begin{subfigure}{0.23\textwidth}
		\centering
		\includegraphics[width=\textwidth]{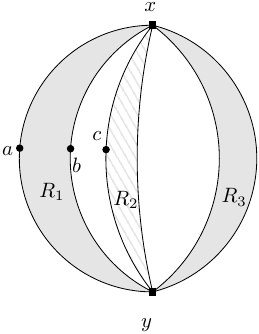}
		\caption{ }
		\label{fig:flips}
	\end{subfigure}
	\hspace*{4mm}
	\begin{subfigure}{0.30\textwidth}
		\centering
		\includegraphics[width=\textwidth]{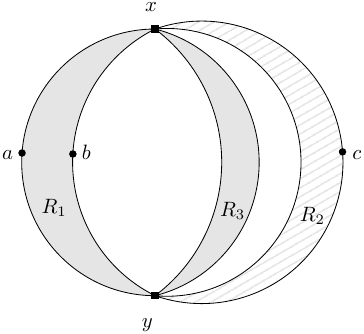}
		\caption{ }
		\label{fig:slides}
	\end{subfigure}
	\caption{Flips and slides in a biconnected component. (\ref{fig:flipSlide}) $R_1,R_2$ and $R_3$ are rigid components connected together via separating pair $\{x,y\}$. An embedding of the biconnected component. (\ref{fig:flips}) To bring vertices $b$ and $c$ on the same face, the embedding of $R_2$ is flipped.(\ref{fig:slides}) To bring the vertices $a$ and $c$ on the same face, we need to slide the embedding of $R_2$.}
	\label{fig:flipsSlides}
\end{figure}

\subparagraph*{Merge two faces}\label{fo:merge} Given two faces $F_1$ and $F_2$ 
(by their names) in a rigid component that are adjacent 
to each other via an edge $\{u,v\}$. If the edge $\{u,v\}$ 
is removed from the embedding then the faces $F_1$ and $F_2$
to form a new face say $F_{12}$. Then we can merge the 
face-vertex rotation scheme of $F_1$ and $F_2$ to get the 
face-vertex rotation scheme of $F_{12}$ in $\FO$ as follows.
First, we compute the name of the new face $F_{12}$ by finding
the smallest three vertices amongst the ones present in the
names of $F_1$ and $F_2$. This can be done $\FO$.
To compute the set of vertex triples that for the face-vertex 
rotation scheme, notice that all the triples of vertices that 
belonged to the face-vertex rotation scheme of $F_1$ or $F_2$ 
now belong to $F_{12}$. However there are new triples of vertices
to include, namely ones that have their vertices lying on both 
$F_1$ and $F_2$. Formally, a triple of vertices $(a,b,c)$ form a 
vertex triple in the face-vertex rotation scheme of $F_{12}$
iff either it is part of one $F_1,F_2$ or there exist two triples
$(w_1,w_2,z)$ and $(z,w_3,w_4)$ in the face-vertex rotation scheme 
of $F_1$ or $F_2$ such that $z\in\{u,v\}$ and 
in the cyclic order $w_1,w_2,z,w_3,w_4,w_1$, $(a,b,c)$ appears as 
a subsequence.     

\subparagraph*{Split a face}\label{fo:split} Given a face $F$, and two vertices $u,v$ that lie
on it, we can split the face into two faces by inserting the edge 
$\{u,v\}$ (we are assuming that $\{u,v\}$ was not an edge earlier). 
Contrary to the case merging operation here we need to filter out
some vertex triples from the face-vertex rotation scheme of $F$
to get the one for new faces, say $F_1,F_2$.
To compute the face-vertex rotation scheme of $F_1$ and $F_2$ 
we do the following. First, we compute the set of all vertices 
that lie on one face, say $F_1$. as follows. 
$S=\{w| (u,w,v) \text{ is a triple in the face-vertex rotation 
	scheme of } F\}$. We can define this set using a $\FO$ predicate
which take only one argument is true only for the vertices in $S$.
Then using the set $S$ we can filter the face-vertex rotation scheme 
triples for $F_1$. A vertex triple $(a,b,c)$ is in the 
face-vertex rotation of the new face $F_1$ iff it is a triple in the
face-vertex rotation scheme of $F$ and $\{a,b,c\}\subseteq S$.
Similarly, for $F_2$ we can compute the corresponding triples.
Also we can compute the names of $F_1$ and $F_2$ by computing 
the lexicographically smallest triple of vertices in $S\cup\{u,v\}$ 
and $(V(F)\setminus{S})\cup\{u,v\}$ respectively.

\subparagraph*{Merge vertex rotation schemes}\label{fo:merge_vrs}
We are given multiple cyclic orders on vertices, $C_1, \ldots, C_k$ in the form of triples. That is, each cyclic order $C_i$ is given as a set of ordered triple $\{(u,v,w)|u,v,w \text{ appear in that cyclic order in } C_i\}$ ($i\in [k]$). For each $C_i$ we are also given two consecutive vertices $t_i,s_i$ in the cyclic order. We have to merge the cyclic orders $C_1,\ldots, C_k$ into one cyclic order $C$ such that cyclic order is $s_1,\ldots, t_1,s_2,\ldots,t_2,\cdots, s_k,\ldots, t_k, s_1$. We can obtain the merged cyclic order $C$ in $\FO$ as follows. A triple of vertices $(x,y,z)$ is valid triple in $C$ if the vertices $x,y$ and $z$ belong to unique cyclic orders $C_{i_1}, C_{i_2}$ and $C_{i_3}$ respectively such that 
$i_1\le i_2\le i_3 \lor i_3\le i_1\le i_2 \lor i_3\le i_1\le i_2$. See Figure~\ref{fig:mergeCycles}.

\begin{figure}
	\centering
	\includegraphics[width=0.25\textwidth]{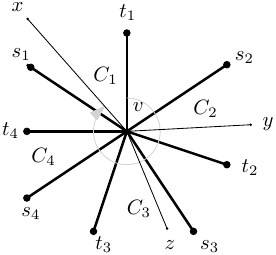}
	\caption{Merging vertex rotation schemes at $v$. Cyclic orders $C_1,C_2, C_3$ and $C_4$ are merged. $(x,y,z)$ is a valid triple in the merge cyclic order.}
	\label{fig:mergeCycles}
\end{figure}

If all the vertices in $(x,y,z)$ belong to a common cyclic order $C_i$ then it is a valid triple for $C$ iff it is a valid triple for $C_i$. If $x,y\in C_j$ and $z\in C_l$, ($j,l\in[k]$)  such that in $C_j$, $(s_j,x,y),(x,y,t_j)$  are valid triples then $(x,y,z)$ is a valid triple in $C$. If $x,z\in C_j$ and $y\in C_l$ such that in $C_j$, $(s_j,z,x),(z,x,t_j)$ are valid triples then $(x,y,z)$ is a valid triple in $C$. This covers all the cases. 

\section{Maintaining a Planar Embedding of Triconnected Components}
\label{sec:triChange}
\subsection{Handling an edge of type $\es[+]{3}{3}$}\label{subsec:pthth}
The edge $\{a,b\}$ that is to be inserted of type $\es[+]{3}{3}$, i.e, both the vertices $a$ and $b$ belong to a common $3$-connected component. First, we find out the id of the block that the vertices belong to and then the id of the rigid component, say $R_{ab}$, in that block that the vertices $a$ and $b$ lie in. To test the validity of $\{a,b\}$ insertion we look up the face-vertex rotation scheme of $R_{ab}$ and check if there exists a face such that the vertices $a$ and $b$ lie on its boundary, in $\FO$. If there is no such face then we reject the insert query due to the planarity criteria laid out in \cref{lem:ThToThPlanCrit}. Otherwise, there is such a face say $F_{ab}$ (wlog we can assume it to be an internal face). After the insertion of $\{a,b\}$, $F_{ab}$ splits into two new faces $F_a$ and $F_b$ that share the edge $\{a,b\}$. All the other faces of $R_{ab}$ remain unaffected. We use the $\FO$ primitive \emph{split a face} (see Section~\ref{sec:planarFO}) to compute the face-vertex rotation scheme pertaining to these faces to updating the face-vertex rotation scheme of $R_{ab}$.

We also need to update the vertex rotation scheme for $R_{ab}$. We observe that for the vertices in $R_{ab}$ other than $a$ and $b$, their vertex rotation scheme doesn't change after the insertion. We only need to update the vertex rotation scheme around $a$ and $b$. We splice in the vertex $a$ in the vertex rotation scheme for $b$ between the two neighbours of $b$ that lie on the boundary of the old face $F_{ab}$. We can find the two neighbours of $b$ that lie on the boundary of face $F_{ab}$ in $\FO$ by querying the face-vertex rotation scheme pertaining to $F_{ab}$. Similarly, we also update the vertex rotation scheme for $a$.

\subsection{Handling an edge of type $\es[+]{1}{2}$}\label{subsec:pot}
Let the vertices $a$ and $b$ belong to a connected component, say $C_1$. Wlog assume $a$ and $b$ are not cut vertices, so that they belong to unique blocks, say $B_a$ and $B_b$ respectively in BC-tree of $C_1$, $T_2(C_1)$.  Let the path from $B_a$ block node to $B_b$ block node in $T_2(C_1)$ be $B_a,c_1,B_1,...,B_k,{c}_{k+1},B_b$, where $B_i$ ($i\in [k]$) are block nodes and $c_i$ ($i\in[k+1]$) are cut vertex nodes. We can identify this path $\FO$ using the BC-tree related $\FO$ primitives described in~\cref{sec:app:triconn}. After the edge $\{a,b\}$ is inserted in the graph $G$, the blocks $B_a,B_b$ and $B_i$ ($i\in[k]$) coalesce together to form one biconnected component, say $B_{ab}$ (see \cref{fig:otbca}). Let the SPQR-tree of the block $B_{ab}$ be $T_3(B_{ab})$. 
\begin{proposition}
	$P_i = \{c_i,c_{i+1}\}$ for $i\in[k]$ are P-nodes in $T_3(B_{ab})$. Also, there is a cycle component node (S-node) in $T_3(B_{ab})$ that consists of the edge $\{a,b\}$ and the virtual edges corresponding to the separating pairs $\{c_i,c_{i+1}\}$.
\end{proposition}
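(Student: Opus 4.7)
The plan is to establish the two assertions in turn: first that each pair $\{c_i, c_{i+1}\}$ is a $3$-connected separating pair in $B_{ab}$ (hence a P-node by the characterization used in Section~\ref{sec:app:triconn}), and second that a canonical cycle component consisting of the inserted edge together with the virtual edges arising from the coalesced path exists as an S-node in $T_3(B_{ab})$.

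For the first assertion, I would begin by showing $\{c_i, c_{i+1}\}$ is a separating pair in $B_{ab}$. The block $B_i$ lies between cut vertices $c_i$ and $c_{i+1}$ on the BC-tree path, so in the original graph every path leaving $V(B_i)\setminus\{c_i,c_{i+1}\}$ must pass through one of $c_i,c_{i+1}$. The insertion of $\{a,b\}$ adds only the connection between $a\in V(B_a)$ and $b\in V(B_b)$, and for $i\in[k]$ neither $a$ nor $b$ lies in $V(B_i)$. Hence removing $\{c_i,c_{i+1}\}$ from $B_{ab}$ still disconnects $V(B_i)\setminus\{c_i,c_{i+1}\}$ from the remaining vertices. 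Next I would exhibit three internally vertex-disjoint paths between $c_i$ and $c_{i+1}$: since $B_i$ is biconnected, Menger's theorem (cf.\ \cite[Theorem 3.3.5]{Diestel}) supplies two internally disjoint paths within $B_i$; a third path uses the ``detour'' through the inserted edge, namely $c_i\sqline c_{i-1}\sqline\cdots\sqline c_1\sqline a$ inside $B_{i-1}\cup\cdots\cup B_1\cup B_a$, then the edge $\{a,b\}$, and then $b\sqline c_{k+1}\sqline\cdots\sqline c_{i+1}$ inside $B_b\cup B_k\cup\cdots\cup B_{i+1}$, each segment being realisable by biconnectivity of the corresponding block. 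Thus $\{c_i,c_{i+1}\}$ is a $3$-connected separating pair in $B_{ab}$, so by definition it is a P-node of $T_3(B_{ab})$.

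For the second assertion, I would identify the cycle component explicitly. Consider the subgraph of $B_{ab}$ (viewed with the virtual edges of $T_3(B_{ab})$) consisting of the actual edge $\{a,b\}$ together with virtual edges $\{a,c_1\}$, $\{c_1,c_2\},\ldots,\{c_k,c_{k+1}\}$, $\{c_{k+1},b\}$; these $k+3$ edges close into a simple cycle $S_{ab}$ on the vertex set $\{a,c_1,\ldots,c_{k+1},b\}$, with the $k$ virtual edges $\{c_i,c_{i+1}\}$ representing precisely the P-nodes identified above (and a symmetric argument shows $\{a,c_1\}$ and $\{c_{k+1},b\}$ are also $3$-connected separating pairs of $B_{ab}$ whenever $B_a$, respectively $B_b$, is not a single edge). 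To see that $S_{ab}$ is an S-node and not absorbed into a larger triconnected component, observe that within the full graph $B_{ab}$ the chain of blocks contracted along each virtual edge of $S_{ab}$ contains additional vertices lying on neither of the two faces bounded by $S_{ab}$ in any embedding, so $S_{ab}$ together with any other triconnected component would fail to be $3$-connected along its separating pairs. Thus $S_{ab}$ is a maximal cycle in the virtual-edge picture and is the S-node claimed.

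The main obstacle is the second part, specifically verifying that the cycle $S_{ab}$ emerges as a genuine cycle component of the canonical triconnected decomposition rather than being merged into some neighbour. This requires carefully combining the block decomposition of the pre-insertion graph (which gives the BC-tree path) with the uniqueness of the triconnected decomposition based on $3$-connected separating pairs (as used throughout Sections~\ref{sec:prelims} and~\ref{sec:app:triconn}), and arguing that the only $3$-connected separating pairs newly introduced along the coalescing path are exactly the $\{c_i,c_{i+1}\}$ (together with the two boundary pairs when applicable), so that $S_{ab}$ is forced to be a cycle-type node incident to precisely these P-nodes.
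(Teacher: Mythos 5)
The paper states this proposition without proof, so there is nothing to compare against; judged on its own, your argument for the first assertion is correct and is the natural one. Removing $\{c_i,c_{i+1}\}$ isolates $V(B_i)\setminus\{c_i,c_{i+1}\}$ because $a,b\notin V(B_i)$ (the paper assumes $a,b$ are not cut vertices), and the two Menger paths inside $B_i$ together with the detour $c_i \to c_{i-1} \to \cdots \to c_1 \to a$, the edge $\{a,b\}$, and $b \to c_{k+1} \to \cdots \to c_{i+1}$ give three internally disjoint $c_i,c_{i+1}$-paths; the two halves of the detour live in vertex-disjoint unions of blocks, so disjointness does hold. One caveat you should flag: if some $B_i$ is the single edge $\{c_i,c_{i+1}\}$, then $V(B_i)\setminus\{c_i,c_{i+1}\}=\emptyset$, the pair is \emph{not} a separating pair of $B_{ab}$, and that edge sits on the cycle as a real rather than virtual edge; the proposition glosses over this degenerate case, but your separating-pair argument silently requires $B_i$ to be nontrivial. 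You are also right, and more careful than the statement itself, to include the virtual edges $\{a,c_1\}$ and $\{c_{k+1},b\}$ in the cycle.

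The genuine gap is in your argument that $S_{ab}$ is actually an S-node of the decomposition rather than being absorbed into a larger component. The sentence about vertices ``lying on neither of the two faces bounded by $S_{ab}$ in any embedding, so $S_{ab}$ together with any other triconnected component would fail to be $3$-connected along its separating pairs'' does not establish anything: the triconnected decomposition is a purely combinatorial object, and ``$3$-connected along its separating pairs'' is not the membership criterion for a component. The clean route is the characterization the paper itself uses in Section~\ref{sec:app:triconn}: vertices lie in a common triconnected component iff no $3$-connected separating pair disconnects them. Concretely one checks (i) deleting any $3$-connected separating pair of $B_{ab}$ --- one of $\{c_i,c_{i+1}\}$, $\{a,c_1\}$, $\{c_{k+1},b\}$, or a pair internal to a single block --- leaves the surviving vertices of $\{a,c_1,\dots,c_{k+1},b\}$ connected, the cycle itself supplying the connections; (ii) every vertex of $V(B_j)\setminus\{c_j,c_{j+1}\}$ is separated from $a$ by the $3$-connected separating pair $\{c_j,c_{j+1}\}$ (respectively $\{a,c_1\}$ for $B_a$ and $\{c_{k+1},b\}$ for $B_b$), so no further vertex joins the component; and (iii) the component is a cycle node and not rigid, since for instance deleting $\{a,c_2\}$ disconnects $c_1$ from $b$ in $B_{ab}$ once the cycle has length at least four. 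Your closing paragraph gestures at (ii) but never carries it out, and the face-based detour should be replaced by this combinatorial argument.
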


The virtual edges $\{c_i,c_{i+1}\}$ become part of the biconnected component $B_{ab}$. All these virtual edges need to be inserted in $G[V(B_{ab})]$. Seemingly, we have to deal with insertion of arbitrarily many edges in one step. But, notice that the type of these edges is $\es[+]{2}{3}$ or $\es[+]{3}{3}$ and they are all in separate old blocks $B_a,B_b$ or $B_i$ ($i\in[k]$). Assuming that we can handle insertion of a single edge of the type $\es[+]{2}{3}$, we can handle insertion of all the above edges in one step in parallel. We defer the discussion of single edge insertion of the type $\es[+]{2}{3}$ to the next section.
 
We check the validity of the insertion of $\{a,b\}$ by using the planarity criteria in Lemma~\ref{lem:planBTest}.

\subsection{Handling an edge of type $\es[+]{2}{3}$}\label{subsec:ptth}
To handle an edge $\{a,b\}$ insertion such that $a$ and $b$ are in the same biconnected component, we do the following. Let the $B_i$ be the block in which both the vertices $a,b$ lie in, i.e, $a,b\in V(B_i)$. Let $T_3(B_i)$ be the triconnected decomposition tree of the block $B_i$. Let $R_a$ and $R_b$ be the R-nodes such that the vertices $a$ and $b$ lie in respectively, i.e, $a\in V(R_a)$, $b\in R_b$. Let $R_a, P_1, R_1, \ldots, R_k, P_{k+1}, R_b$ be the path between the $R_a$ and $R_b$ nodes on the triconnected decomposition tree such that each $R_i$, $i \in [k]$, is a R-node and each $P_i$, $(a_i,b_i)$, ($i\in[k+1]$) is a P-node. We can identify this path using the $\FO$ primitives related to SPQR-tree described in Section~\ref{sec:planarFO}. %We are assuming that there are no S-nodes on the path. 
\begin{proposition}
	After the insertion of the edge $\{a,b\}$, all the vertices in $V(R_a)\cup V(R_b)\cup_{i\in[k]} V(R_i)$ are in a common rigid component. 
\end{proposition}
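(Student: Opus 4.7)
The plan is to establish that after inserting $\{a,b\}$, no $3$-connected separating pair of $G + \{\{a,b\}\}$ disconnects any two vertices of the union $U := V(R_a) \cup V(R_b) \cup \bigcup_{i\in[k]} V(R_i)$, so that $U$ lies entirely inside a single triconnected component, which one then argues is rigid (an R-node). Planarity of $G + \{\{a,b\}\}$ is already guaranteed by \cref{lem:ptest}, since by hypothesis the edge is of type $\es[+]{2}{3}$ and hence the $R_a \sqline R_b$ path is coherent.

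First I would show that each P-node $P_j = \{u_j, v_j\}$, $j \in [k+1]$, ceases to be a separating pair in $G + \{\{a,b\}\}$. By the SPQR-tree structure of $B_i$, the removal of $V(P_j)$ from $G$ partitions the spine vertices into two sides: those on the $R_a$-side (containing $V(R_a) \cup \bigcup_{i<j} V(R_i)$, modulo $V(P_j)$) and those on the $R_b$-side (containing $V(R_b) \cup \bigcup_{i\ge j} V(R_i)$, modulo $V(P_j)$). Since $a \in V(R_a)$ sits on the first side and $b \in V(R_b)$ on the second, the new edge $\{a,b\}$ bridges these sides even after removing $V(P_j)$. Hence no $P_j$ is a separating pair of $G + \{\{a,b\}\}$.

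Next, inserting an edge can only weaken vertex cuts, so every separating pair of $G + \{\{a,b\}\}$ was already a separating pair of $G$. Within the biconnected component $B_i$, the $3$-connected separating pairs that separate two vertices of $U$ are exactly $P_1,\ldots,P_{k+1}$: any pair internal to a single $R_i$ cannot separate two vertices of that same $R_i$ because $R_i$ with its virtual edges is $3$-connected, and the corresponding expansion into $G$ preserves this property (the virtual edge can be replaced by the vertex-disjoint paths existing across the other side of the SPQR-tree). Combined with the previous step, this shows that no $3$-connected separating pair of $G + \{\{a,b\}\}$ separates two vertices of $U$, so $U$ is contained in one triconnected component $R_{ab}$. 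Finally, $R_{ab}$ is a rigid component (not a cycle S-node or pair P-node) because $R_a$ alone already contributes a $3$-connected subconfiguration on at least four vertices that survives the coalescence, which rules out the S- and P-shapes.

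The main obstacle I anticipate is the careful treatment of the second step, namely ruling out that a pair of vertices inside a single $R_i$ (with its virtual edges) could become a $3$-connected separating pair of $G + \{\{a,b\}\}$ after the insertion. One has to unpack what ``rigid component with virtual edges'' means in terms of substitutions back into $G$: a virtual edge of $R_i$ for $P_j$ represents the existence of two internally vertex-disjoint paths through the rest of $B_i$, and those paths, together with the new edge $\{a,b\}$ when $P_j$ lies on the spine, provide the extra connectivity needed to defeat any candidate $2$-cut inside $R_i$.
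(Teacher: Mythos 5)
The paper states this proposition without proof, so there is no ``paper proof'' to compare against; judged on its own, your argument follows the natural route (show no $3$-connected separating pair of $G+\{\{a,b\}\}$ separates two vertices of $U$, using that new separating pairs cannot appear and that the old spine pairs are defeated by the new edge), and its overall structure is sound. However, your first step contains a concrete overclaim: it is not true in general that each $P_j$ \emph{ceases to be a separating pair} of $G+\{\{a,b\}\}$. A P-node can have degree three or more in the SPQR-tree, i.e.\ $G-V(P_j)$ can have three or more components, only two of which (the $R_a$-side and the $R_b$-side of the spine) are reconnected by the edge $\{a,b\}$; any further branch hanging off $P_j$ still witnesses that $V(P_j)$ is a separating pair afterwards, and indeed $P_j$ survives as a P-node adjacent to the coalesced $R_{ab}$. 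What your bridging argument actually establishes --- and what the proposition needs --- is the weaker statement that $V(P_j)$ no longer separates any two vertices of $U$, since all of $U\setminus V(P_j)$ ends up in one component of $(G+\{\{a,b\}\})-V(P_j)$. You should restate the conclusion of that step accordingly; the rest of the argument then goes through. Two smaller points: when you argue that every relevant cut of $G+\{\{a,b\}\}$ was already one of $G$, note that adding an edge can promote a pair from merely separating to $3$-connected separating, so you should quantify over all separating pairs of $G$ that split $U$ (not only the $3$-connected ones) --- these are still exactly the spine P-nodes, so the conclusion stands; and your argument implicitly uses the paper's standing assumptions for this case, namely that $a,b$ are not themselves separating-pair vertices and that no S-nodes lie on the $R_a\sqline R_b$ path, which the paper removes separately afterwards.
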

Let the new rigid component in the above lemma be $R_{ab}$. When the edge $\{a,b\}$ is inserted the rigid components $R_a, R_1, \ldots, R_k,R_b$ coalesce into the new rigid component, $R_{ab}$. See \cref{fig:two3}. We need to update the embedding of  this new rigid component, $R_{ab}$. It is done in the following manner:
\begin{enumerate}
	\item \label{it:al1} Identify the block $B_i$ in which the vertices $a$ and $b$ lie in. Identify the  R-nodes, $R_a$ and $R_b$ in the $T_3(B_i)$, the SPQR-tree of $B_i$, such that $a\in V(R_a)$ and $b\in R_b$.
	\item From the betweenness $\FO$ primitive (see Section~\ref{sec:app:triconn}) pertaining to $T_3(B_i)$, identify the triconnected component nodes and the P-nodes that are in-between $R_a$ and $R_b$ nodes. Let the $R_a\sqline R_b$ path on $T_3(B_i)$ be $R_a, P_1,R_1,\ldots R_k,P_{k+1},R_b$, where $R_i$ are the triconnected component nodes and the $P_i$ are the P-nodes. Check the validity of $\{a,b\}$ insertion by using the planarity criteria laid out in \cref{lem:ptest} and looking up the face-vertex rotation scheme for each of $R_i$ in $R_a,R_1,\ldots,R_k,R_b$. We only proceed if the test is passed, otherwise we discard the edge insertion query. 
	\item Finding the flips.
	\begin{alphaenumerate}
		\item We find the face in the embedding of $R_a$ that the vertices $a,a_1,b_1$ lie on its boundary. Let the face be $F_a$. Also, for $R_b$ we find the face (say $F_b$) that the vertices $b, a_{k+1},b_{k+1}$ lie on the boundary of. Similarly, for each $R_i$ ($i\in[k+1]$), we find the face $F_i$ that the vertices $V(P_i)\cup V(P_{i+1})$ lie on the boundary of.  Then we modify the extended planar embeddings of $R_a,R_b$ and $R_i$s($i\in[k+1]$) by making the  faces $F_a,F_b$ and $F_i$ their outer faces (if they are not already) respectively using the \emph{make outer face} $\FO$ primitive described in Section~\ref{sec:planarFO}. Now, Assume, without loss of generality, that the vertices $a,a_1,b_1$ lie in that clockwise order  on the boundary of the face $F_a$. From the two-colouring relation pertaining to the coherent path $P_1,P_2,\ldots,P_{k+1}$ find out the colour of the vertices $a_1$ and $b_1$. Assume that they are coloured $0$ and $1$ respectively.
		\item For each intermediate triconnected component $R_i$ choose the correct embedding to be composed together with the embedding of $R_a$ as follows. If the vertices $a_i,a_{i+1},b_i$ and $b_{i+1}$ are coloured as $0,0,1$ and $1$ respectively (see Lemma~\ref{lem:p2Col}) and the vertices appear in the clockwise order $a_i,a_{i+1},b_{i+1},b_i$ on the boundary of the face, then keep the old embedding. Otherwise, if they appear in the order $a_i,b_i,b_{i+1},a_{i+1}$, flip the current embedding (choose the reflected embedding) using the $\FO$ primitive \emph{flip} described in Section~\ref{sec:planarFO}. 
		\item  For $R_b$, choose its correct embedding as follows. The vertices $a_{k+1},b_{k+1},b$ lie on the boundary of face $F_b$. If the vertices $a_{k+1},b_{k+1}$ are coloured $0$ and $1$ respectively and the appear in the clockwise order $a_{k+1},b,b_{k+1}$ on the boundary of the face then keep the old embedding. Otherwise, if they appear in the clockwise order $a_{k+1},b_{k+1},b$ flip the embedding of $R_b$ using the $\FO$ primitive \emph{flip}.
	\end{alphaenumerate}
	 
	\item \textbf{New faces in the embedding of $R_{ab}$.} There are two special faces in the embedding of $R_{ab}$. say $F_0$ and $F_1$, such that all the separating pair vertices in $\cup_{i\in [k+1]}V(P_i)$ that are coloured $0$ appear on the boundary of $F_0$ and the vertices that are coloured $1$ appear on the boundary of $F_1$. 	
	Compute the face vertex rotation scheme for these two faces as follows. First, compute the face-vertex rotation scheme for the vertices on the boundary of the face $F_0$ restricted to separating pair vertices (coloured $0$) from the betweenness relation pertaining to $T_3(B_i)$ as follows: to decide whether separating pair vertices $u,v,w$ that are coloured $0$ appear in that clockwise order on the boundary of face $F_0$ check if the separating pairs $P_{i_1},P_{i_2},P_{i_3}$ (or any right cyclic shift of this sequence, e.g, $P_{i_2},P_{i_3},P_{i_1}$) appear in that order on the path between $R_a$ and $R_b$ in $T_3(B_i)$ such that $u\in V(P_{i_1}),v\in V(P_{i+2})$ and $w\in V(P_{i+3})$. If not, then conclude that $u,v,w$ do not appear in that clockwise order. Similarly, compute the face vertex rotation scheme for the face $F_1$. See \cref{fig:twoCol}. From the above computed face vertex rotation scheme restricted to separating pair vertices, compute the complete face vertex rotation scheme for the faces $F_0$ and $F_1$.
	\item For each pair of consecutive triconnected components $R_i,R_{i+1}$ there is new face in the embedding of $R_{ab}$. Compute the face vertex rotation scheme pertaining to such faces by merging the two faces using the $\FO$ primitive \emph{merge two faces} described in Section~\ref{sec:planarFO}. The faces to be merged are one of $R_i$ that has the vertices $a_i,b_i$ on its boundary but not both the vertices $a_{i-1}$ and $b_{i-1}$, and one of $R_{i+1}$ that has vertices $a_i,b_i$ on its boundary but not both the vertices $a_{i+1}$ and $b_{i+1}$.
	\item Compose the vertex rotation scheme at the separating pair vertices $a_i,b_i$  from the embeddings of all the rigid components that are incident on them using the $\FO$ primitive \emph{merge vertex rotation schemes} as follows. If $a_i$ is coloured $1$ then the vertex rotation scheme of $a_i$ in rigid components (on the $R_a\sqline R_b$ SPQR-tree path) that contain $a_i$ are merged in the order of closeness to node $R_a$ on the path. The closeness can be determined by accessing the betweenness related $\FO$ primitives. On the other, hand if $a_i$ is coloured $0$, then the vertex rotation schemes are merged in the order of closeness to the node $R_a$ on the path. Similarly, vertex rotation schemes at vertex $b_i$ are merged based on the colour. 
	\item Filter out the face vertex triples that are no longer valid. For example, triples that have vertices spanning faces $F_0$ and $F_1$.              
\end{enumerate}

We point out that the above description makes the following assumptions: (i) $a$ and $b$ are not part of any separating pair vertices in $T_3(B_i)$, i.e, rigid components that contain $a$ and $b$ are uniquely identified as $R_a$ and $R_b$ respectively (ii) there are no cycle components between $R_a$ and $R_b$ on the path in $T_3(B_i)$. To get rid of assumption (ii) we do the following. Consider a cycle component $S_j$ on the path between $R_a$ and $R_b$, such that the path is $R_a,\ldots, P_l, S_l, P'_l, \ldots, R_b$, where $P_l$ and $P'_l$ are the P-nodes that are present in $S_l$ as virtual edges. Let $V(P_l)=\{u,v\}$ and $V(P'_l)=\{w,x\}$. Let the cyclic order of these vertices on the cycle be $u,v,w,x$. Then modify the cycle component by adding the virtual edges $\{v,w\}$ and $\{x,u\}$ and splitting the cycle component at these virtual edges into two new cycle components. %as shown in the figure.  
The vertices $u,v,w,x$ will belong to the R-node $R_{ab}$ after the insertion. Replace the $S_l$ node with the cycle on the vertices $u,v,S,w,x$ and treat it as a rigid component.

To get rid of assumption (i) we do the following. Suppose that the vertices $a$ and $b$ are part of a separating pair. They can't be part of the separating pair since the edge $\{a,b\}$ is a $\es[+]{2}{3}$ edge. Consider the set of all separating pairs of $B_i$ that contain $a$ as $X$ and the set of all separating pairs that contain $b$ as $Y$. Now, select the pair of separating pairs $P_x\in X, P_y \in Y$ such that there exists no $P'_x\in X$ and there exists no $P'_y\in Y$ such that neither $P'_x$ nor $P'_y$ appear on the path between $P_x$ and $P_y$ in $T_3(B_i)$. Having selected $P_x$ and $P_y$, identify the $R_a$ and $R_b$ as the triconnected component nodes such that they are neighbours of $P_x$ and $P_y$ respectively and they appear on the path between $P_x$ and $P_y$.  

\begin{figure}
	\centering
	\includegraphics[width=0.25\textwidth]{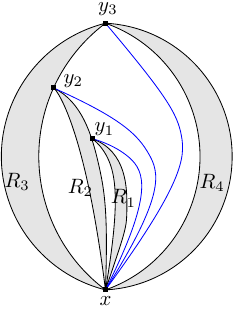}
	\caption{A biconnected component with rigid components $R_1,R_2,R_3$ and $R_4$. Virtual edges are in blue. If an edge is inserted between a vertex of $R_1$ and a vertex of $R_4$, all the virtual edges vanish. }
	\label{fig:deleteMulti}
\end{figure}

We now describe how to handle a deletion based on the type of the deleted edge $\{a,b\}$. 
\subsection{Handling an edge of type $\es[-]{3}{3}$}\label{subsec:mthth}
Recall that a $\es[-]{3}{3}$ edge is such that the endpoints of the edge are in a common $3$-connected component before as well as after the change. The updates required are straightforward. The SPQR-tree of the block to which the edge belongs to, doesn't change, by definition. Embedding of the R component, say $R_{ab}$, only changes in that the two faces are merged into one. We identify the two faces that the edge $\{a,b\}$ belongs to, say $F_a$ and $F_b$. We merge their face-vertex rotation scheme using the $\FO$ primitive \emph{merge two faces} described in Section~\ref{sec:planarFO}. 
If any of the faces $F_a$ and $F_b$ was the outer face of $R_{ab}$ in the previous step, we declare the new face $F_{ab}$ as the outer face of $R_{ab}$. The Vertex rotation scheme of vertices of $R_{ab}$ other than $a$ and $b$ doesn't change. To update the vertex rotation scheme of $a$ we simply drop the triples that contain the vertex $b$ from the vertex rotation scheme of $a$ before the change. Similarly, we update the vertex rotation scheme of $b$ by dropping triples that contain the vertex $a$.   
\subsection{Handling an edge of type $\es[-]{3}{2}$}\label{subsec:mtht}
Recall that a $\es[-]{3}{2}$ edge is such that its endpoints $a,b$ are in a common rigid component before the edge deletion but are in different rigid components and in the same block after the deletion. Let the common rigid component that $a$ and $b$ lie in be $R_{ab}$. Deletion of such an edge gives rise to new separating pairs and thus, new virtual edges. These virtual edges are between vertices $V(R_{ab})$. We can find these separating pairs from the updated SPQR-tree relations that we maintain, in $\FO$. Now, we merge the two faces of $R_{ab}$ that the edge $\{a,b\}$ lies on the boundary of to get a face $F_{ab}$ using the \emph{merge two faces} $\FO$ primitive. This merged face contains all vertices of new P-nodes on its boundary.    
After the deletion of $\{a,b\}$, $R_{ab}$ decomposes into multiple rigid components and we need to update their face-vertex rotation scheme. Those faces of any new rigid node that do not have any vertex of the new P-nodes incident on their boundary, their face vertex rotation scheme can be inherited from the embedding of $R_{ab}$ from the previous step. For the other faces in new R-nodes, we infer their face-vertex rotation scheme from the face-vertex rotation scheme of the faces of $R_{ab}$ by taking only those triples, that contain vertices of the R-node, into consideration.

The two-colouring of separating pair vertices relation needs to be updated, because after the deletion of $\{a,b\}$, new separating pairs are introduced. In the SPQR-tree of $B_i$ after the edge $\{a,b\}$ has been deleted, $R_{ab}$ unfurls into the path $R_a, P_1, R_1, \ldots, R_k, P_{k+1}, R_b$, where $R_a, R_1, \ldots, R_k, R_b$ are new triconnected components and $P_1, \ldots, P_{k+1}$ are new separating pairs, with the old P-nodes that were neighbours of $R_{ab}$, now are adjacent to appropriate new node amongst $R_a, R_1,\ldots, R_k, R_b$. See Figure~\ref{subfig:two3spqr}. Restricted to this unfurled path we update the two-colouring of these separating pair vertices using the face-vertex rotation scheme of $F_{ab}$. After that, any old path that passed through $R_{ab}$ node, it's two-colouring of P-node vertices needs consistently merging the colouring of the P-node vertices of the unfurled path.
 
\subsection{Handling an edge of type $\es[-]{2}{1}$}\label{subsec:mto}
Recall that a $\es[-]{2}{1}$ edge is such that its endpoints are in a common biconnected component before the deletion but in different biconnected components after the deletion and in the same connected component. Let the common biconnected component that the vertices $a$ and $b$ lie in be $B_{ab}$.  It is clear that such an edge must have been part of a cycle component.
\begin{proposition}
	The edge $\{a,b\}$ is part of a cycle component in the triconnected decomposition of $B_{ab}$.
\end{proposition}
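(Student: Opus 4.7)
The plan is to prove the contrapositive: if $\{a,b\}$ is not in any cycle component of the SPQR-tree $T_3(B_{ab})$, then $B_{ab} - \{\{a,b\}\}$ remains biconnected, contradicting the assumption that $\{a,b\}$ has type $\es[-]{2}{1}$. Real edges of $B_{ab}$ reside only in R-nodes and S-nodes (P-nodes consist purely of virtual edges corresponding to separating pairs, and Q-nodes, being bridges, cannot occur inside a biconnected component of at least three vertices). Hence ruling out R-nodes will force $\{a,b\}$ into an S-node.

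Suppose for contradiction that $\{a,b\}$ lies in a rigid component $R$. Since $R$ is $3$-connected, Menger's theorem guarantees at least three internally vertex-disjoint paths between any two vertices of $R$, and in particular between $a$ and $b$. At most one of these paths is the edge $\{a,b\}$ itself, so $R - \{\{a,b\}\}$ still contains at least two internally vertex-disjoint $a\sqline b$ paths; for any other pair of vertices in $R$, all three original vertex-disjoint paths survive. Thus $R - \{\{a,b\}\}$ is $2$-connected.

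Next I would invoke the standard structural property of SPQR-trees: a biconnected graph is obtained by gluing its triconnected components along virtual edges following the SPQR-tree, and if every triconnected component of this decomposition is itself $2$-connected (treating P-nodes as bonds of parallel edges), then the glued graph is biconnected. After deleting $\{a,b\}$, only the component $R$ is modified, and by the previous paragraph $R - \{\{a,b\}\}$ remains $2$-connected; all other components and the tree structure are untouched. A straightforward induction on the SPQR-tree (gluing two biconnected graphs at a separating pair preserves biconnectivity, as every pair of vertices in the combined graph admits two internally vertex-disjoint paths obtained by concatenation through the shared pair) then yields that $B_{ab} - \{\{a,b\}\}$ is biconnected, contradicting $\{a,b\}$ being of type $\es[-]{2}{1}$.

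The main obstacle is simply to set up the gluing/induction step cleanly; once this standard SPQR-tree synthesis lemma is in hand the result is immediate. The remainder of the argument is just Menger applied to the $3$-connected rigid component and the observation that real edges are confined to R-nodes and S-nodes.
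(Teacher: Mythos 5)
The paper offers no proof of this proposition at all --- it is prefaced only by ``It is clear that such an edge must have been part of a cycle component'' --- so your proposal is supplying an argument where the paper merely asserts the claim. Your route is sound and is essentially the natural one: real edges live only in R- and S-nodes under the paper's convention (P-nodes are pure separating-pair nodes), a $3$-connected component minus an edge is still $2$-connected, and reassembling the triconnected components along virtual edges preserves biconnectivity, so an edge sitting in a rigid component cannot be of type $\es[-]{2}{1}$. Two small points. First, your claim that for pairs other than $\{a,b\}$ ``all three original vertex-disjoint paths survive'' is not quite right --- the deleted edge may appear as an internal edge of one of the three paths --- but since internally disjoint paths are edge-disjoint away from their endpoints, at most one path is destroyed and two survive, which is all you need. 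Second, the gluing induction you defer to can be bypassed: since $R$ with its virtual edges is $3$-connected, there are three internally disjoint $a\sqline b$ paths in $R$, and expanding each virtual edge into a path through its split component yields three internally disjoint $a\sqline b$ paths in $B_{ab}$ directly; then two survive the deletion, and any cut vertex of $B_{ab}-\{\{a,b\}\}$ would have to separate $a$ from $b$ (because $B_{ab}$ itself has none), which is now impossible. This avoids having to formalize the $2$-sum lemma. Either way, the argument is correct.
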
 
Deletion of the edge $\{a,b\}$ gives rise to new cut vertices in $G[V(B_{ab})]$. Let the BC-tree of $G[V(B_{ab})]-\{\{a,b\}\}$ be $T_2(G[V(B_{ab})]-\{\{a,b\}\})$. The tree is just a path $B_a, c_1, B_1, \ldots, B_k,$ $c_{k+1}, B_b$, where $B_a,B_1,\ldots, B_k,B_b$ are block nodes such that $a\in V(B_a)$ and $b\in V(B_b)$ and $c_1,\ldots,c_{k+1}$ are cut vertices. Then, we have the following.
\begin{proposition}
	There are virtual edges between $\{c_i,c_{i+1}\}$, $i\in [k]$ in the triconnected decomposition of $B_{ab}$ before the edge $\{a,b\}$ is deleted. Moreover, $c_i$ and $c_{i+1}$, for all $i\in[k]$ together lie in a $3$-connected component in the triconnected decomposition of $B_{ab}$.
\end{proposition}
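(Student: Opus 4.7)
The plan is to establish both parts of the proposition by analysing how the edge $\{a,b\}$ interacts with the cut structure of $G[V(B_{ab})] - \{\{a,b\}\}$. For the first part, I would fix $i \in [k]$ and show that $\{c_i, c_{i+1}\}$ is a $3$-connected separating pair in $B_{ab}$, which by definition of the triconnected decomposition yields a P-node in the SPQR-tree of $B_{ab}$ and thereby a virtual edge on $\{c_i, c_{i+1}\}$.

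To see that $\{c_i,c_{i+1}\}$ separates $B_{ab}$, observe that in $G[V(B_{ab})]-\{\{a,b\}\}$ the BC-tree is the path $B_a, c_1, B_1, \ldots, c_{k+1}, B_b$, so removing the vertices $c_i$ and $c_{i+1}$ splits the graph into at least three pieces: everything on the $B_a$-side of $c_i$, the interior of $B_i$ strictly between $c_i$ and $c_{i+1}$, and everything on the $B_b$-side of $c_{i+1}$. Reinstating the edge $\{a,b\}$ to recover $B_{ab}$ only fuses the first and third pieces (since $a$ lies in the first and $b$ lies in the third), leaving the middle piece still isolated. Hence $B_{ab} - \{c_i,c_{i+1}\}$ is disconnected and $\{c_i,c_{i+1}\}$ is a separating pair.

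For $3$-connectedness of the pair, I would exhibit three internally vertex-disjoint paths between $c_i$ and $c_{i+1}$ in $B_{ab}$. Since $c_i, c_{i+1} \in V(B_i)$ and $B_i$ is a block (biconnected), Menger's theorem supplies two internally vertex-disjoint paths between them lying entirely inside $B_i$. For the third path, concatenate: a path from $c_i$ to $a$ walking backwards through the blocks $B_{i-1}, \ldots, B_1, B_a$ using the BC-tree structure (which stays in $B_{ab}$ and avoids the interior of $B_i$), the edge $\{a,b\}$ itself, and a symmetric path from $b$ forward through $B_{i+1}, \ldots, B_k, B_b$ to $c_{i+1}$. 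This third path is internally disjoint from the two interior-of-$B_i$ paths since it only meets $V(B_i)$ at the endpoints $c_i, c_{i+1}$. The three paths give the required $3$-connectedness, so $\{c_i,c_{i+1}\}$ forms a $3$-connected separating pair and the corresponding virtual edge is part of the decomposition of $B_{ab}$.

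For the second part, I would identify the cycle component explicitly. The actual edge $\{a,b\}$ together with the virtual edges $\{a,c_1\}, \{c_1,c_2\}, \ldots, \{c_k,c_{k+1}\}, \{c_{k+1},b\}$ (each arising from a separating pair just established, or from $a=c_1$ / $b=c_{k+1}$ in degenerate cases) forms a cycle on the vertices $a, c_1, \ldots, c_{k+1}, b$ inside $B_{ab}$; this is precisely the S-node of the SPQR-tree of $B_{ab}$ that contains the edge $\{a,b\}$. All of $c_1, \ldots, c_{k+1}$ lie on this cycle, so they together inhabit one common triconnected (cycle) component, as claimed. The main subtlety to be careful about is the construction of the third disjoint path when $i=1$ or $i=k$ (so that the ``backwards'' or ``forwards'' sub-path shrinks) and the handling of the degenerate case when $a$ itself is $c_1$ or $b$ is $c_{k+1}$; in both situations the argument simplifies but should be spelled out to confirm the virtual edges and the cycle component are still as described.
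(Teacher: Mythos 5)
The paper states this proposition without proof, so there is no in-paper argument to compare against line by line; judged on its own merits, your proposal supplies exactly the justification the paper is implicitly relying on, and its core is sound. The Menger-style construction of three internally vertex-disjoint $c_i$--$c_{i+1}$ paths (two inside the block $B_i$, one detouring through $a$, the edge $\{a,b\}$, and $b$) is the right way to certify that $\{c_i,c_{i+1}\}$ is a $3$-connected separating pair under the paper's definitions, and your identification of the cycle on $a,c_1,\ldots,c_{k+1},b$ as the S-node containing $\{a,b\}$ matches what the paper asserts elsewhere for $\es[+]{1}{2}$ insertions (the component $S_{ab}$); indeed you are more careful than the paper in including the virtual edges $\{a,c_1\}$ and $\{c_{k+1},b\}$ needed to close that cycle. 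Two caveats. First, besides the degeneracies you flag ($a=c_1$, $b=c_{k+1}$, and $i\in\{1,k\}$), there is one you do not: if some $B_i$ is a single edge, i.e.\ $\{c_i,c_{i+1}\}$ is a bridge of $G[V(B_{ab})]-\{\{a,b\}\}$, then your ``middle piece'' is empty, $\{c_i,c_{i+1}\}$ need not be a separating pair of $B_{ab}$ at all, and the pair appears in the cycle component as an actual rather than virtual edge; the proposition is literally true only when each internal block has nonempty interior. Second, what your cycle-component argument establishes is that the $c_i$ lie in a common \emph{triconnected} component (an S-node); by the paper's own terminology a cycle is not a ``$3$-connected component'', and the stronger pairwise reading --- that each $\{c_i,c_{i+1}\}$ lies in a common rigid component, which is what the paper's subsequent claim that these virtual-edge deletions are of type $\es[-]{3}{2}$ or $\es[-]{3}{3}$ actually needs --- can fail, e.g.\ when $B_i$ is itself a cycle; that situation falls into the $\es[-]{2}{2}$ corner case the paper defers. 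Neither caveat undermines your main construction, but both deserve a sentence.
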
 

We need to delete these virtual edges after the deletion of $\{a,b\}$ since they no longer separating are pairs. Due to the above proposition, we have that these virtual edge deletions are of type $\es[-]{3}{2}$ or $\es[-]{3}{3}$, whose deletion we have seen how to handle in the previous subsections. One crucial thing to note is that although we might have to delete an arbitrary number of virtual edges in one step, all of them in different $3$-connected components and the updates required are local to each deletion, i.e., the updates due to one edge do not conflict with the updates due to another edge.

\subsection{Handling an edge of type $\es[-]{1}{0}$}\label{subsec:moz}
A $\es[-]{1}{0}$ edge has endpoints in a common connected component before deletion, but after the deletion they become disconnected. Such edge deletion is easy to handle since the edge is not part of any $3$-connected component and in itself is a block that is removed after the deletion.
\section{Two-colouring of P-node Vertices and the Updates Required}
\label{sec:coloring}
The two-colouring relation of the separating pair vertices of each new biconnected component after the change also needs to be updated. We now describe how to update this relation.

We consider the updates required to the two-colouring of separating pair vertices relation after insertion of edge $\{a,b\}$ in the graph as follows depending on the type of the edge.   

\subsection{Updating two-colouring under edge of type $\es[+]{3}{3}$}
\label{subsec:cpthth}
The change of type $\es[+]{3}{3}$ does not change the SPQR-tree of the block that the vertices $a$ and $b$ lie in, by definition, since no new P-nodes are introduced. Let the R-node in which the vertices $\{a,b\}$ lie in be $R_{ab}$. For a path on the SPQR-tree that doesn't go through the R-node $R_{ab}$, its two-colouring from the previous step remains valid even after the insertion. However, for the paths between two P-nodes that go through the $R_{ab}$ node, the two-colouring of the previous might become invalid for the following reason. If the two P-nodes on the path that are neighbours of the $R_{ab}$ node, say $P_1$ and $P_2$, their vertices $V(P_1) \cup V(P_2)$ no longer lie on the boundary of a common face in the embedding of $R_{ab}$. As a result of $\{a,b\}$ insertion, the path's two-colouring becomes invalid because any future edge insertion across this path would violate planarity due to Lemma~\ref{lem:ptest}. For such paths, we break it into two paths, one for each side of the path along the  $R_{ab}$ node. These two paths inherit the two-colouring from the old path, while the old path itself is expunged from the two-colouring relation. We can identify the paths for which this change is required by looking up the betweenness relation of the SPQR-tree to see if the $R_{ab}$ node lies on the path and by looking up the face-vertex rotation scheme of $R_{ab}$ to check if the vertices in two P-nodes adjacent to $R_{ab}$ on the path lie on the boundary of a common face, in $\FO$. 

\subsection{Updating two-colouring under edge of type $\es[+]{2}{3}$}
\label{subsec:cptth}
Let the block that the vertices $a$ and $b$ lie in be $B_i$. By definition, after the insertion of an $\es[+]{2}{3}$ type edge, the BC-tree of $G$ doesn't change. We know that the path on the $T_3(B_i)$, the SPQR-tree of $B_i$, changes such that the rigid components $R_a, R_1,\ldots R_k, R_b$ coalesce into one $3$-connected component node $R_{ab}$, i.e the path is replaced by one node $R_{ab}$ with the neighbourhood of all the nodes on this path consolidated at the node $R_{ab}$. For any path between two P-nodes in $T_3(B_i)$ after the insertion of $\{a,b\}$, that does not go through $R_{ab}$ node, the two-colouring for the path can be continued from the previous step. However, if the path does pass through $R_{ab}$, we need to check if the colouring of the path from the previous step remains valid or not, similar to the case of $\es[+]{3}{3}$ that we discuss earlier. Basically, we check whether the vertices in the two P-nodes, that are adjacent to the $R_{ab}$ node in $T_3(B_i)$, lie together on the boundary of a common face or not, in the embedding of $R_{ab}$. If they do, we persist with the two-colouring of the path from the previous step, otherwise, we break the path into two paths each of which inherits the two-colouring from the previous step. These tests can be done effectively in $\FO$ by querying the face-vertex rotation schemes. The paths on SPQR-tree are identifiable in $\FO$ due the $\FO$ primitives described in Section~\ref{sec:app:triconn}.

\subsection{Updating two-colouring under edge of type $\es[+]{1}{2}$}
\label{subsec:cpot}
Let the connected component that the vertices $a$ and $b$ lie in be $CC_i$. Let the BC Tree of the $CC_i$ be $T_2(CC_i)$. Let $B_a,B_b\in V(T_2(CC_i))$ such that $a\in V(B_a), b\in V(B_b)$. Let $B_a, c_1, B_1, \ldots, B_k, c_{k+1}, B_b$ be the path between $B_a$ and $B_b$ block nodes on the BC tree such that $B_i$ are block nodes and $c_i$ are cut vertex nodes. After the insertion of $\{a,b\}$ the blocks $B_a,B_1,\ldots,B_k,B_b$ coalesce into one block, $B_{ab}$.  We need to update the two-colouring relations for all the paths in the SPQR tree of $B_{ab}$, $T_3(B_{ab})$. We know that there is a S-node in $T_3(B_{ab})$, call it, $S_{ab}$ such that $\{a,b\}\in E(S_{ab})$ and $\{c_i,c_{i+1}\}\in E(S_{ab})$ for $i\in[k]$. For the paths between two P-nodes say $P_s$ and $P_t$ such that $V(P_s), V(P_t)\subset V(B_i)$ the two-colouring relations are updated after inserting the virtual edge $\{c_i,c_{i+1}\}$ in $B_i$ (an insertion of type $\es[+]{2}{3}$) using the procedure described earlier.

However, for paths between two P-nodes, $P_s$ and $P_t$, that lie in two different old blocks, say $B_p$ and $B_q$ ($p,q\in[k]$) respectively, we update the two-colouring as follows. Let the path be $P_s,\ldots,\{c_i,c_{i+1}\},S_{ab},\{c_j,c_{j+1}\},\ldots,P_t$. Notice that the S-node $S_{ab}$ is one the path. We know a two-colouring pertaining to the path $P_s,\ldots,\{c_i,c_{i+1}\}$ from the two-colouring relation in $B_p$. We also know a two-colouring of the path $\{c_j,c_{j+1}\}\ldots, P_t$ from the two-colouring relation in $B_q$. We combine these two-colourings consistently by looking at the vertex order of the cut vertices $c_i,c_{i+1},c_j, c_{j+1}$ on the cycle $S_{ab}$. We might need to flip the two-colouring of one path in this process. Since these tests only require looking up the betweenness relation of BC-trees and the embedding of S and R-nodes, we can effectively perform them in $\FO$. 

\subsection{Updating two-colouring under edge between two vertices of an S-node}
\label{subsec:cptt}
This is a special case in which the to-be-inserted edge is a chord in a cycle component. As a result of such an insertion, the cycle component in consideration breaks into two new cycles along the chord $\{a,b\}$ and the SPQR-tree of the block that the vertices $a$ and $b$ lie in changes by splitting of the S-node into a path of length three with two new S-node and the new separating pair $\{a,b\}$. For any path between two P-nodes in the SPQR-tree that does not pass through the split S-node, its two-colouring is continued from that in the previous step. For paths that do pass through the split S-node, we only need to consistently colour the vertices $a$ and $b$ because now $\{a,b\}$ appears as a new separating pair on the path. We can do this by looking up the old embedding of the S-node in $\FO$.  

\subsection{Updating two-colouring under edge of type $\es[-]{3}{3}$}
\label{subsec:cmthth}
Due to the deletion of the edge $\{a,b\}$ two separating pairs which were on different faces $F_a, F_b$ before the deletion can happen to be on the same face $F_{ab}$ afterwards. Then pairs of coherent paths can join together because of the merged face $F_{ab}$. To update the two-colouring of the vertices of the P-nodes along the newly formed coherent path we may need to flip the colourings of the P-node vertices of one of the constituent subpaths and this can be done in $\FO$.

\subsection{Updating two-colouring under edge of type $\es[-]{3}{2}$}
\label{subsec:cmtht}
Due to the deletion, the node $R_{ab}$ unfurls into a coherent path
in the SPQR-tree. Thus it can combine with other coherent paths incident on one of its nodes to yield new maximal coherent paths. The two-colouring of the P-node vertices along the new coherent paths can be gleaned from the old colouring as in the previous case. This can be easily accomplished in $\FO$.

\subsection{Updating two-colouring under edge of type $\es[-]{2}{1}$}
\label{subsec:cmto}
The block $B_{ab}$ unfurls into a new path in the BC-tree and the 
corresponding cycle component loses the edge $\{a,b\}$. This causes
coherent paths that passed through the corresponding S-node to
fragment into subpath(s). This makes it necessary to clean up the
two-colouring relation to accommodate for such changes. Also, due
to the unfurling of the block $B_{ab}$ into several blocks, the newly formed blocks lose the virtual edges between consecutive cut vertices along the unfurled path. This triggers further two-colouring changes of the preceding types in distinct blocks and is in $\FO$.

\section{Dynamic Planar Embedding: The Details}\label{sec:embed}
We first show how to extend the embedding of the triconnected components maintained in Section~\ref{sec:triChange}  to an embedding of the biconnected components -- notice in this last, by embedding we mean a vertex rotation scheme only. Subsequently, we show how to construct the embedding of the entire graph using these. 

\subsection{Maintaining combinatorial embedding of biconnected components}
\label{subsec:blockEmbed}
We show how to maintain an embedding of each block in terms of the vertex rotation scheme at each of its vertices. This includes keeping track of the virtual edges. Here, we discuss the updates required by the types of the changes (insertion/deletions). Let the edge being inserted/deleted be $\{a,b\}$
\paragraph*{Handling an edge of type $\es[+]{3}{3}$}
This case is straightforward because insertion of the edge doesn't require changing the embedding (flips or slides) of the block concerned, only that the vertex $a$ is to be spliced in at appropriate place in the vertex rotation scheme of $b$ and vice versa for $b$ in the vertex rotation scheme of $b$.
\paragraph*{Handling an edge of type $\es[-]{3}{3}$}
In this case also there is no change to the embedding (flips and slides) to the block that the edge $\{a,b\}$ belong to. We just drop the vertices $a$ and $b$ from the vertex rotation scheme of vertices $b$ and $a$ respectively.    
\paragraph*{Handling an edge of type $\es[+]{2}{3}$}
Let the block that the edge is being inserted to be $B_i$, such that $a\in V(R_a)$ and $b\in V(R_b)$, where $R_a$ and $R_b$ are R-nodes. Let the path between $R_a$ and $R_b$ nodes be $R_a,P_1,R_1,\ldots, R_k,P_{k+1},R_b$. We need to update the embedding of the block $B_i$. The BC-tree of $B_i$ doesn't change. However, the SPQR-tree of $B_i$ does change. The triconnected components on the path from $R_a$ node to $R_b$ node all coalesce into one triconnected component $R_{ab}$. The embedding of the block $B_i$ might have changed as a result of the edge insertion. 
If we remove the nodes from the SPQR-tree, corresponding to triconnected component nodes $R_a, R_1,\ldots, R_k, R_b$, the tree gets fragmented into a forest that has trees rooted at P-nodes and possibly isolated P-nodes. Let the trees be $T_1, T_2,\ldots,T_s$. We will use the structure of this forest to update the embedding of the biconnected component as follows. For each tree $T_j$ ($j \in [s]$) filter out from the embedding of $B_i$, the embedding restricted to the nodes of the tree $T_j$. For each $T_j$, in the embedding of the new node $R_{ab}$, splice in the vertex rotation scheme of the biconnected component restricted to $T_j$ at the virtual edge corresponding to the root P-node (the P-node through which $T_j$ was attached to the $R_a\sqline R_b$ path in the SPQR-tree of $B_i$ before insertion of $\{a,b\}$) of the tree $T_j$. 

To exemplify, suppose the virtual edge corresponding to the root separating pair of $T_j$ is $\{u,v\}$. In updating the embedding of $B_i$, vertex rotation scheme of any vertex of $T_j$ that is not in $\{u,v\}$ remains the same as in the previous step. For the vertices $u$ and $v$ we do the following. Let the clockwise order of neighbours of $u$ be $v,w_1,\ldots,w_r,v$ when restricted to $T_j$, call it $O_j$. Let the clockwise order of neighbours of $u$, when restricted to vertices of $R_{ab}$, be $x_1,\ldots,x_c,v,\ldots,x_1$, call it $O_{ab}$. Then after splicing in the clockwise order of the neighbours of $u$ restricted to $T_j$ into this, the clockwise order of neighbours of $u$ around it in $R_{ab}\cup T_j$ is $x_1,\ldots,x_c,w_1,\ldots, w_r,v,\ldots,x_1$. Basically, the clockwise order $O_j$ is spliced in the clockwise order $O_{ab}$ between the predecessor of $v$ and $v$. Having spliced in the clockwise order $O_j$ at $u$ this way we should splice in the clockwise order of neighbours of $v$ restricted to $T_j$ into the clockwise order of neighbours of $v$ restricted to $R_{ab}$ in the space between $u$ and its successor. The difference between the splicing of the orders at $u$ and $v$ is arbitrary. We can choose to splice in ``before the virtual edge'' at the smaller numbered vertex from $u$ and $v$ and ``after the virtual edge'' at the other vertex.
\paragraph*{Handling an edge of type $\es[-]{3}{2}$}
Suppose that $a$ and $b$ belong to the block $B_i$ and the rigid component $R_{ab}$. After the deletion of the edge $\{a,b\}$ new virtual edges are inserted in the block corresponding to the new separating pairs. In the vertex rotation scheme of the block $B_i$, we need to update these changes. But for the vertices that are not part of any new separating pair, their vertex rotation scheme does not change. We continue with their vertex rotation scheme in $B_i$ from the previous step. For the vertices $a$ and $b$, we drop the edge from the vertex rotation scheme around $a$ and $b$. Only for vertices that are part of newly created separating pairs we need to update the vertex rotation scheme. We update the vertex rotation scheme of such vertices in the following way. Suppose that $\{u, v\}$ is a new separating pair. We have the embedding of $R_{ab}$ from the previous step. With respect to this embedding the vertices $u$ and $v$ lie on the boundary of a unique face. Suppose that the neighbours of $u$ that lie on the face boundary are $u_1,u_2$ and they appear in that clockwise order around the vertex $u$ in the embedding of $R_{ab}$. Then we splice in $v$ in the vertex rotation scheme of $u$ in $B_i$ immediately before the vertex $u_2$. For $v$ the vertex rotation scheme in $B_i$ is updated similarly.
\paragraph*{Handling an edge of type $\es[+]{1}{2}$}
Under the insertion of an edge $\{a,b\}$ of type $\es[+]{1}{2}$ such that $a$ and $b$ belong to the connected component $CC_i$ and the blocks $B_a$ and $B_b$ respectively. After the insertion of $\{a,b\}$ the BC-tree of $CC_i$ changes such that multiple blocks coalesce together into one block $B_{ab}$. The vertex rotation scheme of vertices in the coalesced block needs to be updated. For the vertices which are not part of new virtual edges the vertex rotation scheme doesn't change and can be continued from the previous step. For the vertices that are part of new virtual edges, their vertex rotation scheme will change as follows. Consider the path between $B_a$ and $B_b$ nodes in the BC-tree, $B_a,c_1,B_1,\ldots,B_k,c_{k+1},B_b$. After the insertion of the edge $\{a,b\}$ virtual edges are inserted in $B_a, B_b$, and $B_i, i\in[k]$ between the consecutive cut vertices. Each such virtual edge insertion is an insertion of type $\es[+]{2}{3}$ or $\es[+]{3}{3}$ in each block. We have discussed how to update the vertex rotation scheme of blocks under such changes. At each cut vertex $c_i$ we splice together the vertex rotation scheme of $B_i$ and $B_{i+1}$ such that virtual edge $\{c_{i-1},c_i\}$ comes immediately before the virtual edge $\{c_i,c_{i+1}\}$ in the combined vertex rotation scheme at $c_i$. At the vertex $a$ the edge $\{a,b\}$ is spliced in immediately before the virtual edge $\{a, c_1\}$ in the vertex rotation scheme at $a$ in $B_a$, and at the vertex $b$ the edge $\{a,b\}$ is spliced in immediately after the virtual edge $\{c_{k+1},b\}$ in the vertex rotation scheme of $b$ in $B_b$.

\paragraph*{Handling an edge of type $\es[-]{2}{1}$}
Under the deletion of an edge of type $\es[-]{2}{1}$ the updates required to the embeddings of the blocks are similarly handled. Let the block that the edge $\{a,b\}$ lies in be $B_{ab}$. After the deletion of the edge $\{a,b\}$ the block decomposes into multiple new blocks.  We update embedding of the new blocks by first projecting down the previous step embedding of $B_{ab}$ to the new block in consideration and then removing the virtual edges that are no longer valid from each block. These virtual edge deletions are of type $\es[-]{3}{2}$ or $\es[-]{3}{3}$ which we have seen how to handle.   

\subsection{Proof of Lemma~\ref{lem:triChange}}
\triChange*
\begin{proof}
We start out by determining the type of the inserted edge using 
Lemmas~\ref{lem:BC},~\ref{lem:SPQR} because the first lemma allows us to determine whether two vertices belong to the same block; while the second one allows us to do the same for triconnected components. Next, we use the tests from Lemmas~\ref{lem:ThToThPlanCrit},~\ref{lem:ptest},~\ref{lem:planBTest} to determine whether the inserted edge causes the graph to remain planar depending on whether the type of the edge is $\es[+]{3}{3}, \es[+]{2}{3}$ or $\es[+]{1}{2}$. Only if the graph remains planar we follow the procedure detailed in Section~\ref{sec:triChange} (complemented by the maintenance of the two-colouring in Section~\ref{sec:coloring}) to update the auxiliary relations, where $\FO$ computability is also argued.

On the other hand, deletion does not cause a graph to become non-planar and we can follow the procedures discussed in Sections~\ref{sec:triChange},~\ref{sec:coloring} and~\ref{subsec:blockEmbed} related to deletions to update the auxiliary relations, where $\FO$ computabilty is also argued.
\end{proof}

\subsection{Embedding Connected Components}
To give an embedding of the connected components of the graph we combine together the embeddings of its biconnected components as follows. Consider a vertex $v$ of a connected component $C$. If $v$ is not a cut vertex, then it must belong to a unique block of $C$, say $B_v$. In that case, all the neighbours of $v$ also lie in the block $B_v$. The vertex rotation scheme for $v$ in the embedding of $C$ is the same as the vertex rotation scheme of $v$ in the embedding of $B_v$.

If $v$ is a cut vertex, then it lies in multiple blocks of $C$, say $B_{v_1},\ldots, B_{v_l}$. We have the vertex rotation scheme of $v$ pertaining to each block $B_{v_i}$ ($i\in[l]$). To get the vertex rotation scheme for $v$ pertaining to all its neighbours in $C$, we splice together its vertex rotation schemes in $B_{v_1},\ldots, B_{v_l}$ in that clockwise order. This order is arbitrary. We say that the lexicographically smaller named block's vertex rotation scheme comes before any lexicographically larger named block's vertex rotation scheme. This can be done in $\FO$ by accessing the BC-tree relation and the biconnected component's embedding relation.

\section{Conclusion}\label{sec:concl}
We show that planarity testing and embedding is in $\DynFO$ where we are able to ensure that an edge is inserted if and only if it does not cause the graph to become non-planar. This is potentially an important step in the direction of solving problems like distance and matching where only an upper bound of $\DynFO[\oplus]$ (where $\FO[\oplus]$ is $\FO$ with parity quantifiers) was known in planar graphs. This is because we might be able to improve the known bound making use of planar duality which presupposes a planar embedding. It might also make problems like max flow, graph isomorphism, and counting perfect matchings which are all statically parallelisable when restricted to planar graphs, accessible to a $\DynFO$ bound.
\bibliography{biblio}
\end{document}